  \newcounter{todocounter}
\newcommand{\auto}{\mathcal{A}}
\newcommand{\basic}{\mathcal{B}}
\newcommand{\C}{\mathcal{C}}
\newcommand{\trans}{\mathcal{T}}
\newcommand{\rtrans}{\reverse{\trans}}
\newcommand{\twtrans}{\mathcal{T}}
\newcommand\dom[1]{\mathsf{dom}(#1)}
\newcommand\sem[1]{[\![ #1 ]\!]}
\newcommand{\leftend}{{\mathop\vdash}}
\newcommand{\rightend}{{\mathop\dashv}}
\newcommand{\lrend}{{\#}}
\newcommand{\splend}{{\vdash\!\!\dashv}}
\newcommand{\Sigmae}{\Sigma_{\#}}
\newcommand{\pos}[1]{\mathsf{pos}(#1)}
\newcommand{\epos}[1]{\mathsf{epos}(#1)}
\newcommand{\head}{\mathsf{h}}
\newcommand{\pebble}{\mathsf{p}}
\newcommand{\pebbles}{\mathsf{peb}}
\newcommand{\drop}[1]{\mathsf{drop}_{#1}}
\newcommand{\lift}[1]{\mathsf{lift}_{#1}}
\newcommand{\push}{\mathsf{drop}}
\newcommand{\pop}{\mathsf{lift}}
\newcommand{\op}{\mathsf{op}}
\newcommand{\nop}{\mathsf{nop}}
\newcommand{\rev}{\mathsf{rev}}
\newcommand{\false}{\mathsf{false}}
\newcommand{\true}{\mathsf{true}}
\newcommand{\test}[1]{\overline{#1}}
\newcommand{\reverse}[1]{#1^{r}}
\newcommand{\xla}{\xleftarrow}
\newcommand{\xra}{\xrightarrow}
\newcommand{\enc}[1]{\mathsf{encode}(#1)}
\newcommand{\qright}[1]{#1_r}
\newcommand{\qstay}[1]{#1_0}
\newcommand{\qlefto}[1]{#1_1}
\newcommand{\qleftt}[1]{#1_2}
\newcommand{\qminus}[1]{#1_-}
\newcommand{\qplus}[1]{#1_+}
\title{Reversible Pebble Transducers}
\author{Luc Dartois}{Université Paris Est Creteil, LACL, F-94010 Créteil, France}{luc.dartois@u-pec.fr}{https://orcid.org/0000-0001-9974-1922}{}
\author{Paul Gastin}{Université Paris-Saclay, ENS Paris-Saclay, CNRS, LMF, 91190, 
  Gif-sur-Yvette, France \and CNRS, ReLaX, IRL 2000, Siruseri, India}
  {paul.gastin@lmf.cnrs.fr}{https://orcid.org/0000-0002-1313-7722}{}
\author{Loïc {Germerie Guizouarn}}{Univ Rennes, CNRS, Inria, IRISA - UMR 6074, F-35000 Rennes, France}{loic.germerie-guizouarn@univ-rennes.fr}{https://orcid.org/0000-0002-3843-5427}{}
\author{Shankaranarayanan Krishna}{Indian Institute of Technology Bombay, Mumbai, India}{krishnas@cse.iitb.ac.in}{https://orcid.org/0000-0003-0925-398X}{}
\authorrunning{L. Dartois, P. Gastin, L. Germerie Guizouarn and S. Krishna}
\keywords{Transducers, Polyregular functions, Reversibility, Composition, Uniformization}
\begin{document}
\maketitle
% 
%%%%%%%%%%%%%%%%%%%%%%%%%%%%%%%%%%%%%%%%%%%%%%%%%%%%%%%%%%%%%%%%%%%%
%%% FIGURES
%%%%%%%%%%%%%%%%%%%%%%%%%%%%%%%%%%%%%%%%%%%%%%%%%%%%%%%%%%%%%%%%%%%%
\begin{gpicture}[name=zeroP,ignore]
	\gasset{Nw=7,Nh=7,loopdiam=4,loopheight=5,loopangle=90,curvedepth=0}
	\unitlength=1.1mm
	\gasset{linecolor=black}
	
	\node(0)(0,0){}
	\node(1)(20,0){$0$}

	\gasset{ExtNL=y,NLangle=90,NLdist=1}
	\nodelabel(0){\( q \)}
	\nodelabel(1){\( p \)}
	
	\drawedge(0,1){\( a \)}
\end{gpicture}
\begin{gpicture}[name=zeroT,ignore]
	\gasset{Nw=7,Nh=7,loopdiam=4,loopheight=5,loopangle=90,curvedepth=0}
	\unitlength=1.1mm
	\gasset{linecolor=black}
	
	\node(0)(0,0){$+$}
	\node(1)(20,0){$-$}
	\node(2)(40,0){$+$}
		
	\gasset{ExtNL=y,NLangle=90,NLdist=1}
	\nodelabel(0){\( \qright{q} \)}
	\nodelabel(1){\( \qstay{p} \)}
	\nodelabel(2){\( \qright{p} \)}
	
	\drawedge(0,1){\( a \)}
	\drawedge(1,2){\( \Sigma_\leftend \)}
\end{gpicture}
\begin{gpicture}[name=minusP,ignore]
	\gasset{Nw=7,Nh=7,loopdiam=4,loopheight=5,loopangle=90,curvedepth=0}
	\unitlength=1.1mm
	\gasset{linecolor=black}
	
	\node(0)(0,0){}
	\node(1)(20,0){$-1$}
	
	\gasset{linecolor=red}
	\node(5)(0,-17){}
	\gasset{linecolor=black}
	
	\gasset{ExtNL=y,NLangle=90,NLdist=1}
	\nodelabel(0){\( q \)}
	\nodelabel(1){\( p \)}
	\nodelabel(5){\( q' \)}
	
	\drawedge(0,1){\( a \)}
	
	\gasset{linecolor=red}
	\drawedge(5,1){\( a' \)}
\end{gpicture}
\begin{gpicture}[name=minusT,ignore]
	\gasset{Nw=7,Nh=7,loopdiam=4,loopheight=5,loopangle=90,curvedepth=0}
	\unitlength=1.1mm
	\gasset{linecolor=black}
	
	\node(0)(0,0){$+$}
	\node(1)(20,0){$-$}
	\node(2)(40,0){$-$}
	\node(3)(60, 0){$+$}
	\node(4)(30,-17){$+$}
	
	\gasset{linecolor=red}
	\node(5)(0,-17){$+$}
	\gasset{linecolor=black}
	
	\gasset{ExtNL=y,NLangle=90,NLdist=1}
	\nodelabel(0){\( \qright{q} \)}
	\nodelabel(1){\( \qlefto{p} \)}
	\nodelabel(2){\( \qleftt{p} \)}
	\nodelabel(3){\( \qright{p} \)}
	\nodelabel[NLangle=0](4){\( \qplus{p} \)}
	\nodelabel(5){\( \qright{q'} \)}
	
	\drawedge(0,1){\( a \)}
	\drawedge(1,2){\( \Sigma \)}	
	\drawedge(2,3){\( \Sigma_\leftend \)}
	
	\drawloop[loopangle=180](4){\( \Sigma \)}
	
	\drawedge[ELside=r](1,4){\( \leftend \)}
	\drawedge[ELside=r](4,2){\( \rightend \)}
	
	\gasset{linecolor=red}
	\drawedge(5,1){\( a' \)}
\end{gpicture}
\begin{gpicture}[name=squaring-RPT1,ignore]
  \gasset{Nw=7,Nh=7,loopdiam=4,loopheight=5,loopangle=90,curvedepth=0}
  \unitlength=1.1mm
  \gasset{linecolor=black}
  \node[Nmarks=i,ilength=5](0)(0,0){$0$}
  \node(1)(25,0){$+1$}
  \node[Nmarks=f,flength=5](2)(50,0){$0$}
  \drawedge[ELside=l](0,1){$\lrend$}
  \drawedge[ELside=l](1,2){$\lrend$}
  \gasset{linecolor=blue}
  \node(3)(0,-25){$\textcolor{blue}{+1}$}
  \node(4)(25,-25){$\textcolor{blue}{+1}$}
  \node(5)(50,-25){$\textcolor{blue}{+1}$}
  \gasset{ExtNL=y,NLangle=90,NLdist=1}
  \nodelabel(0){$q_{0}$}
  \nodelabel(1){$q_{1}$}
  \nodelabel(2){$q_{2}$}
  \nodelabel[NLangle=-90](3){$q_{3}$}
  \nodelabel[NLangle=-90](4){$q_{4}$}
  \nodelabel[NLangle=-90](5){$q_{5}$}
  \drawedge[linecolor=black,curvedepth=-4,ELside=r,ELdist=0,ELpos=75](1,3){$a\in\Sigma, \textcolor{red}{\drop{1}}$}
  \drawloop[loopangle=180](3){$a\in\Sigma, \textcolor{red}{\neg\pebble_{1}}$}
  \drawedge(3,4){$\lrend$}
  \drawloop[loopangle=90,ELdist=0](4){$\begin{array}{l@{\;\mid\;}l}
    a\in\Sigma, \textcolor{red}{\neg\pebble_{1}} & a \\
    a\in\Sigma, \textcolor{red}{\pebble_{1}} & \underline{a}
  \end{array}$}
  \drawedge(4,5){$\lrend$}
  \drawloop[loopangle=0](5){$a\in\Sigma, \textcolor{red}{\neg\pebble_{1}}$}
  \drawedge[linecolor=black,curvedepth=-4,ELside=r,ELdist=0,ELpos=25](5,1){$a\in\Sigma, \textcolor{red}{\lift{1}}$}
\end{gpicture}
\begin{gpicture}[name=all-prefixes-RPT1,ignore]
	\gasset{Nw=7,Nh=7,loopdiam=4,loopheight=5,loopangle=90,curvedepth=0}
	\unitlength=1.1mm
	\node[Nmarks=i,ilength=5](0)(0,0){$0$}
	\node(1)(25,0){$+1$}
	\node[Nmarks=f,flength=5](2)(50,0){$0$}
	\drawedge[ELside=l](0,1){$\lrend$}
	\drawedge[ELside=l](1,2){$\lrend$}
	\gasset{linecolor=blue}
	\node(3)(10,-20){$\textcolor{blue}{-1}$}
	\node(5)(40,-20){$\textcolor{blue}{+1}$}
	\gasset{ExtNL=y,NLangle=90,NLdist=1}
	\nodelabel(0){$q_{0}$}
	\nodelabel(1){$q_{1}$}
	\nodelabel(2){$q_{2}$}
	\gasset{NLangle=-90}
	\nodelabel(3){$q_{3}$}
	\nodelabel(5){$q_{4}$}
  
	\drawloop[loopangle=180](3){$a\in\Sigma,\textcolor{red}{\neg\pebble_{1}} \mid a$}
	\drawedge(3,5){$\lrend \mid ! $}
	\drawloop[loopangle=0](5){$a\in\Sigma,\textcolor{red}{\neg\pebble_{1}}$}

	\gasset{linecolor=black,curvedepth=0,ELside=r,ELdist=1}
	\drawedge[ELpos=75](1,3){$a\in\Sigma,\textcolor{red}{\drop{1}} \mid a$}
	\drawedge[ELpos=30](5,1){$a\in\Sigma,\textcolor{red}{\lift{1}}$}
\end{gpicture}
\begin{gpicture}[name=composition-drop,ignore]
	\gasset{Nw=7,Nh=7,loopdiam=4,loopheight=7,loopangle=90,curvedepth=0}
	\unitlength=1mm
	\node(0)(0,0){$0$}
	\node(1)(24,0){$+1$}
	\node(2)(51,0){$+1$}
	\node[Nframe=n](dots)(62.5,0){$\cdots$}
	\node(4)(74,0){$+1$}
	\node(5)(103,0){$+1$}
	\node(6)(127,0){$0$}
  \gasset{ExtNL=y,NLangle=-90,NLdist=1}
  \nodelabel(0){$q''$}
  \nodelabel(1){$(q'',z,1)$}
  \nodelabel(2){$(q'',z,2)$}
  \nodelabel(4){$(q'',z,z-1)$}
  \nodelabel(5){$(q'',z,z)$}
  \nodelabel(6){$s''$}
	\drawedge[ELdist=0.5](0,1){$\begin{array}{l} 
    a,\xi \\ 
    \drop{d_{k}+z}
  \end{array}$}
	\drawloop(1){$\begin{array}{l} 
    \textcolor{blue}{\neg\pebble_{d_{k}+z}} \\ 
    {}\wedge \neg\pebble_{d_{k}+1} 
  \end{array}$}
	\drawedge[ELdist=0.5](1,2){$\begin{array}{l} 
    \pebble_{d_{k}+1} \\ 
    \drop{d_{k}+z+1}
  \end{array}$}
	\drawloop(2){$\begin{array}{l} 
    \textcolor{blue}{\neg\pebble_{d_{k}+z+1}} \\ 
    {}\wedge \neg\pebble_{d_{k}+2} 
  \end{array}$}
	\drawloop(4){$\begin{array}{c} 
    \textcolor{blue}{\neg\pebble_{d_{k}+2z-2}} \\ 
    {}\wedge \neg\pebble_{d_{k}+z-1} 
  \end{array}$}
	\drawedge[ELdist=0.5](4,5){$\begin{array}{l} 
    \pebble_{d_{k}+z-1} \\ 
    \drop{d_{k}+2z-1}
  \end{array}$}
	\drawloop(5){$\begin{array}{c} 
    \textcolor{blue}{\neg\pebble_{d_{k}+2z-1}} \\ 
    {}\wedge \neg\pebble_{d_{k}+z} 
  \end{array}$}
	\drawedge(5,6){$a,\xi',\nop$}
\end{gpicture}
\begin{gpicture}[name=composition-lift,ignore]
	\gasset{Nw=7,Nh=7,loopdiam=4,loopheight=7,loopangle=90,curvedepth=0}
	\unitlength=1mm
	\node(0)(2,0){$0$}
	\node(1)(24,0){$+1$}
	\node(2)(50,0){$+1$}
	\node[Nframe=n](3)(60,0){$\cdots$}
	\node(4)(70,0){$+1$}
	\node(5)(93,0){$+1$}
	\node(6)(110,0){$0$}
	\node(7)(130,0){$0$}
  \gasset{ExtNL=y,NLangle=-90,NLdist=1}
  \nodelabel(0){$q''$}
  \nodelabel(1){$(q'',-1)$}
  \nodelabel(2){$(q'',-2)$}
  \nodelabel(4){$(q'',-y_{k}+1)$}
  \nodelabel(5){$(q'',-y_{k})$}
  \nodelabel(6){$(q'',0)$}
  \nodelabel(7){$s''$}
	\drawedge(0,1){$a,\xi,\nop$}
	\drawloop(1){$\begin{array}{l} 
    \textcolor{blue}{\neg\pebble_{d_{k}} \wedge{}} \\ 
    \neg\pebble_{d_{k}+y_{k}-1} 
  \end{array}$}
	\drawedge[ELdist=0.5](1,2){$\begin{array}{l} 
    \textcolor{blue}{\pebble_{d_{k}-1}} \\ 
    \lift{d_{k}+y_{k}-1} 
  \end{array}$}
	\drawloop(2){$\begin{array}{l} 
    \textcolor{blue}{\neg\pebble_{d_{k}-1} \wedge{}} \\ 
    \neg\pebble_{d_{k}+y_{k}-2} 
  \end{array}$}
	\drawloop(4){$\begin{array}{c} 
    \textcolor{blue}{\neg\pebble_{d_{k}-y_{k}+2}} \\ 
    {}\wedge \neg\pebble_{d_{k}+1} 
  \end{array}$}
	\drawedge[ELdist=0.5](4,5){$\begin{array}{l} 
    \textcolor{blue}{\pebble_{1+d_{k-1}}} \\ 
    \lift{d_{k}+1}
  \end{array}$}
	\drawloop(5){$\begin{array}{c} 
    \textcolor{blue}{\neg\pebble_{d_{k}-y_{k}+1}} \\ 
    {}\wedge\neg\pebble_{d_{k}} 
  \end{array}$}
	\drawedge(5,6){$\lift{d_{k}}$}
	\drawedge(6,7){\small{$a,\xi',\nop$}}
\end{gpicture}
\begin{gpicture}[name=iterRev,ignore]
  \gasset{Nw=7,Nh=7,loopdiam=4,loopheight=5,loopangle=90,curvedepth=0}
  \unitlength=1.1mm
  \gasset{linecolor=black}

  \node[Nmarks=i,ilength=5](0)(10,0){$0$}
  \node(1)(25,0){$+1$}
  \node(2)(50,0){$+1$}
  \node[Nmarks=f,flength=5](3)(65,0){$0$}
  \node(4)(37,-15){$-1$}
  
  \gasset{ExtNL=y,NLangle=-90,NLdist=1}
  \nodelabel(0){$q_0'$}
  \nodelabel[NLangle=-120](1){$q_1'$}
  \nodelabel[NLangle=-40](4){$q_2'$}
  \nodelabel[NLangle=-60](2){$q_3'$}
  \nodelabel[NLangle=60](3){$q_f'$}
 
 \drawloop[loopangle=90](1){$a\in\Sigma\mid\varepsilon$}
 \drawloop[loopangle=90](2){$a\in\Sigma\mid\varepsilon$}
 \drawloop[loopangle=90](4){$a\in\Sigma\mid a$}
 \drawedge(0,1){$\#\mid \varepsilon$}
 \drawedge(2,3){$\#\mid \varepsilon$}
 \drawedge[ELside=r,curvedepth=-6](1,4){$\#,!\mid \varepsilon$}
 \drawedge[ELside=r,curvedepth=-6](4,2){$\#,!\mid \varepsilon$}
 \drawedge[ELside=r](2,1){${!}\mid{!}$}
 \drawedge(2,3){$\#\mid\varepsilon$}
\end{gpicture}
\begin{gpicture}[name=allconfigs-1,ignore]
  \gasset{Nw=7,Nh=7,loopdiam=4,loopheight=5,loopangle=90,curvedepth=0}
  \unitlength=1.1mm
  \node[Nmarks=i,ilength=5](0)(0,0){$0$}
  \node(1)(25,0){$+1$}
  \node[Nmarks=f,flength=5](2)(50,0){$0$}
  \node(3)(0,-25){$+1$}
  \node(4)(25,-25){$+1$}
  \node(5)(50,-25){$+1$}
  \gasset{ExtNL=y,NLangle=90,NLdist=1}
  \nodelabel(0){$q_{0}$}
  \nodelabel(1){$q_{1}$}
  \nodelabel(2){$q_{2}$}
  \nodelabel[NLangle=180](3){$q_{3}$}
  \nodelabel[NLangle=-145](4){$q_{4}$}
  \nodelabel[NLangle=0](5){$q_{5}$}
  \drawedge[ELside=l](1,2){$\lrend$}
  \drawedge[curvedepth=0,ELside=r,ELdist=-1](0,3){$\begin{array}{c} 
    \lrend, \\ \textcolor{red}{\drop{1}} \end{array}$}
  \drawedge[curvedepth=-2,ELside=l,ELdist=-1.5,ELpos=40](1,3){$\begin{array}{r} 
    a\in\Sigma, \\ \textcolor{red}{\drop{1}} \end{array}$}
  \drawloop[loopangle=-90](3){$a\in\Sigma, \textcolor{red}{\neg\pebble_{1}}$}
  \drawedge(3,4){$\lrend \mid (\#,b)$}
  \drawloop[loopangle=270,ELdist=0.5](4){$a\in\Sigma \mid (a,b)$}
  \drawedge(4,5){$\lrend$}
  \drawloop[loopangle=-90](5){$a\in\Sigma, \textcolor{red}{\neg\pebble_{1}}$}
  \drawedge[curvedepth=-2,ELside=r,ELdist=-4,ELpos=30](5,1){$\begin{array}{c} 
    a\in\Sigma\cup\{\lrend\}, \\ \textcolor{red}{\lift{1}} \end{array}$}
\end{gpicture}
\begin{gpicture}[name=allconfigs-k,ignore]
  \gasset{Nw=7,Nh=7,loopdiam=4,loopheight=5,loopangle=90,curvedepth=0}
  \unitlength=1.1mm
  \node[Nmarks=i,ilength=5](0)(0,0){$0$}
  \node(1)(25,0){$+1$}
  \node[Nmarks=f,flength=5](2)(50,0){$0$}
  \node(3)(0,-25){$+1$}
  \node[Nw=15,Nmr=10](4)(25,-25){$\C^{+1}_{k-1}$}
  \node(5)(50,-25){$+1$}
  \gasset{ExtNL=y,NLangle=90,NLdist=1}
  \nodelabel(0){$q_{0}$}
  \nodelabel(1){$q_{1}$}
  \nodelabel(2){$q_{2}$}
  \nodelabel[NLangle=180](3){$q_{3}$}
  \nodelabel[NLangle=0](5){$q_{5}$}
  \drawedge[ELside=l](1,2){$\lrend$}
  \drawedge[curvedepth=0,ELside=r,ELdist=-1](0,3){$\begin{array}{c} 
    \lrend, \\ \textcolor{red}{\drop{1}} \end{array}$}
  \drawedge[curvedepth=-2,ELside=l,ELdist=-1.5,ELpos=40](1,3){$\begin{array}{r} 
    a\in\Sigma, \\ \textcolor{red}{\drop{1}} \end{array}$}
  \drawloop[loopangle=-90](3){$a\in\Sigma, \textcolor{red}{\neg\pebble_{1}}$}
  \drawedge[ELpos=40](3,4){$\lrend$}
  \drawedge[ELpos=60](4,5){$\lrend$}
  \drawloop[loopangle=-90](5){$a\in\Sigma, \textcolor{red}{\neg\pebble_{1}}$}
  \drawedge[curvedepth=-2,ELside=r,ELdist=-4,ELpos=30](5,1){$\begin{array}{c} 
    a\in\Sigma\cup\{\#\}, \\ \textcolor{red}{\lift{1}} \end{array}$}
\end{gpicture}
\begin{abstract}
  Deterministic two-way transducers with pebbles (aka pebble transducers) capture the
  class of polyregular functions, which extend the string-to-string regular functions 
  allowing polynomial growth instead of linear growth.
  One of the most fundamental operations on functions is composition, and  
 (poly)regular functions can be realized as a composition of several simpler functions. 
  In general, composition of deterministic two-way transducers incur a doubly exponential
  blow-up in the size of the inputs.  A major improvement in this direction comes from the
  fundamental result of Dartois et al.~\cite{DFJL17} showing a polynomial construction for
  the composition of \emph{reversible} two-way transducers.  
  
  A precise complexity analysis for existing composition techniques of pebble transducers
  is missing.  But they rely on the classic composition of two-way transducers and 
  inherit the double exponential complexity.
  To overcome this problem, we introduce \emph{reversible} pebble transducers.  Our
  main results are efficient uniformization techniques for non-deterministic pebble
  transducers to reversible ones and efficient composition for reversible pebble
  transducers.
\end{abstract}
% 
%%%%%%%%%%%%%%%%%%%%%%
\section{Introduction}
The theory of string transformations has garnered a lot of attention in recent years. The simplest kind of such transformations are \emph{sequential functions} which  are captured 
by deterministic finite automata whose transitions are labelled by output words. An example of a sequential function is $f(w)=w'$ where $w,w'$ are words over an alphabet $\{a,b\}$, and 
$w'$ is obtained from $w$ by replacing in $w$, each $a$ with $bb$, and  each $b$ with $a$. 
\emph{Rational functions} strictly extend sequential functions by allowing the underlying
automaton to be \emph{unambiguous} instead of being deterministic.  For instance,
$f(ua)=au$ where $a\in \Sigma, u \in \Sigma^+$ which moves the last symbol of
the input to the first position is a rational function, and not
realizable by a sequential one. 
 
While both sequential and rational functions are captured by `one-way' transducers,  
\emph{regular functions} are realized by two-way transducers where the input is scanned in both directions. Regular functions strictly extend rational functions; for instance 
the regular function $f(w)=w^R$ where $w^R$ is the reverse of $w$ is not realizable as a rational function. Regular functions are also equivalent to Courcelle's word-to-word MSO transductions \cite{Courcelle94, EH01},  
streaming string transducers \cite{AC10} as well as combinator expressions \cite{AlurFreilichRaghothaman14,  BR-DLT18, BDK-lics18, DGK-lics18}.  

Sequential, rational as well as regular functions are all transformations of \emph{linear growth}, where the sizes of the outputs are linear in the sizes of the inputs. 
More recently,  \cite{bojanpebble} revisits  functions of \emph{polynomial growth}, that is, those whose output sizes are polynomial in the sizes of the input. 
These functions, aptly called \emph{polyregular functions}, date back to \cite{DBLP:journals/acta/Engelfriet15, DBLP:conf/mfcs/EngelfrietM02}.
A logical characterization for polyregular functions was given in \cite{DBLP:conf/icalp/BojanczykKL19}, namely, word-to-word MSO 
 interpretations. 
 MSO interpretations are equivalent to  polyregular functions  which are word-to-word functions recognised by deterministic, two-way pebble transducers \cite{DBLP:journals/corr/abs-1810-08760}. 

Prior work \cite{DBLP:journals/acta/Engelfriet15, DBLP:journals/corr/abs-1810-08760}  on pebble transducers show that they are closed under composition, and are equivalent
to polyregular functions.  \cite{DBLP:journals/acta/Engelfriet15} 
considers `basic' pebble transducers where one can check the presence or absence of a certain pebble at the head position, along with the actions of dropping and lifting pebbles following a stack discipline. The main result of \cite{DBLP:journals/acta/Engelfriet15}  is the closure under composition 
of pebble transducers, as well as a \emph{uniformization} of (non-determinsitic) pebble 
to deterministic ones with the same number of pebbles. 
The pebble transducers of \cite{DBLP:journals/corr/abs-1810-08760} allow \emph{comparison
tests}, namely, one can compare the positions on which pebbles are dropped, or the 
position of the head and of a pebble.
\cite{DBLP:journals/corr/abs-1810-08760} gives high level ideas
for the closure under composition for pebble transducers with comparison tests.

Our focus in this paper is on the complexity aspects of the composition and uniformization
of pebble transducers.
A practical setting where composition is useful is while synthesizing pebble transducers 
from polyregular functions \cite{DBLP:conf/icalp/BojanczykKL19}.  
  Polyregular functions \cite{bojanpebble} 
are defined as the smallest class of functions closed under composition that contains 
sequential functions, the squaring function as well as the iterated reverse. 
In general, the composition of (pebble-less) two-way transducers has at least a doubly
exponential blowup in the size of the input transducers \cite{CJ77}.  An important class
of two-way transducers for which composition is polynomial are \emph{reversible}
transducers \cite{DFJL17}.
Reversible transducers are those which are both deterministic and reverse-deterministic;
moreover they are expressively equivalent to two-way transducers \cite{DFJL17}.  This
makes reversibility a very attractive property for a transducer.  However, the notion of
reversible pebble transducers has not been considered yet.  Therefore it is unknown if
they have the same expressiveness as deterministic pebble transducers (polyregular
functions), and whether they are also amenable to an efficient composition like their
pebble-less counterparts.  Our paper fills this gap.
\paragraph*{Our Contributions.}  
\begin{enumerate}
  \item {\bf{Reversible Pebble Transducers}}.  We define reversible pebble transducers 
  with equality tests allowing to check whether two pebbles are dropped on the same 
  position, in addition to the basic tests allowing to check if a pebble is dropped on the 
  current head position.
  
  \item {\bf{Equivalent notions for pebble transducers}}. 
  We show that pebble transducers with basic tests only, as well as those extended with
  equality tests are equivalent.
  More precisely, given a $k$-pebble transducer with equality tests having $n$ states, we
  show that we can construct an equivalent $k$-pebble transducer with basic tests having
  $\mathcal{O}(nk^2)$ states.  The construction preserves both determinism and reverse-determinism.
  
  \item {\bf{Composition of Reversible Pebble Transducers}}.  We show that reversible
  pebble transducers are closed under composition.  Given points 1,2 above, we allow
  equality tests while studying composition.  To be precise, given reversible pebble
  transducers $\trans$ and $\trans'$ with states $Q$, $Q'$, having $n,m\geq 0$ pebbles, we
  construct a reversible pebble transducer $\trans''$ with $nm+n+m$
  pebbles and (i) $2\cdot|Q|\cdot|Q'|$ states when $m=0$, and (ii) 
  $\mathcal{O}(|Q|^{m+2}\cdot|Q'|\cdot(n+1)^{m+3})$ states when $m>0$. 
  
  To compare with the existing composition approaches  \cite{DBLP:journals/acta/Engelfriet15} and 
  \cite{DBLP:journals/corr/abs-1810-08760},  the number of pebbles  we require for composition matches with  \cite{DBLP:journals/acta/Engelfriet15} who showed that this is the optimal number of pebbles needed.  Regarding the number of states,  \cite{DBLP:journals/acta/Engelfriet15} relies on the composition of deterministic  two-way transducers, which, as mentioned before, incurs at least a  doubly exponential blowup. As can be seen from (ii) above, reversibility gives a much better complexity. 
  The high level description for composition in \cite{DBLP:journals/corr/abs-1810-08760}
  does not give details on the number of states, and incurs $n$ extra pebbles.  As such,
  we do not compare with them.
  
  \item\label{item:uniformize}
  {\bf{Uniformization of non-deterministic Reversible Pebble Transducers}}.  
  Given a non-deterministic pebble transducer $\trans$ with $k$ pebbles and $n$ states, we
  construct a reversible pebble transducer $\trans'$ with $2^{\mathcal{O}((kn)^2)}$ states which
  uniformizes the relation $\sem{\trans}$ computed by $\trans$, i.e., computes a function 
  $\sem{\trans'}\subseteq\sem{\trans}$ with same domain.
  
  The uniformization \cite{DBLP:journals/acta/Engelfriet15} of non-deterministic to
  deterministic pebble transducers is done using results from
  \cite{DBLP:journals/tocl/EngelfrietH01}, where uniformization is done via a sequence of
  tranformations, namely: (\textit{i}) transductions computed by a 0-pebble transducer are
  definable using non-determinsitic MSO transductions, which can be uniformized using
  deterministic MSO transductions, (\textit{ii}) deterministic MSO transductions are
  computed by deterministic two-way transducers.  The uniformization for a given
  $k$-pebble transducer \cite{DBLP:journals/acta/Engelfriet15} is then computed as a
  composition of a $k$-counting transduction and the deterministic two-way transducer
  obtained in (\textit{ii}).  Due to (\textit{ii}), this may incur a non-elementary
  complexity, while our uniformization technique only has an exponential complexity.
   
  \item {\bf{Polyregular functions as Reversible Pebble Transducers}}.
  Polyregular functions can be realized as (a) deterministic two-way pebble transducers
  \cite{DBLP:journals/corr/abs-1810-08760} or, (b) as a composition of sequential
  functions, squaring and iterated reverse functions \cite{bojanpebble}.
  This results in two possibilities to compute a reversible pebble transducer for a 
  polyregular function.
  
  \textbf{a.} If the polyregular function is presented as a deterministic
  two-way pebble transducers then we may use the uniformization construction of
  \cref{item:uniformize} to obtain an equivalent reversible pebble transducer.
  
  \textbf{b.} 
  If a polyregular function $f$ is given as a modular expression involving
  sequential, squaring and iterated reverse functions, then one can directly synthesize a
  reversible pebble transducer computing $f$ using (\textit{i}) the reversible transducers
  for squaring and iterated reverse given in \cref{fig:rev-squaring-v4,fig:iterRev}
  (\textit{ii}) reversible transducers for the sequential functions obtained using 
  \cite[Theorem~2]{DFJL17}, and (\textit{iii}) our composition of the resultant
  reversible pebble transducers.  Note that the reversible machines in
  \cref{fig:rev-squaring-v4,fig:iterRev} are small with 6/5 states and 1/0 pebbles
  respectively.
  Once again, this is better than constructing first from the modular expression a
  deterministic pebble transducer for function $f$ and then uniformizing it to get a
  reversible one.
   
\end{enumerate}

\section{Reversible Pebble Transducers}
\label{sec:definitions}

\paragraph*{Pebble Transducers with Equality Tests.} In this section, we consider pebble transducers enhanced with equality
tests between pebble positions, in addition to the classical 
query checking only if a pebble is present/absent at the head position.
The equality tests are inspired from \cite{bojanpebble},
where $\leq$-comparisons between any two pebble positions, as well as between the head and
any pebble are allowed.  Enhancing the basic tests, on whether a pebble is present/absent
at the head, by allowing to check for any pair of pebbles whether or not they are at the
same position is useful, especially when composing two pebble transducers.

A $k$-pebble automaton with equality tests ($k$-$\mathsf{PA}_{=}$) is given by a
tuple $\auto=(Q,\Sigma,\delta,k,q_i,q_f)$ where $Q$ is a finite set of states, 
$\Sigma$ is the finite input alphabet, 
$\delta$ is the transition relation (see below),
and $q_i,q_f \in Q$ respectively are initial and final states.  

Given an input word $u\in\Sigma^*$, we define \emph{positions}
$\pos{u}=\{1,\ldots,|u|\}$.  We visualise the input of a PA as a circular word $\#u$ where
$\#\notin\Sigma$ is an endmarker reached when moving right (resp.\ left) from the last
(resp.\ first) position of $u$.
Keeping this in mind, we work with an \emph{extended} set of positions
$\epos{u}=\{0,1,\ldots,|u|\}$.  For instance,
$\setlength{\arraycolsep}{1pt}\begin{array}{ccccc} 
\# & a & b & c & d  \\
0 & 1 & 2 & 3 & 4  \end{array}$.

The head of the automaton during a run over $u$, occupies a position from $\epos{u}$,
denoted $\head$.
The states $Q$ of the pebble automaton are partitioned into 3 disjoint sets, $Q_{+1} \uplus
Q_0 \uplus Q_{-1}$ signifying movement of the head to the right, no movement of the head
and movement of the head to the left, respectively.
Given $\head \in \epos{u}$ and a state $q \in Q$, we define
$\head+q=\head$ if $q \in Q_0$;
$\head+q=(\head+1 \mod (|u|+1))$ if $q \in Q_{+1}$; and
$\head+q=(\head-1 \mod (|u|+1))$ if $q \in Q_{-1}$.
The intuition is that if the head moves to the right of the last position of the word,
then it reads $\#$, and if the head moves to the left from $\#$, it reads the last
position of $u$.

Likewise, the pebbles are dropped on $\epos{u}$ and follow a stack policy.
We denote by $\pebbles\in\epos{u}^{\leq k}$ the stack of positions where 
pebbles are dropped.  

A configuration of a $k$-pebble automaton on the input word $u$ is a tuple
$C=(q,\pebbles,\head)\in Q\times\epos{u}^{\leq k}\times\epos{u}$. 
In this configuration, the head of the automaton reads position $\head$, there are
$|\pebbles|$ pebbles dropped and the $i$-th pebble ($1\leq i\leq|\pebbles|$) is
dropped on position $\pebbles_{i}\in\epos{u}$.
The initial configuration is $(q_{i},\varepsilon,0)$ and the accepting (final) 
configuration is $(q_{f},\varepsilon,0)$. Hence, in order to accept, the automaton should 
lift all pebbles and move back to the endmarker.
Wlog, we assume that $q_{i},q_{f}\in Q_{0}$ with $q_{i}\neq q_{f}$, and there are no 
transitions starting from $q_{f}$ or going to $q_{i}$.

Transitions are of the form $t=(q,a,\varphi,\op,q')$ with states $q,q'\in Q$, input letter
$a\in\Sigmae=\Sigma\cup\{\#\}$, test $\varphi$ which is a conjunction of atoms of the form
$(\head=\pebble_{i})$, $\neg(\head=\pebble_{i})$, $(\pebble_{i}=\pebble_{j})$ or
$\neg(\pebble_{i}=\pebble_{j})$ (with $1\leq i,j\leq k$),
and operation $\op$ of the form\footnote{We chose to 
specify with index $i$ of $\drop{}$ and $\lift{}$ which pebble is dropped or lifted.
This allows to get determinism or reverse-determinism without adding extra tests to 
transitions.  We could have simply used $\drop{}$ and $\lift{}$ at the expense of adding
tests to transitions to ensure (reverse-) determinism.}
$\drop{i}$, $\lift{i}$ or $\nop$ (with $1\leq i\leq k$).
Transition $t$ is \emph{enabled} at configuration $C=(q,\pebbles,\head)$ (denoted
$C\models t$) if:
\begin{enumerate}
  \item
  the letter read is $a$, i.e., $a=(\#u)_{\head}$

  \item $\pebbles,\head\models\varphi$: the test succeeds, where
  \begin{itemize}
    \item $\pebbles,\head\models(\head=\pebble_{i})$ if $1\leq i\leq|\pebbles|$ and $\pebbles_{i}=\head$,
  
    \item $\pebbles,\head\models(\pebble_{i}=\pebble_{j})$ if $1\leq i,j\leq|\pebbles|$ and 
    $\pebbles_{i}=\pebbles_{j}$,
  \end{itemize}

  \item  $\pebbles,\head\models\op$: the operation can be executed where 
  \begin{itemize}
    \item $\pebbles,\head\models\nop$ is always true, 
    and we let $\nop(\pebbles,\head)=\pebbles$,

    \item $\pebbles,\head\models\drop{i}$ if $|\pebbles|=i-1$, 
    and we let $\drop{i}(\pebbles,\head)=\pebbles\cdot\head$,
  
    \item $\pebbles,\head\models\lift{i}$ if $|\pebbles|=i$ and $\head=\pebbles_{i}$, 
    and we let $\lift{i}(\pebbles,\head)=\pebbles_{1}\cdots\pebbles_{i-1}$.
  \end{itemize}
\end{enumerate}
When $C\models t$ we have $C\xrightarrow{t}C'=(q',\pebbles',\head')$ where 
$\pebbles'=\op(\pebbles,\head)$ and $\head'=\head+q'$.

Notice that the top pebble may be lifted from the pebble stack only when it is on the
position being read.  
We often simply write $\pebble_{i}$ (resp.\ $\neg\pebble_{i}$) for
the atomic test $(\head=\pebble_{i})$ (resp.\ $\neg(\head=\pebble_{i})$).
Note that, if a transition $t=(q,a,\pebble_{i},\drop{j},q')$
is enabled at some configuration $C$, then $i<j$ and
$C'\models(\pebble_{i}=\pebble_{j})$ where $C\xrightarrow{t}C'$.

An equality test $\pebble_{i}=\pebble_{i}$ allows to check if the size of the pebble stack
is at least $i$.  This is not possible if we only use atoms of the form $\pebble_{i}$ or
$\neg\pebble_{i}$ (which is not an expressivity problem since one may always store the
size of the pebble stack in the state of the automaton).

A run of $\auto$ is a sequence $C_{0}\xrightarrow{t_{1}}C_{1}\cdots\xrightarrow{t_{n}}C_{n}$
where $C_{\ell}$ are configurations on the input word $u$ and $t_{\ell}\in\delta$ are
transitions of $\auto$.  The run is initial if $C_{0}=(q_{i},\varepsilon,0)$ and final if
$C_{n}=(q_{f},\varepsilon,0)$.  It is accepting if it is both initial and final.

The automaton is \emph{deterministic} if for all pairs of transitions
$t=(q,a,\varphi,\op,r)$ and $t'=(q,a,\varphi',\op',r')$, if $t$ and $t'$ may be
simultaneously enabled (at some configuration $C$) then
$t=t'$.
The fact that an operation $\op$ is enabled may be written as a test $\test{\op}$:
$\test{\nop}=\true$, 
$\test{\drop{i}}=(\pebble_{i-1}=\pebble_{i-1})\wedge\neg(\pebble_{i}=\pebble_{i})$ (with 
$\pebble_{0}=\pebble_{0}$ identified with $\true$), and
$\test{\lift{i}}=(\head=\pebble_i) \wedge \neg(\pebble_{i+1} =\pebble_{i+1})$.
Then, the transitions $t$ and $t'$ may be simultaneously enabled if and only if the test
$\varphi\wedge\test{\op}\wedge\varphi'\wedge\test{\op'}$ is satisfiable.
This gives a more syntactic definition of determinism.
In particular, a transtion $t$ with operation $\lift{i}$ and a transition $t'$ with 
atomic test $\neg\pebble_{i}$ cannot be simultaneously enabled.

We define the reverse $\reverse{\op}$ of an operation $\op$ by
$\reverse{\nop}=\nop$, $\reverse{\drop{i}}=\lift{i}$ and $\reverse{\lift{i}}=\drop{i}$.
Notice that if $\pebbles,\head\models\op$ then
$\op(\pebbles,\head),\head\models\reverse{\op}$ and
$\pebbles=\reverse{\op}(\op(\pebbles,\head),\head)$.

A transition $t=(q,a,\varphi,\op,q')$ is \emph{reverse-enabled} at configuration
$C'=(q',\pebbles',\head')$ (denoted $C'\models_{\rev}t$) if there is a configuration
$C=(q,\pebbles,\head)$ enabling $t$ such that $C\xrightarrow{t}C'$.
More explicitly, the only possibility for configuration $C$ is given by
$\head=\head'-q'$ (corresponding to $\head'=\head+q'$), and
$\pebbles=\reverse{\op}(\pebbles',\head)$ (corresponding to $\pebbles'=\op(\pebbles,\head)$).
Then, $C'\models_{\rev}t$ 
iff $\pebbles',\head'-q'\models\reverse{\op}$ 
and $\reverse{\op}(\pebbles',\head'-q'),\head'-q'\models\varphi$.
In this case, we write $C=t^{-1}(C')=(q,\reverse{\op}(\pebbles',\head'-q'),\head'-q')$.
In particular, a transtion $t$ with operation $\drop{i}$ and a transition $t'$ with atomic
test $\neg\pebble_{i}$ and operation $\nop$ cannot be simultaneously reverse-enabled.

The automaton is \emph{reverse-deterministic} if for all pairs of transitions
$t=(q,a,\varphi,\op,r)$ and $t'=(q',a,\varphi',\op',r)$, if $t$ and $t'$ may be
simultaneously reverse-enabled (at some configuration $C'$) then $t=t'$.

The pebble automaton is \emph{reversible} if it is both \emph{deterministic} and 
\emph{reverse-deterministic}.

A $k$-pebble transducer with equality tests ($k$-$\mathsf{PT}_{=}$)
$\trans=(Q,\Sigma,\delta,k,q_i,q_f,\Gamma,\mu)$ is a $k$-pebble automaton
($k$-$\mathsf{PA}_{=}$) 
$\auto=(Q,\Sigma,\delta,k,q_i,q_f)$ extended with an output alphabet $\Gamma$ and an
output function $\mu$ mapping each transition of $\auto$ to a word in $\Gamma^{*}$.
The semantics of $\trans$ is a relation 
$\sem{\trans}\subseteq\Sigma^{*}\times\Gamma^{*}$ consisting of all pairs $(u,v)$ 
such that there is an accepting run 
$C_{0}\xrightarrow{t_{1}}C_{1}\cdots\xrightarrow{t_{n}}C_{n}$ on $\#u$ with 
$v=\mu(t_{1})\cdots\mu(t_{n})$.

A pebble transducer is deterministic (reverse-deterministic, reversible) whenever the
underlying pebble automaton is.  The semantics of a deterministic (reverse-deterministic,
reversible) pebble transducer is a partial function.

\begin{remark}\label{rem:0peb2w}
	Even though the semantics of 0-pebble transducers is different to that of two-way
	transducers as defined in~\cite{DFJL17}, the two models are equivalent.
	Given a transducer with $n$ states using one semantics,
	one can build a transducer realising the same relation using the other semantics,
	with \( \mathcal{O}(n) \) states.
	Moreover, determinism and reverse-determinism are preserved by this translation (see \cref{sec:app:0peb2w}).
\end{remark}

\paragraph*{Basic Pebble Transducers.}
A (basic) pebble automaton ($k$-$\mathsf{PA}$) or transducer ($k$-$\mathsf{PT}$) is one in
which equality tests of the form $(\pebble_{i}=\pebble_{j})$ or
$\neg(\pebble_{i}=\pebble_{j})$ are not used.  In some cases, we use a bit vector
$\overline{b}=(b_{1},\dots,b_{k})\in\{0,1\}^{k}$ to denote a full test $\varphi$ which is
a conjunction over all $1\leq i\leq k$ of atoms 
$\pebble_{i}$ if $b_{i}=1$ and $\neg\pebble_{i}$ if $b_{i}=0$.

\begin{figure}[t]
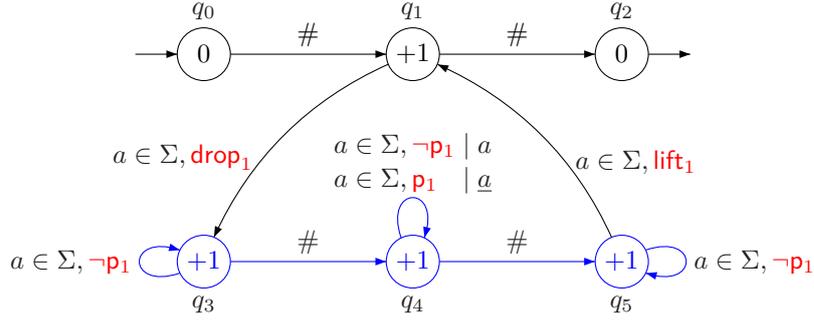

  \centering
  \gusepicture{squaring-RPT1}
  \caption{Reversible 1-pebble transducer for the function \emph{squaring}. The labels
  $i \in \{0,+1,-1\}$ inside the states represent that they lie in $Q_i$. 
  To keep the figure light, $\varepsilon$-outputs of transitions are omitted.
  The same transducer with state $q_{3}$ in $Q_{-1}$ would also compute the 
  squaring function.}
  \label{fig:rev-squaring-v4}
\end{figure}

\begin{example}\label{ex:squaring}
  \cref{fig:rev-squaring-v4} depicts a reversible $1$-pebble transducer computing the
  squaring function.  This function outputs the concatenation of as many
  copies of the input word as there are letters in it.  For the \( i^{th} \) copy, the \(
  i^{th} \) letter is marked.
	The blue part of the transducer copies its input and marks the positions on which the unique pebble is dropped.
  The black and red part initially drops the pebble on the first position, then lifts the pebble and drops it on the next position at each iteration. It moves to state $q_2$ and ends when the pebble is lifted from the last position.
	Without looking at the pebble operations, the transducer is deterministic (resp.\
	reverse-deterministic) at every state except state $q_5$ (resp.\ $q_3$).
	However, since the $\lift{1}$ operation can only be triggered at the position on which
	the pebble $1$ is dropped, the pebble guards $\lift{1}$ and $\lnot\pebble_{1}$ cannot be
	simultaneously enabled, and thus the transducer is deterministic at $q_5$.
	Symmetrically, the reverse of a $\drop{1}$ operation can only be enabled on the position
	marked by $\pebble_{1}$, and thus $\drop{1}$ and $\lnot\pebble_{1}$ cannot be
	simultaneously reverse-enabled, and then the transducer is reverse-deterministic at
	$q_3$, proving that we have a reversible transducer.
\end{example}

\begin{figure}[t]
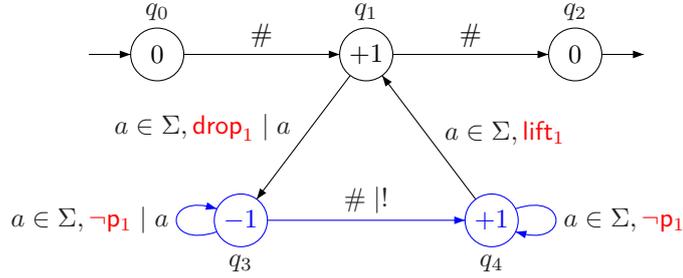

	\centering
	\gusepicture{all-prefixes-RPT1}
	\caption{Reversible 1-pebble transducer for the function \emph{all prefixes in reverse}.}
	\label{fig:rev-all-prefixes}
\end{figure}

\begin{example}
	\cref{fig:rev-all-prefixes} represents a reversible 1-pebble transducer for
	the function which maps a word to the concatenation of the reverse of all its prefixes,
	with a \( ! \) symbol between each one.
	A run of this transducer on the word \( abb \) produces the word
	\( a!ba!bba! \).
	
	\noindent Using the same arguments as in the previous example, one can show that the transducer is reversible.
\end{example}

%%%%%%%%%%%%%%%%%%%%%%

\section{Simulating equality tests of pebbles}
\label{sec:constr}

In this section, we show the equivalence between a basic $k$ pebble transducer
($k$-$\mathsf{PT}$) and a $k$ pebble transducer allowing equality tests
($k$-$\mathsf{PT}_{=}$).
The precise statement is given by \cref{thm:EqualityTests}, followed by
a proof sketch, the full proof can be found in~\cref{app:equiv}.

\begin{theorem}\label{thm:EqualityTests}
  Given a $k$-pebble transducer with equality tests $\mathcal{A}$ having $n$ states, one
  can construct a basic $k$-pebble transducer $\basic$ with $n2^{k^2}$ states such that
  $\sem{\mathcal{A}}=\sem{\basic}$.

  Moreover, if $\mathcal{A}$ is deterministic (reverse-deterministic, reversible) then so
  is $\basic$.
\end{theorem}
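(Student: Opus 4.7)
The plan is to augment $\auto$'s state with an equivalence relation $E$ on $\{1,\ldots,k\}$ that records which active pebbles share a position, giving $\basic$ a state set $Q\times 2^{\{1,\ldots,k\}^2}$ of size $n\cdot 2^{k^2}$, with initial state $(q_{i},\emptyset)$ and final state $(q_{f},\emptyset)$. I maintain along every run of $\basic$ the invariant that $(i,j)\in E$ iff pebbles $i$ and $j$ are both currently on the stack and occupy the same position. Any equality atom $(\pebble_{i}=\pebble_{j})$ or its negation appearing in a test of $\auto$ is then resolved statically against $E$, never requiring an actual comparison of pebble positions.

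Each transition $t=(q,a,\varphi,\op,q')$ of $\auto$ becomes a family of transitions from $(q,E)$ in $\basic$. I split $\varphi$ into its head-atoms ($\pebble_\ell,\neg\pebble_\ell$) and its equality-atoms; the translation is kept only when $E$ is consistent with all equality-atoms, which are then discarded, leaving a basic test. For $\op=\nop$ the target is $(q',E)$. For $\op=\drop{i}$ I conjoin a full bit-vector test over $\pebble_{1},\ldots,\pebble_{i-1}$; each bit-vector $\overline{b}$ gives a separate transition whose target $E'$ is obtained from $E$ by placing pebble $i$ in the equivalence class of the $1$-bits of $\overline{b}$ (a fresh singleton class if $\overline{b}=\overline{0}$), and those $\overline{b}$ whose $1$-bits do not already lie inside a single class of $E$ are discarded. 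For $\op=\lift{i}$ I likewise conjoin a full bit-vector test over $\pebble_{1},\ldots,\pebble_{i-1}$ and set $E'$ to $E$ with pebble $i$ removed.

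Semantic equivalence between $\basic$ and $\auto$ then follows by induction on run length: the map sending a configuration of $\auto$ to the corresponding configuration of $\basic$ decorated with the true equivalence relation on its pebble stack is a bijection between runs, and outputs match transition by transition. Determinism transfers because bit-vectors are pairwise exclusive as basic guards, so any two simulated transitions simultaneously enabled from a common $(q,E)$ must carry the same bit-vector and must come from the same original transition by determinism of $\auto$.

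The main obstacle is reverse-determinism at lifts: removing pebble $i$ from $E$ is not injective in the reverse direction, so a priori several predecessor relations $E$ collapse to the same $E'$. This is precisely what the bit-vector test on lift transitions fixes, for its $1$-bits record which other pebbles shared the head position with pebble $i$ just before the lift, and hence pin down the class of $E$ to which pebble $i$ belonged. Combined with the fact that drops reverse uniquely (since $E$ is $E'$ with pebble $i$ removed, independently of the bit-vector) and that nops preserve $E$, this reverse-reconstruction of $E$ together with reverse-determinism of $\auto$ yields reverse-determinism of $\basic$.
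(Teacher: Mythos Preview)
Your proposal is correct and follows essentially the same approach as the paper: the paper stores a $k\times k$ boolean matrix $M$ (with $M_{i,i}=1$ meaning pebble $i$ is dropped and $M_{i,j}=1$ meaning pebbles $i,j$ share a position), which is just another encoding of your equivalence relation $E$, and it likewise fans out each $\drop{i}$ and $\lift{i}$ transition over all bit-vectors on $\pebble_{1},\ldots,\pebble_{i-1}$ to update the matrix and secure reverse-determinism. Your treatment of the delicate point---that the bit-vector attached to a lift transition pins down the class of the lifted pebble in the predecessor relation---matches the paper's reasoning exactly.
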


\begin{proof}[Sketch of proof]
  The main idea while simulating $\mathcal{A}$ is to keep track of which pairs of pebbles
  are on the same position in a coherent manner: that is, if $p_i=p_j$ and $p_j=p_\ell$
  then we also have $p_i=p_\ell$.  While constructing $\basic$, we store the equalities of
  the positions of the $k$ pebbles as a $k\times k$ boolean matrix $M$ where
  $M_{i,i}=1$ represents that the $i^{th}$ pebble has been dropped, and $M_{i,j}=1$ if the
  $i^{th}$ and $j^{th}$ pebbles are on the same position.  Then if we denote by $Q$ the
  set of states of $\mathcal{A}$, the set of states of $\basic$ is $Q'=Q\times
  \{0,1\}^{k^2}$.

  We now explain how the information is used and updated.
  Initially, no pebble is dropped, and the initial state of $\basic$ is $(q_i,
  \overline{0})$ where $\overline{0}$ denotes the $k \times k$ zero matrix and $q_i$ is
  the initial state of $\mathcal{A}$.  Similarly, a run is accepting only when all pebbles
  are lifted.  Thus the final state of $\basic$ is $(q_f, \overline{0})$ where $q_f$ is 
  the final state of $\mathcal{A}$.  
  Recall that basic pebble transducers can see whether a given pebble
  is at the same position as the reading head, thus transitions of $\mathcal{A}$ involving
  a guard of the form $\head=\pebble_i$ or $\neg(\head=\pebble_i)$ can be kept in
  $\basic$.  Tests of the form $\pebble_{i}=\pebble_{j}$ are replaced by $\true$ if 
  $M_{i,j}=1$ and by $\false$ otherwise.
  On transitions where there is no pebble lifting or dropping, the matrix $M$ does not
  need to be updated.  
  The $i^{th}$ pebble $\pebble_i$ can be dropped only if the $(i-1)^{th}$
  pebble is already dropped (that is, $M_{i-1,i-1}=1$) and the $i^{th}$ one is not
  ($M_{i,i}=0$).  On dropping $\pebble_i$, the matrix $M$ is updated by setting all
  coefficients $M_{i,j}$ and $M_{j,i}$ to $1$ if $j=i$ or $(j<i$ and $\head=p_j)$, and $0$
  otherwise.  Note
  that to do this, each transition of $\mathcal{A}$ with a $\drop{i}$ action is duplicated
  into $2^{i-1}$ disjoint transitions testing which subsets of $\pebble_1, \dots,
  \pebble_{i-1}$ are present at the head.  Only one transition can then be fired,
  depending on the bit vector generated by the pebbles at the current position.
  Similarly, the $i^{th}$ pebble $\pebble_i$ can be lifted if and only if
  $\head=\pebble_{i}$ and no larger pebbles are dropped $(i=k$ or
  $M_{i+1,i+1}=0)$.  On lifting $\pebble_i$, the matrix $M$ is updated by setting
  all coefficients of the $i^{th}$ row and column to $0$.  However to ensure the
  preservation of reverse-determinism, each $\lift{i}$ transition is duplicated into
  $2^{i-1}$ transitions that have complete and disjoint tests for the current bit vector
  over pebbles $\pebble_1,\dots,\pebble_{i-1}$.

  Finally, suppose that $\mathcal{A}$ is reversible.  The only extra transitions added to
  $\basic$ are when a pebble is dropped or lifted.  However, all the duplicate transitions
  that have been added in $\basic$ have disjoint tests (enforced by disjoint bit vectors).
  Thus $\basic$ is deterministic since $\mathcal{A}$ is.  On the other hand, consider we
  want to reverse a $\lift{i}$ transition in $\mathcal{A}$.  This means that the $i^{th}$
  pebble was at the previous position of the reading head, and further, there is only one
  bit vector (enriching the $i$th row and column of $M$) which was valid at the previous
  step.  The correct predecessor matrix is then the only one where the $i^{th}$ row and
  column is coherent with the bit vector of the previous position.
\end{proof}

%%%%%%%%%%%%%%%%%%
\section{Composition of pebble transducers}

The goal of this section is to give an efficient (in the size of the resulting transducer)
construction for the composition of pebble transducers.  Since the construction is rather
involved, we start with two simpler cases.  
The first one described in \cref{sec:reversing} shows how to compose a reversible
$n$-pebble transducer with the reverse function
$\mathsf{rev}\colon\Gamma^{*}\to\Gamma^{*}$ which takes a string of letters and outputs
the sequence of letters in reverse order.  For instance, $\mathsf{rev}(abac)=caba$.
The second simpler case is the composition of a reversible $n$-pebble transducer with a
deterministic $0$-pebble transducer.  This case is addressed in
\cref{sec:simple-case}.  Finally, the general case is solved in
\cref{sec:general-case}.

%%%%%%%%%%%%%%%%%%
\subsection{Reversing the output of a reversible pebble transducer}
\label{sec:reversing}

Let $\trans=(Q,\Sigma,\delta,n,q_i,q_f,\Gamma,\mu)$ be a reversible $n$-pebble 
transducer, possibly with equality tests.
The goal is to construct a reversible transducer
$\rtrans=(Q',\Sigma,\delta',n,q'_i,q'_f,\Gamma,\mu')$ which outputs the reverse
of the string produced by $\trans$: $\dom{\rtrans}=\dom{\trans}$ and for all
$u\in\dom{\trans}$, the string $\sem{\rtrans}(u)$ is the reverse of the string
$\sem{\trans}(u)$.

The set of states of $\rtrans$ is $Q'=Q$ but the polarity of the states is 
reversed: $Q'_{+1}=Q_{-1}$, $Q'_{0}=Q_{0}$ and $Q'_{-1}=Q_{+1}$. The initial and final 
states are exchanged: $q'_{i}=q_{f}$ and $q'_{f}=q_{i}$.
The transitions are also reversed in the following way.
Let $t=(q,a,\varphi,\op,q')$ be a transition of $\trans$.
Then, $\reverse{t}=(q',a,\reverse{(\varphi,\op)},q)$ is a transition of $\rtrans$ with
$\mu'(\reverse{t})=\mu(t)$.  We define 
$\reverse{(\varphi,\op)}=\op(\varphi),\reverse{\op}$: the operation is simply
$\reverse{\op}$ but the test $\op(\varphi)$ depends on both $\varphi$ and $\op$.
It remains to define $\op(\varphi)$. We want that 
$\pebbles,\head\models\varphi$ iff $\op(\pebbles,\head),\head\models\op(\varphi)$, 
assuming that $\op(\pebbles,\head)$ is defined.
We let $\op(\varphi_{1}\wedge\varphi_{2})=\op(\varphi_{1})\wedge\op(\varphi_{2})$
and $\op(\neg\varphi)=\neg\op(\varphi)$.
We let $\nop(\varphi)=\varphi$, and the remaining cases are given below
\begin{align*}
  \drop{\ell}(\head=\pebble_{i}) &= 
  \begin{cases}
    (\head=\pebble_{i}) & \text{if } i<\ell \\
    \false & \text{otherwise}
  \end{cases}
  &
  \drop{\ell}(\pebble_{i}=\pebble_{j}) &= 
  \begin{cases}
    (\pebble_{i}=\pebble_{j}) & \text{if } i,j<\ell \\
    \false & \text{otherwise}
  \end{cases}
  \\
  \lift{\ell}(\head=\pebble_{i}) &= 
  \begin{cases}
    (\head=\pebble_{i}) & \text{if } i<\ell \\
    \true & \text{if } i=\ell \\
    \false & \text{otherwise}
  \end{cases}
  &
  \lift{\ell}(\pebble_{i}=\pebble_{j}) &= 
  \begin{cases}
    (\pebble_{i}=\pebble_{j}) & \text{if } i,j<\ell \\
    (\head=\pebble_{i}) & \text{if } i<j=\ell \\
    (\head=\pebble_{j}) & \text{if } j<i=\ell \\
    \true & \text{if } i=j=\ell \\
    \false & \text{otherwise}
  \end{cases}
\end{align*}
We can easily check that $\pebbles,\head\models\varphi$ iff
$\op(\pebbles,\head),\head\models\op(\varphi)$ when $\op(\pebbles,\head)$ is defined.
Recall that, when $\pebbles'=\op(\pebbles,\head)$ is defined, 
then $\reverse{\op}(\pebbles',\head)$ is defined and we have
$\pebbles=\reverse{\op}(\pebbles',\head)$.
With $t=(q,a,\varphi,\op,q')$ and $\reverse{t}=(q',a,\op(\varphi),\reverse{\op},q)$,
we deduce that
\begin{equation}
  (q,\pebbles,\head) \xra{t} (q',\pebbles',\head')
  \quad\text{iff}\quad
  (q',\pebbles',\head'-q') \xra{\reverse{t}} (q,\pebbles,\head-q)
  \label{eq:reverse}
\end{equation}
We deduce that, for $u\in\Sigma^{*}$, the following sequence is a run of $\trans$ on $\lrend u$
\[
(q_{1},\pebbles^{1},\head_{1}) 
\xra{t_{1}} (q_{2},\pebbles^{2},\head_{2}) 
\xra{t_{2}} (q_{3},\pebbles^{3},\head_{3}) 
\cdots
\xra{t_{m-1}} (q_{m},\pebbles^{m},\head_{m}) 
\]
if and only if the following sequence is a run of $\rtrans$ on $\lrend u$
\[
(q_{1},\pebbles^{1},\head_{1}-q_{1}) 
\xla{\reverse{t_{1}}} (q_{2},\pebbles^{2},\head_{2}-q_{2}) 
\xla{\reverse{t_{2}}} (q_{3},\pebbles^{3},\head_{3}-q_{3}) 
\cdots
\xla{\reverse{t_{m-1}}} (q_{m},\pebbles^{m},\head_{m}-q_{m}) 
\]

Now, the run of $\trans$ is accepting iff 
$(q_{1},\pebbles^{1},\head_{1})=(q_{i},\varepsilon,0)$ and 
$(q_{m},\pebbles^{m},\head_{m})=(q_{f},\varepsilon,0)$.
Since $q_{i},q_{f}\in Q_{0}$, we deduce that the run of $\trans$ is accepting if and only 
if the corresponding run of $\rtrans$ is accepting, i.e., 
$(q_{m},\pebbles^{m},\head_{m}-q_{m})=(q'_{i},\varepsilon,0)=(q_{f},\varepsilon,0)$ and
$(q_{1},\pebbles^{1},\head_{1}-q_{1})=(q'_{f},\varepsilon,0)=(q_{i},\varepsilon,0)$.

We deduce that $\dom{\rtrans}=\dom{\trans}$ and, by definition of $\mu'$, for
$u\in\dom{\trans}$, the string $\sem{\rtrans}(u)$ is the reverse of the string
$\sem{\trans}(u)$.

To show that $\rtrans$ is deterministic, consider a pair of transitions 
$\reverse{t_{1}}=(q',a,\reverse{(\varphi_{1},\op_{1})},q_{1})$ and
$\reverse{t_{2}}=(q',a,\reverse{(\varphi_{2},\op_{2})},q_{2})$ where
$t_1=(q_1,a,\varphi_1,\op_1,q')$ and $t_2=(q_2,a,\varphi_2,\op_2,q')$ are transitions of
$\trans$.  Assume that $\reverse{t_{1}}$ is enabled at some configuration
$C'=(q',\pebbles',\head')$.  Using \eqref{eq:reverse}, we deduce that 
$t_{1}$ is reverse-enabled at configuration $C=(q',\pebbles',\head'+q')$. 
Similarly, if $\reverse{t_{2}}$ is enabled at $C'$ then $t_{2}$ is reverse-enabled at 
$C$. Since $\trans$ is reverse-deterministic, we deduce that $\rtrans$ is deterministic.

We can prove similarly that $\rtrans$ is reverse-deterministic using the fact that
$\trans$ is determinisitic.  In particular, we show using \eqref{eq:reverse} that
transition $\reverse{t}$ is reversed-enabled at some configuration $C'=(q,\pebbles,\head)$
if and only if transition $t$ is enabled at configuration $C=(q,\pebbles,\head+q)$.
We deduce that $\rtrans$ is reversible.

%%%%%%%%%%%%%%%%%%
\subsection{Composition: Simple case}
\label{sec:simple-case}

In this subsection, we prove the following result.

\begin{theorem}\label{thm:composition-simple-case}
  Let $\trans=(Q,\Sigma,\delta,n,q_i,q_f,\Gamma,\mu)$ be a \emph{reversible} $n$-pebble
  transducer (possibly with equality tests) computing a function
  $f\colon\Sigma^{*}\to\Gamma^{*}$.
  Let $\trans'=(Q',\Gamma,\delta',0,q'_i,q'_f,\Delta,\mu')$ be a \emph{deterministic}
  $0$-pebble transducer computing a function $g\colon\Gamma^{*}\to\Delta^{*}$.
  We can construct a \emph{deterministic} $n$-pebble transducer $\trans''$ with 
  at most $2\cdot|Q|\cdot[Q'|$ many states
  computing the composition $g\circ f\colon\Sigma^{*}\to\Delta^{*}$.
  Moreover, if $\trans'$ is \emph{reversible} then so is $\trans''$.
\end{theorem}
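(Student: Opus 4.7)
The plan is to construct $\trans''$ so that it simulates $\trans$ on the input $u$ in order to produce the letters of $v=f(u)$ on the fly, while simultaneously simulating $\trans'$ on those letters.  Since $\trans'$ is a $0$-pebble (two-way) transducer and may move left on its input, we must allow $\trans''$ to rewind $\trans$'s computation; this is possible because $\trans$ is reversible, so we can use the reverse transducer $\rtrans$ constructed in \cref{sec:reversing}.  The state of $\trans''$ carries a direction bit $d\in\{+,-\}$ recording whether $\trans$ is currently simulated forward or via $\rtrans$.

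Concretely, set $Q''=Q\times Q'\times\{+,-\}$, keeping the same input alphabet and number of pebbles as $\trans$.  Polarities and pebble operations of the states of $\trans''$ are inherited from $\trans$ in the $+$ mode and from $\rtrans$ in the $-$ mode; the initial and final states are $(q_i,q'_i,+)$ and $(q_f,q'_f,+)$.  In state $(q,p,+)$, by determinism of $\trans$ there is a unique $\trans$-transition $t=(q,a,\varphi,\op,q_2)$ enabled at the current configuration.  If $\mu(t)=\varepsilon$, $\trans''$ takes the corresponding transition with empty output, going to $(q_2,p,+)$.  If $\mu(t)=b\in\Gamma$, by determinism of $\trans'$ there is a unique $\trans'$-transition $t'$ from $p$ reading $b$, with target $p_2$; $\trans''$ takes the corresponding transition, outputs $\mu'(t')$, and goes to $(q_2,p_2,d')$ where $d'=+$ if $p_2\in Q'_{+1}$ and $d'=-$ if $p_2\in Q'_{-1}$.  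The $-$-mode transitions are defined symmetrically, using the unique $\rtrans$-transition enabled at the current configuration (equivalently, the unique $\trans$-transition reverse-enabled, by reverse-determinism of $\trans$), and switching mode according to the polarity of $\trans'$'s destination state.

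Correctness follows by induction on $\trans'$'s run on $v$: between two successive $\trans'$-steps, $\trans''$ performs a burst of $\varepsilon$-outputting $\trans$-steps (forward or backward, depending on $d$) terminated by a single output-producing step at which $\trans'$'s transition is applied and its output emitted.  Determinism of $\trans''$ is immediate from the uniqueness of the enabled (or reverse-enabled) $\trans$-transition combined with the uniqueness of the enabled $\trans'$-transition; under the additional hypothesis that $\trans'$ is reversible, reverse-determinism of $\trans''$ follows by a symmetric argument.  The state count is $|Q''|=2|Q||Q'|$.

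The main obstacle is the delicate bookkeeping at mode switches: when $\trans'$ has just consumed a letter $b$ produced by $\trans$ and decides to move left, the subsequent backward traversal of $\trans$ must not re-feed $b$ to $\trans'$ but rather continue rewinding until the previous output letter is un-produced.  Achieving this with only two modes (so as to respect the $2|Q||Q'|$ bound) requires fusing the mode change with the first reverse step, and choosing the invariant that links $\trans$'s current configuration with $\trans'$'s virtual position in $v$ so that it is preserved across the switch.  Internal $Q'_0$ states of $\trans'$ can be eliminated beforehand via \cref{rem:0peb2w}, and endmarker handling is a routine separate matter.
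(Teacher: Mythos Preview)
Your overall strategy matches the paper: simulate $\trans$ forward or backward (using reversibility) to feed letters to $\trans'$.  But the specific state organization you propose has two real problems that the paper's construction avoids.

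First, the head position at a mode switch is wrong.  Suppose you are in $(q_\ell,p,+)$ at $\trans$-configuration $C_\ell=(q_\ell,\pebbles_\ell,\head_\ell)$, the forward transition $t_\ell$ produces $b$, and $\trans'$ moves to $p_2\in Q'_{-1}$.  You say ``take the corresponding transition, go to $(q_{\ell+1},p_2,-)$''.  Executing $t_\ell$ applies $\op_\ell$ (so $\pebbles\mapsto\pebbles_{\ell+1}$) and then moves the head by the polarity of the \emph{target} state.  Since $(q_{\ell+1},p_2,-)$ has the reversed polarity of $q_{\ell+1}$, the head lands at $\head_\ell-q_{\ell+1}$ rather than $\head_{\ell+1}=\head_\ell+q_{\ell+1}$.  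You now have pebble stack $\pebbles_{\ell+1}$ but head $\head_\ell-q_{\ell+1}$, which is neither $C_\ell$ nor $C_{\ell+1}$ nor the $\rtrans$-shift of either.  The backward simulation started from this garbage configuration is meaningless.  The ``fusing'' you allude to does not obviously fix this: to reverse $t_{\ell-1}$ in the same step you would have to read the letter at $\head_{\ell-1}$, not at $\head_\ell$.

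Second, even after repairing the above (say by going to $(q_\ell,p_2,-)$ with $\nop$), your bit $d$ is always set to the polarity of the current $Q'$-state, so it carries no information and the reachable state count is $|Q||Q'|$, not $2|Q||Q'|$.  With that few states, reverse-determinism fails: for the same $\trans$-transition $t\colon q\to s$ with $\mu(t)=b$ and the same $\trans'$-transition $p'\to p$ with $p\in Q'_{+1}$, both the ``forward step that just produced $b$'' from $(q,p',+)$ and the ``switch back to forward after seeing $b$ backwards'' from $(s,p',-)$ target $(s,p,+)$ and are reverse-enabled at the same configuration.

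The paper spends the extra $|Q||Q'|$ states not on a direction bit (that is already encoded by the polarity of the $Q'$-component) but on a separate \emph{synchronization} layer $Q\times Q'\subseteq Q''_0$ of polarity-$0$ states.  At $(q,q')$ one checks which $t$ and $t'$ are enabled without moving: if $s'\in Q'_{+1}$ one executes $t$ and enters the simulation layer $(\hat s,s')$; if $s'\in Q'_{-1}$ one does \emph{not} execute $t$ and enters $(\hat q,s')$.  The simulation layer $\widehat{Q}\times(Q'_{-1}\cup Q'_{+1})$ then runs $\trans$ or $\rtrans$ until the next output letter is reached and switches back.  This separation is exactly what makes both the head bookkeeping and reverse-determinism go through cleanly.
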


\begin{proof}
  Wlog, we assume that transducer $\trans$  
   outputs at most one letter on each transition,
  i.e., $\mu(t)\in\Gamma\cup\{\varepsilon\}$ for each transition $t$ of
  $\trans$.  This can be done while preserving reversibility in the following way:
  a transition $t=(q,a,\varphi,\op,s)$ producing $v_1\cdots v_n$ ($n\geq2$) is decomposed
  into $n$ transitions $t_1=(q,a,\varphi\wedge \overline{\op},\nop,(q,v_1))$, 
	$t_n=((q,v_1\ldots v_{n-1}),a,\varphi,\op,s)$ and for $1<i<n$, 
  $t_i=((q,v_1\ldots v_{i-1}),a,\varphi\wedge \overline{\op},\nop,(q,v_1\ldots v_i))$.  
  We set $\mu(t_i)=v_i$.
	Since we maintain the tests $\varphi$ and $\overline{\op}$ in the sequence of
	transitions $t_i$, even if a state $(q,v)$ happens to be used by several transitions
	they should be disjoint due to the reversibility of the inital transducer.
  It will be also more convenient to assume that, when running on $\lrend u$ with
  $u\in\Sigma^{*}$, the transducer $\trans$ writes $\lrend v$ with $v\in\Gamma^{*}$.  This is
  achieved by setting $\mu(t_{0})=\lrend$ where $t_{0}=(q_{i},\lrend,\true,{-},{-})$ is the
  unique transition enabled at the initial configuration $C_{0}=(q_{i},\varepsilon,0)$ of
  $\trans$ (pebble stack empty and head on position $0$).
  Finally, we also assume that the transducer $\trans'$ always fully read its input.  This
  will allow us to ensure that the transducer $\trans''$ only accepts words that belong to
  $\dom{\trans}$.  This can be achieved by addding to $\trans'$ a single state $r\in
  Q'_{+1}$, and if $(q'_i,\#,\varphi,\op,s)$ is the initial transition of $\delta'$, we
  replace it with $(q'_i,\#,\true,\nop,r)$, $(r,a,\true,\nop,r)$ for $a\neq\#$ and
  $(r,\#,\varphi,\op,s)$.  This construction preserves reversibility.

  We construct the $n$-pebble transducer 
  $\trans''=(Q'',\Sigma,\delta'',n,q''_i,q''_f,\Delta,\mu'')$ as a synchronized product 
  of $\trans$ and $\trans'$.
  The set of states is $Q''=(Q\times Q')\cup(\widehat{Q}\times(Q'_{-1}\cup Q'_{+1}))$
  where $\widehat{Q}=\{\hat{q}\mid q\in Q\}$ is a disjoint copy of $Q$.  
  We have $|Q''|\leq 2\cdot|Q|\cdot[Q'|$. We define
  \begin{align*}
    Q''_{0}  &= (Q\times Q') \cup (\widehat{Q_{0}}\times(Q'_{-1}\cup Q'_{+1})) \\
    Q''_{+1} &= (\widehat{Q_{+1}}\times Q'_{+1}) \cup (\widehat{Q_{-1}}\times Q'_{-1}) \\
    Q''_{-1} &= (\widehat{Q_{+1}}\times Q'_{-1}) \cup (\widehat{Q_{-1}}\times Q'_{+1}) 
    \,.
  \end{align*}
  The initial state is $q''_{i}=(q_{i},q'_{i})\in Q''_{0}$
  and the final state is $q''_{f}=(q_{i},q'_{f})\in Q''_{0}$.
  
  The intuition is that, in a state $(q,q')\in Q\times Q'$, transducer $\trans''$ will 
  synchronize a pair $(t,t')$ of transitions of $\trans$ and $\trans'$ where the output 
  $\mu(t)\in\Gamma$ of $t$ is the input letter of $t'$.
  On the other hand, when $\trans''$ is in a state $(\hat{q},q')$, it will simulate 
  the computation of $\trans$ forward if $q'\in Q'_{+1}$ (resp.\ backward if $q'\in 
  Q'_{-1}$) using transitions $t$ of $\trans$ producing $\mu(t)=\varepsilon$ until the 
  computation reaches a transition $t$ of $\trans$ producing a letter $\mu(t)\in\Gamma$.
  
  To handle the fact that the input of $\trans$ is a circular word, we extend
  $\delta$ with the transition
  $t_{f,i}=(q_{f},\lrend,\neg(\pebble_{1}=\pebble_{1}),\nop,q_{i})$ which is only enabled in 
  the final configuration $(q_{f},\varepsilon,0)$ of $\trans$ and moves to the initial 
  configuration $(q_{i},\varepsilon,0)$. We let $\mu(t_{f,i})=\varepsilon$.
  
  We define now the transitions $\delta''$ and the output function $\mu''$.
  Let $t=(q,a,\varphi,\op,s)$ be a transition of $\trans$.  
  If $\mu(t)=a'\in\Gamma$, for each transition $t'=(q',a',s')$ of
  $\trans'$, we introduce in $\trans''$ the synchronized transition $t''$ defined in 
  \cref{eq:simple-tr+,eq:simple-tr-,eq:simple-tr0} with
  $\mu''(t'')=\mu'(t')$.
  
  Firstly, we synchronize a transition and switch to the simulation mode that
  will search for either the next or previous production, depending on the polarity of the
  destination state.  If $\trans'$ is going to the right, we advance in the run of $\trans$
  and apply the transition $t$.  If $\trans'$ is going to the left, we need to start
  rewinding $\trans$ to compute the transition which led to state $q$:
  \begin{align}
    t''=(q,q') & \xra{a,\varphi,\op} (\hat{s},s') &&\text{if } s'\in Q'_{+1}
    \tag{\textsf{tr-a}}\label{eq:simple-tr+}
    \\
    t''=(q,q') & \xra{a,\varphi\wedge\overline{\op},\nop} (\hat{q},s') 
    &&\text{if } s'\in Q'_{-1}
    \tag{\textsf{tr-b}}\label{eq:simple-tr-}
    \\
    t''=(q,q') & \xra{a,\varphi\wedge\overline{\op},\nop} (q,s') &&\text{if } s'\in Q'_{0}
    \tag{\textsf{tr-c}}\label{eq:simple-tr0}
  \end{align}
  Secondly, if we are in simulation mode of $\trans$, we need to keep simulating until we
  reach a non empty production of $\trans$.  Hence if $\mu(t)=\varepsilon$ then we stay in
  the simulation mode of $\trans$.  For each $q'\in Q'_{-1}\cup Q'_{+1}$, we introduce in
  $\trans''$ the following transitions $t''$ with $\mu''(t'')=\varepsilon$.
  Note that depending on the polarity of $q'$, we either need to advance or rewind the
  computation of $\trans$.  Rewinding is done following the constructions detailled
  in~\cref{sec:reversing}.
  \begin{align}
    t''=(\hat{q},q') & \xra{a,\varphi,\op} (\hat{s},q') &&\text{if } q'\in Q'_{+1}
    \tag{\textsf{mv-a}}\label{eq:simple-mv+}
    \\
    t''=(\hat{s},q') & \xra{a,\reverse{(\varphi,\op)}} (\hat{q},q') &&\text{if } q'\in Q'_{-1}
    \tag{\textsf{mv-b}}\label{eq:simple-mv-}
  \end{align}
  Finally, if $\mu(t)=a'\in\Gamma$ we found the transition of $\trans$ which produces the
  input letter $a'$ to be read by $\trans'$.  Hence, we switch from the simulation mode of
  $\trans$ to its synchronization mode.  For each $q'\in Q_{-1}\cup Q_{+1}$, we add the
  following transitions $t''$ with $\mu''(t'')=\varepsilon$.

  \begin{align}
    t''=(\hat{q},q') & \xra{a,\varphi\wedge\overline{\op},\nop} (q,q') 
    &&\text{if } q'\in Q'_{+1}
    \tag{\textsf{sw-a}}\label{eq:simple-sw+}
    \\
    t''=(\hat{s},q') & \xra{a,\reverse{(\varphi,\op)}} (q,q') 
    &&\text{if } q'\in Q'_{-1}
    \tag{\textsf{sw-b}}\label{eq:simple-sw-}
  \end{align}
  After the proof, we illustrate this construction in \cref{ex:composition-simple-case}. 
  It may be useful to read this example before reading the rest of the proof.
  
  Now, we prove the correctness of the construction, i.e., that $\trans''$ computes the
  function $g\circ f$. After that, we will prove that the constructed transducer
  $\trans''$ is deterministic (\cref{claim:simple-det}), and that $\trans''$ is
  reverse-deterministic if $\trans'$ is reversible (\cref{claim:simple-rev-det}).

  We fix $u\in\dom{f}\subseteq\Sigma^{*}$ and $v=f(u)\in\Gamma^{*}$.  
  We consider the accepting run of $\trans$ on
  $\lrend u$ producing $\lrend v$:
  \[
  C_{0}=(q_{i},\varepsilon,0)\xra{t_{0}}C_{1}\xra{t_{1}}C_{2}\xra{t_{2}}
  \cdots C_{N}\xra{t_{N}}C_{N+1}=(q_{f},\varepsilon,0)
  \]
  where $C_\ell=(q_\ell,\pebbles_\ell,\head_\ell)$ are configurations and
  $t_\ell\in\delta\setminus\{t_{f,i}\}$ are transitions of $\trans$.  Notice that the
  configurations in this accepting run are pairwise distinct.  This follows from the fact
  that the automaton is deterministic and there are no transitions (other than $t_{f,i}$)
  starting from the final state $q_{f}$.

  For each position $i\in\epos{v}$, let $\enc{i}$ be the index of the transition in the
  above run producing the letter at position $i$ in $\lrend v$.  
  Since $\mu(t_{0})=\lrend$ we have $j_{0}=0$.
  Also, $\mu(t_{\ell})=\varepsilon$ if $\enc{i}<\ell<\enc{i+1}$ and
  $\mu(t_{\enc{i}})\in\Gamma\cup\{\#\}$ if $i\in\epos{v}$.
  
  The main idea is to encode the head position $\head'$ of $\trans'$ by the configuration
  $C_{\enc{\head'}}$. 
  More precisely, the encoding of a configuration
  $C'=(q',\head')$ of $\trans'$ on $\lrend v$ is defined as
  $\enc{C'}=((q_\ell,q'),\pebbles_\ell,\head_\ell)$ where $\ell=\enc{\head'}$.
  
  \begin{restatable}{claim}{simplesimulate}\label{claim:simple-simulate}
    There is a transition $(q',\head')\xra{t'}(s',\head'+s')$ of $\trans'$ on $\lrend v$ if and
    only if there is a nonempty run $\enc{(q',\head')}\xra{+}\enc{(s',\head'+s')}$ of
    $\trans''$ on $\lrend u$ which does not use an intermediate state in $Q\times Q'$.
  \end{restatable}
\begin{claimproof}
	($\implies$) Assume that $\trans''$ has a transition
	$(q',\head')\xra{t'}(s',\head'+s')$ on $\lrend v$.
	Let $a'$ be the letter of $\lrend v$ at position $\head'$. 
	Let $\ell=\enc{\head'}$ and $\ell'=\enc{\head'+s'}$.
	We have $\mu(t_{\ell})=a'$ and $t'=(q',a',s')$. 
	Let $t''_{\ell}$ be the transition of $\trans''$ constructed from the pair $(t,t')$
	using \cref{eq:simple-tr+,eq:simple-tr-,eq:simple-tr0}.
	
	If $s'\in Q'_{0}$ then $\ell'=\ell$ and
	$\enc{(q',\head')}\xra{t''_{\ell}}\enc{(s',\head'+s')}$: we are done.
	
	Assume that $s'\in Q'_{+1}$.  Then $\ell'>\ell$.\footnote{Or $\ell=|v|$ and
			$\ell'=0$ which can be handled similarly using the extra transition $t_{f,i}$
			producing $\varepsilon$ allowing to move from the final configuration of $\trans$ to
			its initial configuration.}
  For each $\ell<i<\ell'$, let $t''_{i}$ be the
	transition of $\trans''$ constructed from $t_{i}$ by \eqref{eq:simple-mv+}.  Let
	$t''_{\ell'}$ be the transition of $\trans''$ constructed from $t_{\ell'}$ by
	\eqref{eq:simple-sw+}.  We can easily check that
	\begin{align*}
	((q_{\ell},q'),\pebbles_{\ell},\head_{\ell}) \xra{t''_{\ell}}
	((\hat{q_{\ell+1}},s'),\pebbles_{\ell+1},\head_{\ell+1}) 
	\xra{t''_{\ell+1}\cdots t''_{\ell'-1}}
	((\hat{q_{\ell'}},s'),\pebbles_{\ell'},\head_{\ell'})\\
	\ldots
	\xra{t''_{\ell'}} ((q_{\ell'},s'),\pebbles_{\ell'},\head_{\ell'})
	\end{align*}
	Therefore, $\enc{(q',\head')}\xra{t''_{\ell}t''_{\ell+1}\cdots
		t''_{\ell'-1}t''_{\ell'}}\enc{(s',\head'+s')}$.
	
	The third case is $s'\in Q'_{-1}$.  Then $\ell'<\ell$.\footnote{Or $\ell=0$
			and $\ell'=|v|$ which can be handled similarly using the extra transition $t_{f,i}$.}
	For each $\ell'<i<\ell$, let $t''_{i}$ be the transition of $\trans''$ constructed
	from $t_{i}$ by \eqref{eq:simple-mv-}.  Let $t''_{\ell'}$ be the transition of
	$\trans''$ constructed from $t_{\ell'}$ by \eqref{eq:simple-sw-}.  We can easily check
	that
	\begin{align*}
		((q_{\ell},q'),\pebbles_{\ell},\head_{\ell}) &\xra{t''_{\ell}}
		((\hat{q_{\ell}},s'),\pebbles_{\ell},\head_{\ell}-q_{\ell}) \xra{t''_{\ell-1}}
		((\hat{q_{\ell-1}},s'),\pebbles_{\ell-1},\head_{\ell-1}-q_{\ell-1}) 
		\\
		\cdots &\xra{t''_{\ell'+1}} 
		((\hat{q_{\ell'+1}},s'),\pebbles_{\ell'+1},\head_{\ell'+1}-q_{\ell'+1})
		\xra{t''_{\ell'}} ((q_{\ell'},s'),\pebbles_{\ell'},\head_{\ell'}-0)
	\end{align*}
	We deduce that $\enc{(q',\head')}\xra{t''_{\ell}t''_{\ell-1}\cdots
		t''_{\ell'+1}t''_{\ell'}}\enc{(s',\head'+s')}$ as desired.
	
	($\Longleftarrow$) The converse can be shown similarly by noting that, depending on
	the polarity of $s'$, in a run $\enc{(q',\head')}\xra{+}\enc{(s',\head'+s')}$ which
	has no intermediate states in $Q\times Q'$, the sequence of transitions used is one of
	\begin{itemize}
		\item[(a)] a transition constructed with \eqref{eq:simple-tr+}, followed by a 
		(possibly empty) sequence of transitions constructed with \eqref{eq:simple-mv+}, 
		followed by a transition constructed with \eqref{eq:simple-sw+},
		
		\item[(b)] a transition constructed with \eqref{eq:simple-tr-}, followed by a 
		(possibly empty) sequence of transitions constructed with \eqref{eq:simple-mv-}, 
		followed by a transition constructed with \eqref{eq:simple-sw-},
		
		\item[(c)] a single transition constructed with \eqref{eq:simple-tr0}.
		\claimqedhere
	\end{itemize}
\end{claimproof}  
  Now, we can show that the transducer $\trans''$ computes the function $g\circ f$.  With
  the notation above, assuming that $v\in\dom{g}$ we consider the accepting run $\rho'$ of
  $\trans'$ on $\lrend v$:
  \begin{equation}
    C'_{0}=(q'_{i},0)\xra{t'_{0}}C'_{1}\xra{t'_{1}}C'_{2}\cdots 
    C'_{N'}\xra{t'_{N'}}C'_{N'+1}=(q'_{f},0) \,.
    \label{eq:rho'}
  \end{equation}
  Using \cref{claim:simple-simulate} we obtain the accepting run $\rho''$ of $\trans''$ on
  $\lrend u$
  \begin{align*}
    \enc{C'_{0}}=((q_{i},q'_{i}),\varepsilon,0)\xra{+}{}
    &\enc{C'_{1}}\xra{+}\enc{C'_{2}}
    \\
    \cdots~ &\enc{C'_{N'}}\xra{+}\enc{C'_{N'+1}}=((q_{i},q'_{f}),\varepsilon,0) \,.
  \end{align*}
  It is easy to see that $\rho''$ produces the same output string $g(v)$ as $\rho'$.
  We get $\sem{\trans''}(u)=g(v)=g\circ f(u)$.
  
  Conversely, an accepting run $\rho''$ of $\trans''$ on $\lrend u$ can be split according to 
  its visits to states in $Q\times Q'$:
  \[
  C''_{0}=((q_{i},q'_{i}),\varepsilon,0)\xra{+}C''_{1}\xra{+}C''_{2}\cdots 
  C''_{N'}\xra{+}C''_{N'+1}=((q_{i},q'_{f}),\varepsilon,0) \,.
  \]
  where the $C''_{i}$ are the configurations with state in $Q\times Q'$.
  We have $C''_{0}=\enc{C'_{0}}$.  By induction, and using \cref{claim:simple-simulate},
  we can easily show that, for $0\leq i\leq N'$, there are transitions $t'_{i}$ and
  configurations $C'_{i+1}$ such that $C'_{i}\xra{t'_{i}}C'_{i+1}$ on $\lrend v$ and
  $C''_{i+1}=\enc{C'_{i+1}}$. We deduce that \eqref{eq:rho'} gives an accepting run 
  $\rho'$ of $\trans'$ on $\lrend v$. Again, the output string $\sem{\trans''}(u)$ of $\rho''$ 
  is the same as the output string $g(v)$ of $\rho'$.
  Because we assumed that $\trans'$ reads its whole input, by emulating $\trans'$ we know
  that the transducer $\trans''$ fully simulates $\trans$, which ensures that it accepts
  an input $u$ if, and only if, it belongs to $\dom{\trans}$ and $\trans(u)$ belongs to
  $\dom{\trans'}$.

\begin{claim}\label{claim:simple-det}
	The transducer $\trans''$ from \cref{thm:composition-simple-case} is deterministic.
\end{claim}

\begin{claimproof}
	Consider two transitions $t''_{1}$ and $t''_{2}$ starting from some state $q''$,
	reading some $a\in\Sigma$ and both enabled at some configuration
	$C''=(q'',\pebbles,\head)$ of $\trans''$.
	
	Assume first that $q''=(\hat{q},q')$ with $q'\in Q'_{+1}$.
	For each $i\in\{1,2\}$, let $t_{i}=(q,a,\varphi_{i},\op_{i},s_{i})$ be the transition of 
	$\trans$ giving rise to $t''_{i}$ with \eqref{eq:simple-mv+} or \eqref{eq:simple-sw+}.
	Since $t''_{i}$ is enabled at configuration $C''$, we deduce that $t_{i}$ is enabled
	at configuration $C=(q,\pebbles,\head)$.
	Hence, $t_{1},t_{2}$ are both enabled at $C$.  We get $t_{1}=t_{2}$ by determinism of
	$\trans$ and therefore $t''_{1}=t''_{2}$.
	
	Assume now that $q''=(\hat{s},q')$ with $q'\in Q'_{-1}$.
	For each $i\in\{1,2\}$, let $t_{i}=(q_{i},a,\varphi_{i},\op_{i},s)$ be the transition of 
	$\trans$ giving rise to $t''_{i}$ with \eqref{eq:simple-mv-} or \eqref{eq:simple-sw-}.
	Since $t''_{i}$ is enabled at configuration $C''$, we deduce that $\reverse{t_{i}}$ is
	enabled at configuration $C=(s,\pebbles,\head)$.  Therefore, by \eqref{eq:reverse}, 
	transition $t_{i}$ is reverse-enabled at configuration $C'=(s,\pebbles,\head+s)$.
	Hence, $t_{1},t_{2}$ are both reverse-enabled at $C'$.  
	We get $t_{1}=t_{2}$ by reverse-determinism of $\trans$. Hence, $t''_{1}=t''_{2}$.
	
	Finally, assume that $q''=(q,q')$.
	For each $i\in\{1,2\}$, let $t_{i}=(q,a,\varphi_{i},\op_{i},s_{i})$, $a'_{i}=\mu(t_{i})$ 
	and $t'_{i}=(q',a'_{i},s'_{i})$ be the transitions of $\trans$ and $\trans'$ giving
	rise to $t''_{i}$ with \eqref{eq:simple-tr+}, \eqref{eq:simple-tr-} or
	\eqref{eq:simple-tr0}.
	Since $t''_{i}$ is enabled at $C''$, we deduce that $t_{i}$ is enabled
	at configuration $C=(q,\pebbles,\head)$.
	Hence, $t_{1},t_{2}$ are both enabled at $C$ and we get $t_{1}=t_{2}$ by determinism of
	$\trans$. It follows that $a'_{1}=\mu(t_{1})=\mu(t_{2})=a'_{2}$ and we get 
	$t'_{1}=t'_{2}$ by determinism of $\trans'$. Therefore $t''_{1}=t''_{2}$.
\end{claimproof}

\begin{claim}\label{claim:simple-rev-det}
	If the transducer $\trans'$ from \cref{thm:composition-simple-case} is reversible, then $\trans''$ is reverse-deterministic.
\end{claim}

\begin{claimproof}
	Consider two transitions $t''_{1}$ and $t''_{2}$ of $\trans''$ ending in some state
	$q''$, reading some $a\in\Sigma$ and both reverse-enabled at some configuration
	$C''=(q'',\pebbles,\head)$.
	
	Assume first that $q''=(\hat{s},s')$ with $s'\in Q'_{+1}$.
	For each $i\in\{1,2\}$, let $t_{i}=(q_{i},a,\varphi_{i},\op_{i},s)$ be the transition of 
	$\trans$ giving rise to $t''_{i}$ with \eqref{eq:simple-mv+} or \eqref{eq:simple-tr+}.
	Since $t''_{i}$ is reverse-enabled at configuration $C''$, we deduce that $t_{i}$ is
	reverse-enabled at configuration $C=(s,\pebbles,\head)$.
	Hence, $t_{1},t_{2}$ are both reverse-enabled at $C$ and we get $t_{1}=t_{2}=t$ by
	reverse-determinism of $\trans$.
	Now, either $\mu(t)=\varepsilon$ and both $t''_{1}$ and $t''_{2}$ are constructed 
	from $t$ and $s'$ with \eqref{eq:simple-mv+}. We get $t''_{1}=t''_{2}$.
	Or $\mu(t)=a'\in\Gamma$ and $t''_{i}$ is constructed from $t$ and some transition 
	$t'_{i}=(q'_{i},a',s')$ of $\trans'$ with \eqref{eq:simple-tr+}.
	Using the reverse-determinism of $\trans'$, we deduce that $t'_{1}=t'_{2}$ and 
	$t''_{1}=t''_{2}$.
	
	Assume now that $q''=(q,q')\in Q''_{0}$. 
	Depending on the polarity of $q'$, \emph{both} transitions 
	$t''_{1}$ and $t''_{2}$ are constructed with the \emph{same} case 
	\eqref{eq:simple-sw+}, \eqref{eq:simple-sw-} or \eqref{eq:simple-tr0} from some 
	transitions $t_{1}$ and $t_{2}$ of $\trans$. 
	In the cases \eqref{eq:simple-sw+} or \eqref{eq:simple-tr0}, the pebble stack does not
	change when executing $t''_{i}$ whose operation is $\nop$, and the head does not move
	since $q''\in Q''_{0}$.  Since $t''_{i}$ is reverse-enabled at $C''$, we deduce that
	$\pebbles,\head\models\varphi_{i}\wedge\overline{\op_{i}}$.  Hence, $t_{1}$ and
	$t_{2}$ are both enabled at configuration $C=(q,\pebbles,\head)$.
	It follows that $t_{1}=t_{2}$ using the determinism of $\trans$, and then 
	$t''_{1}=t''_{2}$, using the reverse-determinism of $\trans'$ in case 
	\eqref{eq:simple-tr0}. 
	Consider now the case \eqref{eq:simple-sw-} and let 
	$t_{i}=(q,a,\varphi_{i},\op_{i},s_{i})$. Since $t''_{i}$ is reversed-enabled at 
	$C''$, we deduce that $\reverse{t_{i}}$ is reverse-enabled at $C=(q,\pebbles,\head)$.
	By \eqref{eq:reverse}, this implies that $t_{i}$ is enabled at $(q,\pebbles,\head+q)$.
	We conclude as in the previous case that $t_{1}=t_{2}$ and $t''_{1}=t''_{2}$.
	
	Finally, assume that $q''=(\hat{q},q')\in\widehat{Q}\times Q'_{-1}$.
	For each $i\in\{1,2\}$, let $t_{i}=(q,a,\varphi_{i},\op_{i},s_{i})$ be the transition of 
	$\trans$ giving rise to $t''_{i}$ with \eqref{eq:simple-mv-} or \eqref{eq:simple-tr-}.
	In case \eqref{eq:simple-mv-}, we see that $\reverse{t_{i}}$ is reverse-enabled at 
	$(q,\pebbles,\head)$ and using \eqref{eq:reverse} we deduce that $t_{i}$ is enabled 
	at $C=(q,\pebbles,\head+q)$.
	In case \eqref{eq:simple-tr-}, we note that the pebble stack does not change while 
	executing transition $t''_{i}$ and the head moves by the polarity of $q''$.
	Since $t''_{i}$ is reverse-enabled at $C''$, we have
	$\pebbles,\head-q''\models\varphi_{i}\wedge\overline{\op_{i}}$.  Using
	$\head-q''=\head+q$ we deduce that $t_{i}$ is enabled at $C=(q,\pebbles,\head+q)$.
	We have shown that in both cases \eqref{eq:simple-mv-} or \eqref{eq:simple-tr-}, 
	transitions $t_{1}$ and $t_{2}$ are both enabled at $C$. 
	It follows that $t_{1}=t_{2}$ using the determinism of $\trans$, and then 
	$t''_{1}=t''_{2}$, using the reverse-determinism of $\trans'$ in case 
	\eqref{eq:simple-tr-}. 
	\claimqedhere
\end{claimproof}

  This concludes the proof of \cref{thm:composition-simple-case}.
\end{proof}

\newcommand{\tr}[2]{\overunderset{#1}{#2}{\rightarrow}}
\newcommand{\mv}[2]{\overunderset{#1}{#2}{\rightsquigarrow}}
\newcommand{\sw}[2]{\overunderset{#1}{#2}{\dashrightarrow}}
\begin{example}\label{ex:composition-simple-case}
	We illustrate the construction from \cref{thm:composition-simple-case},
	with \( \trans \) being a slight modification of the \emph{squaring}
	function realized by the transducer from \cref{fig:rev-squaring-v4},
	and \( \trans' \) is the function \emph{iterated reverse} realized by the
	transducer from \cref{fig:iterRev}.
	The \emph{squaring} function is modified as such:
	instead of outputting marked letters \( \underline{a} \) when reading
	the letter on which the pebble is placed, from state \( q_4 \),
	it outputs \( ! \).
	Now the output of this automaton is of a form that is expected for
	the function \emph{iterated reverse}.
	
	We show a fragment of the run of \( \trans'' \) on the input \( \lrend u = \lrend bcd \).
	On this input, \( \trans \) produces the output \( \lrend v = \lrend ! c d b ! d b c ! \),
	on which \( \trans' \) produces \( ! b d c ! c d b \).
	The fragment of the run illustrated below starts when transducer \( \trans' \) reads the
	second \( ! \) of $v$, and needs to rewind the computation of \( \trans \) in order to process
	the infix \( c d b \) of \( u \).
	Notice that to produce the second \( ! \) of \( v \),
	\( \trans \) is producing the second copy of \( u \),
	hence the pebble placed on the \( c \) in the initial configuration of the run.
	The fragment ends when \( \trans' \) is done reversing the current infix,
	and is about to find the previous \( ! \) symbol.

  Transitions of the type \eqref{eq:simple-mv-} are represented by \( \mv{}{} \),
	those of the type \eqref{eq:simple-sw-} by \( \sw{}{} \),
	and those of the type \eqref{eq:simple-tr-} 
  by \( \tr{}{} \).
	Notice that in this fragment of run, no transition where \( \trans' \)
	needs to go right on \( \lrend v \) are represented.
	The run is as follow:
	\begin{align*}
		\big( (q_4, q_1'), 2, 2 \big) &\tr{c}{1}
		\big( (\hat{q_4}, q_2'), 2, 1 \big) \sw{b}{2}
		\big( (q_4, q_2'), 2, 1 \big) \tr{b | b}{3}
		\big( (\hat{q_4}, q_2'), 2, 0 \big) \mv{\lrend}{4}
		\big( (\hat{q_3}, q_2'), 2, 3 \big) \\
		&\mv{d}{5}
		\big( (\hat{q_3}, q_2'), 2, 2 \big) \mv{c}{6}
		\big( (\hat{q_1}, q_2'), \varepsilon, 1 \big) \mv{b}{7}
		\big( (\hat{q_5}, q_2'), 1, 0 \big) \mv{\lrend}{8}
		\big( (\hat{q_4}, q_2'), 1, 3 \big) \\
		&\sw{d}{9}
		\big( (q_4, q_2'), 1, 3 \big) \tr{d | d}{10}
		\big( (\hat{q_4}, q_2'), 1, 2 \big) \sw{c}{11}
		\big( (q_4, q_2'), 1, 2 \big) \tr{c | c}{12}
    \big( (\hat{q_4}, q_2'), 1, 1 \big)
	\end{align*}
	
	Notice how after transition \( 3 \), because \( \trans' \) still requires to move left,
	the computation of \( \trans \) is rewound and goes back to the first copy of \( u \)
	produced by $\trans$.
	Transition \( 6 \) undoes the \( \drop{1} \) operation, hence the \( \varepsilon \)
	in the following configuration, and transition \( 7 \) undoes the \( \lift{1} \),
	effectively moving the pebble one position to the left.
\end{example}

%%%%%%%%%%%%%%%%%%
\subsection{Composition: General case}
\label{sec:general-case}

Now, we prove the general case of composition of pebble transducers.

\begin{theorem}\label{thm:composition-general-case}
  Let $\trans=(Q,\Sigma,\delta,n,q_i,q_f,\Gamma,\mu)$ be a \emph{reversible} $n$-pebble
  transducer (possibly with equality tests) computing a function
  $f\colon\Sigma^{*}\to\Gamma^{*}$.
  Let $\trans'=(Q',\Gamma,\delta',m,q'_i,q'_f,\Delta,\mu')$ be a \emph{deterministic}
  $m$-pebble transducer computing a function $g\colon\Gamma^{*}\to\Delta^{*}$.
  Let $r=(n+1)(m+1)-1$.
  We can construct a \emph{deterministic} $r$-pebble transducer
  $\trans''=(Q'',\Sigma,\delta'',r,q''_i,q''_f,\Delta,\mu'')$ \emph{with equality tests}
  computing the composition $g\circ f\colon\Sigma^{*}\to\Delta^{*}$.
  The number of states of $\trans''$ is at most $\mathcal{O}(|Q|^{m+2}\cdot|Q'|\cdot(n+1)^{m+3})$.
  Moreover, if $\trans'$ is \emph{reversible} then so is $\trans''$.
\end{theorem}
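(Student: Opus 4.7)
The plan is to generalize the construction from \cref{thm:composition-simple-case} by enriching the encoding of configurations of $\trans'$. The key observation is that each position $i\in\epos{v}$ of $v=f(u)$ corresponds to an index $\enc{i}$ in the accepting run of $\trans$ on $\lrend u$, and hence to a configuration $C_{\enc{i}}=(q_{\enc{i}},\pebbles_{\enc{i}},\head_{\enc{i}})$ of $\trans$. I will encode each of the $m$ pebbles of $\trans'$ by $n+1$ pebbles of $\trans''$: $n$ pebbles replicating the pebble stack $\pebbles_{\enc{i}}$ at the memorized configuration, plus one extra pebble marking the head position $\head_{\enc{i}}$; the state $q_{\enc{i}}$ itself will be stored in the finite control. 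The current configuration of $\trans$ being simulated is tracked, as in the simple case, by the $n$ topmost pebbles of $\trans''$ (for the current pebble stack of $\trans$) together with the head of $\trans''$ (for the current head of $\trans$). This gives a total of $m(n+1)+n=(n+1)(m+1)-1=r$ pebbles. The finite state of $\trans''$ must record the current states of $\trans$ and $\trans'$, the $m$ memorized states of $\trans$ (one per pebble of $\trans'$), a decoration analogous to the hat flag of the simple case, together with auxiliary counters used inside the gadgets described below; this yields the announced bound of order $|Q|^{m+2}\cdot|Q'|\cdot(n+1)^{m+3}$.

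First I would lift the synchronization and simulation modes from the simple case to this richer state space. In synchronization mode, $\trans''$ couples a transition of $\trans$ with a transition of $\trans'$ consuming the letter just produced; in simulation mode, $\trans''$ advances or rewinds $\trans$, using the reversed transducer construction of \cref{sec:reversing}, until the next nonempty production is reached. Equality tests of the form $(\pebble_i=\pebble_j)$ on pebbles of $\trans'$ will be translated into equality tests of $\trans''$ between the corresponding head-marker pebbles, which is exactly the reason why equality tests must be allowed in the target machine.

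Next I would implement the $\drop{i}$ and $\lift{i}$ operations of $\trans'$ by means of gadgets. When $\trans'$ drops its $i$-th pebble at its current head position, $\trans''$ has to place $n+1$ fresh pebbles at the positions encoding the current configuration of $\trans$. The drop gadget proceeds iteratively: starting from the current head, it drops a new pebble marking $\head$ of $\trans$, then walks the tape to the position marked by the topmost pebble of $\trans$ (detected by an equality test $\pebble_\ell$) and drops a new pebble there, then walks to the position of the next pebble of $\trans$, and so on until all $n+1$ new pebbles have been placed. A symmetric lift gadget performs the reverse operation when $\trans'$ lifts a pebble. Correctness will then proceed along the same lines as the simple case: I would define an encoding $\enc{\cdot}$ of configurations of $\trans'$ as configurations of $\trans''$, prove a simulation lemma analogous to \cref{claim:simple-simulate} in which the drop and lift transitions of $\trans'$ correspond to macro-runs through the gadgets, and then conclude by induction along an accepting run of $\trans'$ on $\lrend v$.

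The main obstacle will be to preserve reversibility through the drop and lift gadgets. Every intermediate step of a gadget must be uniquely determined both forward and backward, and this will rely crucially on using the already-dropped pebbles of $\trans$ as unambiguous markers, combined with carefully chosen equality tests that identify which phase of the walk the gadget is currently in. The determinism and reverse-determinism proofs of $\trans''$ will otherwise follow the structure of \cref{claim:simple-det,claim:simple-rev-det}, with additional cases handling the auxiliary states internal to the drop and lift gadgets.
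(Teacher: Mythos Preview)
Your plan follows the same overall strategy as the paper: encode each pebble of $\trans'$ by the full configuration of $\trans$ at the corresponding position (pebble stack, head position, state), lift the synchronization/simulation modes of the simple case to this richer encoding, and implement $\drop{}$/$\lift{}$ of $\trans'$ via gadgets that copy or remove a block of pebbles. Two points in your description need sharpening before the construction actually goes through.

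First, you write that each pebble of $\trans'$ is encoded by exactly $n+1$ pebbles of $\trans''$, namely ``$n$ pebbles replicating the pebble stack'' plus one for the head. But the pebble stack of $\trans$ at the relevant configuration may have any size from $0$ to $n$, so the encoding must use a \emph{variable} number $y_i\in\{1,\ldots,n+1\}$ of pebbles per $\trans'$-pebble, and these sizes $(y_i)_i$ must be stored in the state of $\trans''$ (this is where the $(n+1)^{m}$ factor in your state bound genuinely originates, not just from gadget counters). Padding each block to a fixed length $n+1$ is incompatible with the stack discipline: you cannot lift a dummy pebble later without first walking to it, and there is no canonical place to put it.

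Second, translating the test $(\pebble'_i=\pebble'_j)$ of $\trans'$ as a single equality test between the two head-marker pebbles is too weak. Two positions of $v$ coincide iff the corresponding \emph{configurations} of $\trans$ coincide, so the translated test must conjoin: equality of the memorized states $x_i=x_j$ (a constraint on the finite control), equality of the block sizes $y_i=y_j$, and pairwise equality of all $y_i$ pebbles in the two blocks. This is indeed why equality tests are needed in $\trans''$, but one equality per pair is not enough. With these two corrections your plan coincides with the paper's construction; the paper additionally preprocesses $\trans'$ so that its $\drop{}$/$\lift{}$ transitions do not move the head, which is not essential but simplifies the gadgets.
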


To simplify the construction, we first show that we may restrict to reversible pebble 
transducers that do not move their head when dropping or lifting a pebble.

\begin{lemma}\label{lem:drop-lift-move}
  For each reversible $n$-pebble transducer
  $\trans=(Q,\Sigma,\delta,n,q_i,q_f,\Gamma,\mu)$, we can construct a reversible
  $n$-pebble transducer $\trans'=(Q',\Sigma,\delta',n,q'_i,q'_f,\Gamma,\mu')$ with
  $|Q'|\leq 3|Q|$, computing the same function, and such that every transition in $\trans'$ 
  which moves the head has operation $\nop$. 
\end{lemma}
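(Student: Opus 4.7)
The plan is to split every transition that both moves the head and performs a $\drop$ or $\lift$ into two consecutive transitions: a stationary one carrying out the pebble operation, followed by a $\nop$ transition that actually moves the head. For each $q \in Q_{+1} \cup Q_{-1}$ I introduce two auxiliary states $q^{\mathsf{d}}$ and $q^{\mathsf{l}}$, both placed in $Q'_{0}$, playing the role of ``landing pads'' between the two halves of a split $\drop$ or $\lift$. This gives the state set $Q' = Q \cup \{q^{\mathsf{d}}, q^{\mathsf{l}} \mid q \in Q_{+1} \cup Q_{-1}\}$ of size at most $3|Q|$, with $q'_{i} = q_{i}$ and $q'_{f} = q_{f}$ unchanged since they already lie in $Q_{0}$.

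Transitions already compliant with the lemma (those with $\op = \nop$, or with $\op \in \{\drop{i}, \lift{i}\}$ but destination already in $Q_{0}$) are kept verbatim. A transition $t = (p, a, \varphi, \drop{i}, q)$ with $q \in Q_{+1} \cup Q_{-1}$ is replaced by $t_{1} = (p, a, \varphi, \drop{i}, q^{\mathsf{d}})$ and $t_{2} = (q^{\mathsf{d}}, a, \drop{i}(\varphi) \wedge \test{\lift{i}}, \nop, q)$, with $\mu'(t_{1}) = \mu(t)$ and $\mu'(t_{2}) = \varepsilon$; here $\drop{i}(\varphi)$ is the syntactic post-drop translation from \cref{sec:reversing}, and $\test{\lift{i}}$ asserts that the pebble stack has exactly $i$ entries with the top one at the head. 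A $\lift{i}$ transition is split symmetrically through $q^{\mathsf{l}}$, with test $\lift{i}(\varphi) \wedge \test{\drop{i}}$ on the second half. Since landing pads are in $Q_{0}$, no input letter is consumed between $t_{1}$ and $t_{2}$, so each pair faithfully simulates the original $t$ and preserves the output, giving $\sem{\trans'} = \sem{\trans}$.

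Determinism at each original state $q \in Q$ is inherited from $\trans$ since first halves reuse the original $(a, \varphi, \op)$ triples. The only interesting case is determinism at a landing pad $q^{\mathsf{d}}$: two outgoing second halves with indices $i \neq i'$ are separated by $\test{\lift{i}}$ vs.\ $\test{\lift{i'}}$ (mutually exclusive stack sizes), while two with the same $i$ but distinct pre-tests $\varphi, \varphi'$ would correspond to two original drops ending at $q$ that cannot be simultaneously reverse-enabled by the reverse-determinism of $\trans$ at $q$, which is exactly what $\drop{i}(\varphi)$ and $\drop{i}(\varphi')$ capture on post-drop configurations. Reverse-determinism at the landing pads transfers from that of $\trans$ at $q$ because the first halves share $(a, \varphi, \op)$ with the original transitions and the landing pads do not move the head.

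The main obstacle will be reverse-determinism at an original state $q$, where the incoming transitions now mix unchanged original $\nop$ and stationary transitions with second halves arriving from both $q^{\mathsf{d}}$ and $q^{\mathsf{l}}$. The tests on the second halves were designed precisely so that $t_{2}$ is reverse-enabled at a configuration $(q, \pebbles, \head)$ if and only if the corresponding original $t$ was reverse-enabled there. Hence any simultaneous reverse-enabling of two distinct transitions into $q$ in $\trans'$ would translate to a simultaneous reverse-enabling of two distinct original transitions in $\trans$, contradicting its reverse-determinism. The technical burden of the formal write-up is to verify this equivalence case by case, checking that the combined tests $\drop{i}(\varphi) \wedge \test{\lift{i}}$ and $\lift{i}(\varphi) \wedge \test{\drop{i}}$ really capture the reverse-enabling conditions of the original $\drop{i}$ and $\lift{i}$ transitions and nothing more.
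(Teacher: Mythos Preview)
Your proposal is correct and follows the same high-level idea as the paper: split each $\drop{}/\lift{}$ transition into a stationary part that performs the pebble operation followed by a $\nop$ part that carries the head move, with the test on the second half being $\op(\varphi)\wedge\test{\reverse{\op}}$. The one genuine difference is the choice of landing pad: the paper names the intermediate state $(q,\op)$ after the \emph{source} state $q$ of the original transition, whereas you name it $q^{\mathsf{d}}$ or $q^{\mathsf{l}}$ after the \emph{destination} state. This duality shows up in the reversibility arguments: at the paper's landing pads both determinism and reverse-determinism reduce to \emph{forward} determinism of $\trans$ at the source, while at your landing pads both reduce to \emph{reverse}-determinism of $\trans$ at the destination. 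Either choice works and gives the same $3|Q|$ bound; your variant additionally avoids splitting transitions whose destination is already in $Q_{0}$, a harmless optimisation.
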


\begin{proof}
  Let $Q_\op=Q\times \{\drop{},\lift{} \}$. We let 
  $Q'=Q\cup Q_\op$, $q'_{i}=q_{i}$, $q'_{f}=q_{f}$, $Q'_{0}=Q_{0}\cup Q_\op$, 
  $Q'_{+1}=Q_{+1}$ and $Q'_{-1}=Q_{-1}$. We define now $\delta'$ and $\mu'$.
  Each transition $t=(q,a,\varphi,\nop,s)$ of $\trans$ also appears in $\trans'$,
  and for each transition $t=(q,a,\varphi,\op,s)$ of $\trans$ where $\op\neq\nop$, we
  define the following two transitions of $\trans'$:
  \begin{align*}
    t' &= (q,a,\varphi,\op,(q,\op)) \\
    t'' &= ((q,\op),\op(\varphi)\wedge\overline{\reverse{\op}},\nop,s)
  \end{align*}
  with output $\mu'(t')=\mu(t)$ and $\mu'(t'')=\varepsilon$.
 Note that we abuse notation as we forget the index of the action $\op$ in the state $(q,\op)$.
  We claim that $\trans'$ is reversible (see~\cref{claim:drop-lift-move:rev} below).

  Finally, it is easy to see that there is a one-to-one correspondence between the 
  accepting runs of $\trans$ and the accepting runs of $\trans'$, moreover this 
  correspondence preserves the output string produced. Hence, $\sem{\trans}=\sem{\trans'}$.

\begin{claim}\label{claim:drop-lift-move:rev}
	Transducer \( \trans' \) from \cref{lem:drop-lift-move} is reversible.
\end{claim}

\begin{claimproof}
	We call $t_i$ a transition that appears in both $\trans$ and $\trans'$ (i.e. a transition whose action is $\nop$).
	For $i\in\{1,2\}$, let $t'_{i}$ and $t''_{i}$ be the transitions constructed from 
	$t_{i}=(q,a,\varphi_{i},\op_{i},s_{i})$. 
	
	First, let us consider the case where two transitions are enabled at $C=(q,\pebbles,\head)$ for $\trans'$. Then they are either $t_i$ or $t'_i$ transitions.
	If $t'_{i}$ is enabled at $C=(q,\pebbles,\head)$ then $t_{i}$ is enabled at $C$ for $\trans$.
	Hence, in both cases we obtain $t_{1}$ and $t_{2}$ that are enabled at $C$ for $\trans$. By determinism of $\trans$ we get $t_{1}=t_{2}$. If $\op_i=\nop$ then $t_{1}=t_{2}$ in $\trans'$, or $\op_i\neq\op$ and we have $t'_{1}=t'_{2}$.
	
		Second, if two transitions $t'_{i}$ are reverse-enabled at $C'=((q,\op),\pebbles',\head)$
		then there are transitions $t_i=(q,a,\varphi_i,\op_i,s_i)$ in $\delta$ that are enabled at
		$C_i=(q,\reverse{\op_i}(\pebbles',\head),\head)$.
		Remark that given a pebble stack, the only $\drop{}$ action that can be reverse-enabled is
		$\drop{j}$ where $j$ is the size of the pebble stack.  Similarly, the only possible
		$\lift{j}$ action is when $j$ is the size of the pebble stack plus one.
		Then as the transitions $t'_{i}$ are reverse-enabled at $(q,\op)$, by construction of
		$\trans'$ the actions $\op_i$ of the $t_{i}$ have the same index.
		This means that $\reverse{\op_1}(\pebbles',\head)=\reverse{\op_2}(\pebbles',\head)$ and
		that $C_1=C_2$ and by determinism of $\trans$, $t_1=t_2$ which means $t'_1=t'_2$.
	
	Third, assume that $t''_{i}$ is enabled at $C'=((q,\op),\pebbles',\head)$.
	We claim that $t_{i}$ is enabled at $C_i=(q,\reverse{\op_i}(\pebbles',\head),\head)$.
	Indeed, let $\pebbles^{i}=\reverse{\op_{i}}(\pebbles',\head)$, which is well defined since 
	$\pebbles',\head\models\overline{\reverse{\op_{i}}}$. 
	We have $\pebbles'=\op_{i}(\pebbles^{i},\head)$ and using 
	$\pebbles',\head\models\op_{i}(\varphi_{i})$ we get $\pebbles^{i},\head\models\varphi_{i}$.
	Therefore $(q,\pebbles^{i},\head)\xra{t_{i}}(s,\pebbles',\head+s)$. 
	Thanks to the previous remark, the actions $\op_i$ are the same and hence $\pebbles^{i}$
	are equals.
	By determinism of $\trans$, we get that $t_1=t_2$ and hence $t''_1=t''_2$.
	
	For the last case, assume that two transitions are reverse-enabled at 
	$C'=(s,\pebbles',\head')$. 
	If both transitions also appear in $\trans$ then they are equal by reverse-determinism of $\trans$.
	If a transition (or both) is a $t''_{i}$ transition, 
	let $t_{i}=(q_{i},a,\varphi_{i},\op_{i},s)$. 
	We claim that $t_{i}$ is reverse-enabled at $C'$.
	Let $\head=\head'-s$ and
	$\pebbles^{i}=\reverse{\op_{i}}(\pebbles',\head)$, we can easily check that
	$(q,\pebbles^{i},\head)\xra{t_{i}}(s,\pebbles',\head+s)$.
	We get that the transition $t_i$ is reverse-enabled at $C'$ 
	and we conclude by reverse-determinism of $\trans$.
\end{claimproof}

\end{proof}

The rest of the section is devoted to the proof of \cref{thm:composition-general-case}.

As in \cref{sec:simple-case}, we assume that transducer $\trans$ outputs at most one letter
on each transition, and that it writes $\lrend v$ with $v\in\Gamma^{*}$ when running on
$\lrend u$ with $u\in\Sigma^{*}$.  Using \cref{lem:drop-lift-move}, we also assume that
transitions $t'=(q',a',\varphi',\op',s')$ of $\trans'$ do not both drop/lift a pebble and
move the head: if $\op'\neq\nop$ then $s'\in Q'_{0}$.

We fix $u\in\dom{f}\subseteq\Sigma^{*}$ and $v=f(u)\in\Gamma^{*}$.  
We consider the accepting run of $\trans$ on $\lrend u$ producing $\lrend v$:
\begin{equation}
  C_{0}\xra{t_{0}}C_{1}\xra{t_{1}}C_{2}\xra{t_{2}}\cdots C_{N}\xra{t_{N}}C_{N+1} 
  \label{eq:run-T-u}
\end{equation}
where $C_{\ell}=(q_{\ell},\pebbles_{\ell},\head_{\ell})$ 
are configurations and $t_{\ell}\in\delta$ are transitions of $\trans$.

As in \cref{sec:simple-case}, for each position $i\in\epos{v}$, we let $\enc{i}$ be
the index of the transition in the above run producing the letter at position $i$ in
$\lrend v$.
Again, the main idea is to encode a position $i\in\epos{v}$ by the configuration 
$C_{\enc{i}}$. Since $\trans'$ has pebbles, this encoding is used not only for the 
head $\head'$ of $\trans'$, but also for the pebbles dropped by $\trans'$.

Recall that the $\ell$-th configuration in \eqref{eq:run-T-u} is
$C_{\ell}=(q_{\ell},\pebbles_{\ell},\head_{\ell})$.
Consider a configuration $C'=(q',\pebbles',\head')$ of $\trans'$ on $\lrend v$. 
Let $k=|\pebbles'|$, $j_{i}=\enc{\pebbles'_{i}}$ for $1\leq i\leq k$, and $j=\enc{\head'}$. 
Let $[n+1]=\{1,\ldots,n+1\}$.
Define $\enc{C'}=C''=(q'',\pebbles'',\head'')$ where
\begin{align*}
  q'' &= ( q_{j},q',(q_{j_{i}})_{1\leq i\leq k},(1+|\pebbles_{j_{i}}|)_{1\leq i\leq k} )
  \in Q\times Q'\times Q^{k}\times[n+1]^{k} \subseteq Q''_{0} \\
  \pebbles'' &= \pebbles_{j_{1}}\head_{j_{1}} \cdots \pebbles_{j_{k}}\head_{j_{k}} \pebbles_{j} 
  \hspace{15mm}\text{and}\hspace{15mm}
  \head'' = \head_{j} \,.
\end{align*}

The set $Q''_{0}$ of $0$-states of $\trans''$ contains
$\bigcup_{k=0}^{m} Q\times Q'\times Q^{k}\times[n+1]^{k}$.
The initial and final states are 
$q''_{i}=(q_{i},q'_{i},(),())$ and $q''_{f}=(q_{i},q'_{f},(),())$ respectively.
It remains to define the other states in $Q''$, the transition function $\delta''$ and
the output function $\mu''$.

We explain below how a transition $C'\xra{t'}C'_{1}$ of $\trans'$ will be simulated in
$\trans''$ by a sequence of transitions of the form $\enc{C'}\xra{+}\enc{C'_{1}}$.

Consider a state $q''=(q,q',\overline{x},\overline{y})\in Q\times Q'\times 
Q^{k}\times[n+1]^{k}$ for some $0\leq k\leq m$, with 
$\overline{x}=(x_{i})_{1\leq i\leq k}$ and $\overline{y}=(y_{i})_{1\leq i\leq k}$.
Let $t=(q,a,\varphi,\op,s)$ be a transition of $\trans$ which produces an output letter
$a'=\mu(t)\in\Gamma$.  Consider a transition $t'=(q',a',\varphi',\op',s')$ of $\trans'$
which reads the output produced by $t$.  The goal is to synchronize the pair of
transitions $(t,t')$ as we did in \cref{sec:simple-case}.  We first explain how to write a
test $\xi$ checking whether both $t$ and $t'$ are ``enabled'' at a configuration
$C''=(q'',\pebbles'',\head)$ of $\trans''$.  This is intended to be used in particular
when $C''$ is of the form $\enc{C'}$ for some configuration $C'$ of $\trans'$.

We introduce some notation.
For each $0\leq\ell\leq k$, we let $d_\ell=y_{1}+\cdots+y_{\ell}$ which
can be recovered from the state $q''$ of $\trans''$.
Given an offset $d\geq0$ and a test $\varphi$ of $\trans$, we write 
$\varphi^{+d}$ the test obtained by adding $d$ to the pebble indices: 
$\pebble_{i}=\pebble_{j}$ is replaced with $\pebble_{i+d}=\pebble_{j+d}$
and $\head=\pebble_{j}$ is replaced with $\head=\pebble_{j+d}$.
Finally, let $\xi_{0}=(\pebble_{0}=\pebble_{0})\wedge\neg(\pebble_{n+1}=\pebble_{n+1})$
which will be used, shifted by some offset, to check the size of the pebble stack.

\begin{claim}\label{claim:xi}
  Let $C''=(q'',\pebbles'',\head)$ be a configuration of $\trans''$ with
  $q''=(q,q',\overline{x},\overline{y})$.  Let $\psi$ be a test of $\trans$ and $\psi'$ be
  a test of $\trans'$.
  \begin{enumerate}
    \item \label{item:0}  
    $\pebbles'',\head\models\xi_{0}^{+d_{k}}$ if and only if $d_{k}\leq|\pebbles''|\leq d_{k}+n$. 
    
    We assume below that we are in this case and we write $\pebbles''=
    \pebbles^{1}\head^{1} \cdots \pebbles^{k}\head^{k} \pebbles$ with 
    $|\pebbles^{\ell}\head^{\ell}|=y_{\ell}$ for $1\leq\ell\leq k$
    and $0\leq|\pebbles|\leq n$.
  
    \item \label{item:1}
    $\pebbles'',\head\models\psi^{+d_{k}}$ if and only if $\pebbles,\head\models\psi$.
  
    \item \label{item:2}
    We can construct a formula $\overline{\xi}(q'',\psi')$ such that,
    for all $\pebbles',\head'$ with $|\pebbles'|=k$ and
    (a) $\pebbles'_{i}=\pebbles'_{j}$ iff $x_{i}=x_{j}$, $\pebbles^{i}=\pebbles^{j}$ and
    $\head^{i}=\head^{j}$; and
    (b) $\head'=\pebbles'_{j}$ iff $q=x_{j}$, $\pebbles=\pebbles^{j}$ and 
    $\head=\head^{j}$, 
    we have $\pebbles'',\head\models\overline{\xi}(q'',\psi')$ 
    if and only if $\pebbles',\head'\models\psi'$.
  \end{enumerate}
\end{claim}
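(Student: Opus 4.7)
The plan is to handle the three items in order, with item~(3) being the substantial one. Items~(1) and~(2) reduce to unfolding definitions and a small induction, while item~(3) requires translating each atom of $\psi'$ into a finite Boolean combination of atoms available in $\trans''$.

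For item~(1), unfolding yields $\xi_0^{+d_k}=(\pebble_{d_k}=\pebble_{d_k})\wedge\neg(\pebble_{n+1+d_k}=\pebble_{n+1+d_k})$. Using the convention $\pebble_0=\pebble_0\equiv\true$ and the fact that $\pebble_i=\pebble_i$ holds iff the $i$-th pebble is in the stack, the first conjunct expresses $|\pebbles''|\geq d_k$ and the second $|\pebbles''|<d_k+n+1$, giving the claimed bounds. For item~(2), I split $\pebbles''$ as in the claim and proceed by a routine induction on $\psi$: for each atom $\pebble_i=\pebble_j$ or $\head=\pebble_j$ with $1\leq i,j\leq n$, the shifted index $i+d_k$ refers (if at all) into the trailing block $\pebbles$, and $\pebbles''_{d_k+\alpha}=\pebbles_\alpha$ whenever $\alpha\leq|\pebbles|$, so the shifted atom on $(\pebbles'',\head)$ agrees with the original on $(\pebbles,\head)$; Boolean connectives propagate.

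For item~(3), I construct $\overline{\xi}(q'',\psi')$ by induction on $\psi'$, pushing the translation through Boolean connectives and rewriting each atom according to the encoding. Recall that the $\ell$-th virtual pebble of $\trans'$ refers to the configuration $C_{j_\ell}$ whose state is $x_\ell$ and whose pebble stack followed by head position is stored as the sub-block $\pebbles^\ell\head^\ell$ of $\pebbles''$, occupying positions $d_{\ell-1}+1,\dots,d_\ell$. Assumption~(a) identifies equality of two virtual pebbles with equality of their encoded configurations, so for an atom $(\pebble'_i=\pebble'_j)$ with $1\leq i,j\leq k$, I set the translation to $\false$ whenever $x_i\neq x_j$ or $y_i\neq y_j$ (both are state-only tests), and otherwise to
\[\bigwedge_{\alpha=1}^{y_i}\bigl(\pebble_{d_{i-1}+\alpha}=\pebble_{d_{j-1}+\alpha}\bigr),\]
which compares the two sub-blocks coordinate by coordinate. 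Any atom with index $>k$ becomes $\false$ since the corresponding virtual pebble does not exist under $|\pebbles'|=k$.

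For an atom $(\head'=\pebble'_j)$, assumption~(b) gives the equivalent conditions $q=x_j$, $\pebbles=\pebbles^j$ and $\head=\head^j$. The first is a state-only Boolean constant; when it holds, the remaining two translate to
\[\bigwedge_{\alpha=1}^{y_j-1}\bigl(\pebble_{d_{j-1}+\alpha}=\pebble_{d_k+\alpha}\bigr)\wedge(\head=\pebble_{d_j}),\]
supplemented by the size check $(\pebble_{d_k+y_j-1}=\pebble_{d_k+y_j-1})\wedge\neg(\pebble_{d_k+y_j}=\pebble_{d_k+y_j})$ that enforces $|\pebbles|=y_j-1$. The main obstacle is purely the index bookkeeping: each sub-block of $\pebbles''$ must be located at the correct offset $d_{\ell-1}$ and length $y_\ell$, and the size of the trailing block $\pebbles$ must be expressible. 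Since these offsets and lengths are entirely determined by the $\overline{y}$-component of $q''$, every translated atom is a valid test of $\trans''$ and the construction goes through.
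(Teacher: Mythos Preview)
Your proof is correct and follows essentially the same approach as the paper: items~(1) and~(2) are dispatched by unfolding the shift, and item~(3) is handled by the same atom-by-atom translation, comparing the encoded sub-blocks coordinatewise and testing $\head=\pebble_{d_j}$ for the head component. Your extra lower-bound conjunct $(\pebble_{d_k+y_j-1}=\pebble_{d_k+y_j-1})$ in the $(\head'=\pebble'_j)$ case is harmless but redundant, since a too-short trailing block already falsifies the last equality atom in your conjunction.
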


\begin{claimproof}
  \cref{item:0,item:1} are clear. For \cref{item:2} we let $\overline{\xi}(q'',\psi')$ be the 
  formula $\psi'$ in which we replace each atom of the form $\head'=\pebble'_{i}$ by
  \begin{itemize}
    \item  $\false$ if $i>k$ or $x_{i}\neq q$,
  
    \item  $\Big(\bigwedge_{\ell=1}^{y_{i}-1} \pebble_{\ell+d_{k}}=\pebble_{\ell+d_{i}-1} 
    \Big) \wedge (\head=\pebble_{d_{i}}) \wedge 
    \neg(\pebble_{y_{i}+d_{k}}=\pebble_{y_{i}+d_{k}})$ otherwise;
  \end{itemize}
  and each atom of the form $\pebble'_{i}=\pebble'_{j}$ by
  \begin{itemize}
    \item  $\false$ if $i>k$ or $j>k$ or $x_{i}\neq x_{j}$ or $y_{i}\neq y_{j}$,
  
    \item  $\bigwedge_{\ell=1}^{y_{i}} \pebble_{\ell+d_{i-1}}=\pebble_{\ell+d_{j}-1}$ 
    otherwise.
    \claimqedhere
  \end{itemize}
\end{claimproof}

Let $\xi=(\xi_{0}\wedge\varphi\wedge\overline{\op})^{+d_{k}}
\wedge\overline{\xi}(q'',\varphi'\wedge\overline{\op'})$.
From \cref{claim:xi}, assuming that $C''=\enc{C'}$, we see that 
$C''\models\xi$ if and only if $t$ is enabled at $C=(q,\pebbles,\head)$ and $t'$ is 
enabled at $C'=(q',\pebbles',\head')$.
We explain now how to synchronize the pair of transitions $(t,t')$ from state $q''$.
Given an offset $d\geq0$ and an operation $\op$ of $\trans$, we write $\op^{+d}$ the
operation shifted by $d$: $\drop{i}^{+d}=\drop{i+d}$, $\lift{i}^{+d}=\lift{i+d}$ and
$\nop^{+d}=\nop$.

\noindent\textbf{Case $\nop$:} 
We have $t'=(q',a',\varphi',\nop,s')$ with $s'\in Q'$.
We check whether the pair $(t,t')$ is enabled with $\xi$ and we implement the move 
induced by $s'$ of the head of $\trans'$ on $\# v$ as we did in \cref{sec:simple-case}.
To do so, we introduce the following transition with $\mu''(t'')=\mu'(t')$:
\begin{align}
  t''=(q,q',\overline{x},\overline{y}) & \xra{a,\xi,\op^{+d_{k}}}
  (\hat{s},s',\overline{x},\overline{y}) 
  &&\text{if } s'\in Q'_{+1}
  \tag{\textsf{gtr-a}}\label{eq:general-tr+}
  \\
  t''=(q,q',\overline{x},\overline{y}) & \xra{a,\xi,\nop} 
  (\hat{q},s',\overline{x},\overline{y}) 
  &&\text{if } s'\in Q'_{-1}
  \tag{\textsf{gtr-b}}\label{eq:general-tr-}
  \\
  t''=(q,q',\overline{x},\overline{y}) & \xra{a,\xi,\nop} 
  (q,s',\overline{x},\overline{y})
  &&\text{if } s'\in Q'_{0}
  \tag{\textsf{gtr-c}}\label{eq:general-tr0}
\end{align}
Here, $\widehat{Q}=\{\hat{q}\mid q\in Q\}$ is a disjoint copy of $Q$.
The polarity of a simulation state $(\hat{q},q',\overline{x},\overline{y})
\in\widehat{Q}\times(Q'_{-1}\cup Q'_{+1})\times Q^{k}\times[n+1]^{k}$ with $0\leq k\leq m$
is the product of the polarity of $q$ and the polarity of $q'$.
When in a simulation state $(\hat{q},q',\overline{x},\overline{y})$, we simulate 
$\trans$ forward or backward depending on the polarity of $q'$ using transitions 
producing $\varepsilon$ until we reach a transition producing a symbol from $\Gamma$.
For each transition $t=(q,a,\varphi,\op,s)$ of $\trans$ with $\mu(t)=\varepsilon$ we
introduce the following transition $t''$ in $\trans''$ with $\mu''(t'')=\varepsilon$:
\begin{align}
  t''=(\hat{q},q',\overline{x},\overline{y}) & \xra{a,\varphi^{+d_{k}},\op^{+d_{k}}} 
  (\hat{s},q',\overline{x},\overline{y}) &&\text{if } q'\in Q'_{+1}
  \tag{\textsf{gmv-a}}\label{eq:general-mv+}
  \\
  t''=(\hat{s},q',\overline{x},\overline{y}) & \xra{a,(\reverse{(\varphi,\op)})^{+d_{k}}} 
  (\hat{q},q',\overline{x},\overline{y}) &&\text{if } q'\in Q'_{-1}
  \tag{\textsf{gmv-b}}\label{eq:general-mv-}
\end{align}
and for each transition $t=(q,a,\varphi,\op,s)$ of $\trans$ with $\mu(t)\in\Gamma$ we
introduce the following transition $t''$ in $\trans''$ with $\mu''(t'')=\varepsilon$ in 
order to switch back from the simulation mode to the synchronization mode:
\begin{align}
  t''=(\hat{q},q',\overline{x},\overline{y}) & \xra{a,(\varphi\wedge\overline{\op})^{+d_{k}},\nop} 
  (q,q',\overline{x},\overline{y}) 
  &&\text{if } q'\in Q'_{+1}
  \tag{\textsf{gsw-a}}\label{eq:general-sw+}
  \\
  t''=(\hat{s},q',\overline{x},\overline{y}) & \xra{a,(\reverse{(\varphi,\op)})^{+d_{k}}}
  (q,q',\overline{x},\overline{y}) 
  &&\text{if } q'\in Q'_{-1}
  \tag{\textsf{gsw-b}}\label{eq:general-sw-}
\end{align}
As in the proof of \cref{thm:composition-simple-case}, we can show that $\trans''$ is 
reversible at simulation states of the form $(\hat{q},q',\overline{x},\overline{y})$.
We can also prove similarly the following analog of \cref{claim:simple-simulate}.

\begin{claim}\label{claim:genearl-simulate}
  Assuming that the operation of $t'$ is $\nop$, there is a transition of $\trans'$
  $C'=(q',\pebbles',\head')\xra{t'}C'_{1}=(s',\pebbles',\head'+s')$ on
  $\lrend v$ if and only if there is a nonempty run $\enc{C'}\xra{+}\enc{C'_{1}}$ of
  $\trans''$ on $\lrend u$ where intermediate states are simulation states.
\end{claim}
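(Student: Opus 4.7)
The plan is to imitate the proof of \cref{claim:simple-simulate}, using \cref{claim:xi} to bridge the gap between the simpler encoding there and the richer encoding here. Two observations will drive the argument. First, since the operation of $t'$ is $\nop$, the pebble stack $\pebbles'$ is unchanged from $C'$ to $C'_{1}$, so the prefix $\pebbles^{1}\head^{1}\cdots\pebbles^{k}\head^{k}$ of the pebble stack of $\trans''$ that encodes the pebbles of $\trans'$ is identical in $\enc{C'}$ and $\enc{C'_{1}}$, and only the suffix (encoding the head of $\trans'$) needs to evolve. Second, every pebble operation and pebble test used during the simulation of $\trans$ is shifted by $d_{k}$, so it only touches that suffix.

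For the forward direction, I would set $\ell=\enc{\head'}$ and $\ell'=\enc{\head'+s'}$, and take $t_{\ell}=(q_{\ell},a,\varphi,\op,q_{\ell+1})$ to be the transition of $\trans$ in \eqref{eq:run-T-u} producing the letter at position $\head'$ of $\lrend v$. Writing $t'=(q',a',\varphi',\nop,s')$ and $q''=(q_{\ell},q',\overline{x},\overline{y})$, the state of $\enc{C'}$ is exactly $q''$, which is precisely the source state of the synchronized transitions constructed from the pair $(t_{\ell},t')$. The first real step is to verify that the composite test $\xi$ attached to that transition holds at $\enc{C'}$: the conjunct $\xi_{0}^{+d_{k}}$ is immediate because the suffix has length $|\pebbles_{\ell}|\le n$; the conjunct $(\varphi\wedge\overline{\op})^{+d_{k}}$ follows from \cref{claim:xi}(2) since $t_{\ell}$ is enabled at $C_{\ell}$; and $\overline{\xi}(q'',\varphi'\wedge\overline{\op'})$ follows from \cref{claim:xi}(3) since $t'$ is enabled at $C'$. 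Once $\xi$ is validated, the three cases mirror those of \cref{claim:simple-simulate}: for $s'\in Q'_{0}$, fire a single \eqref{eq:general-tr0} and land directly in $\enc{C'_{1}}$; for $s'\in Q'_{+1}$, fire \eqref{eq:general-tr+}, then one \eqref{eq:general-mv+} per intermediate $t_{\ell+1},\ldots,t_{\ell'-1}$ (all producing $\varepsilon$), then \eqref{eq:general-sw+} for $t_{\ell'}$; and the case $s'\in Q'_{-1}$ is symmetric using \eqref{eq:general-tr-}, \eqref{eq:general-mv-}, \eqref{eq:general-sw-}. The shifted operations and tests evolve the suffix of the pebble stack exactly as $\pebbles_{\ell}\to\cdots\to\pebbles_{\ell'}$ along \eqref{eq:run-T-u}, so we land in $\enc{C'_{1}}$.

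For the converse, I will observe that from a synchronization state only \eqref{eq:general-tr+}, \eqref{eq:general-tr-}, \eqref{eq:general-tr0} can fire, and from a simulation state only \eqref{eq:general-mv+}, \eqref{eq:general-mv-}, \eqref{eq:general-sw+}, \eqref{eq:general-sw-}, with the switch transitions being the only way back to a synchronization state. Hence any nonempty run $\enc{C'}\xra{+}\enc{C'_{1}}$ through simulation states only must consist of either a single \eqref{eq:general-tr0}, or a \eqref{eq:general-tr+} followed by some (possibly zero) \eqref{eq:general-mv+} transitions followed by a \eqref{eq:general-sw+}, or the symmetric downward variant. Reading off the initial and terminal transitions of this sequence and combining them via \cref{claim:xi} recovers a pair $(t_{\ell},t')$ and a valid transition $C'\xra{t'}C'_{1}$ of $\trans'$. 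The main obstacle I anticipate is justifying the invocation of \cref{claim:xi}(3), whose hypotheses (a)--(b) assert that pebble equalities in $\trans'$ are faithfully reflected by equalities of the encoded $\trans$-configurations. This reduces to the injectivity of $\ell\mapsto C_{\ell}$ along \eqref{eq:run-T-u}---already a key fact exploited in \cref{claim:simple-simulate}---which holds because $\trans$ is deterministic and has no transitions (other than $t_{f,i}$) leaving $q_{f}$. The remaining work is a mechanical shift-by-$d_{k}$ adaptation of the simple case.
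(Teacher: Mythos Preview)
Your proposal is correct and matches the paper's own approach: the paper does not spell out a proof of this claim, merely stating that it ``can [be proved] similarly'' to \cref{claim:simple-simulate}, and your write-up is a faithful expansion of that hint. In particular, your identification of the one nontrivial extra ingredient---that invoking \cref{claim:xi}(\ref{item:2}) requires the hypotheses (a)--(b), which in turn reduce to the injectivity of $\ell\mapsto C_{\ell}$ along the accepting run of $\trans$---is exactly the point that distinguishes the general case from the simple one, and the paper already recorded this injectivity when setting up \eqref{eq:run-T-u}.
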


\noindent\textbf{Case $\lift{}$:} 
We have $t'=(q',a',\varphi',\lift{k'},s')$ with $1\leq k'\leq m$ and $s'\in Q'_{0}$ from
our assumption on $\trans'$ (\cref{lem:drop-lift-move}).
We still check that the pair $(t,t')$ is enabled with 
$\xi=(\xi_{0}\wedge\varphi\wedge\overline{\op})^{+d_{k}}
\wedge\overline{\xi}(q'',\varphi'\wedge\overline{\lift{k'}})$.
Thanks to $\overline{\xi}(q'',\overline{\lift{k'}})$, we get $k'=k$, $q=x_{k}$,
$\pebbles^{k}=\pebbles$ and $\head^{k}=\head$.
We use the lift-gadget given in \cref{fig:composition-lift} to pop the top $y_{k}$
pebbles, i.e., $\head^{k}\pebbles$ (in reverse order).
We enter and leave the lift-gadget with the following transitions where
$s''=(q,s',(x_{1},\ldots,x_{k-1}),(y_{1},\ldots,y_{k-1}))\in Q''_{0}$,
$\xi'=(\xi_{0}\wedge\varphi\wedge\overline{\op})^{+(d_{k}-y_{k})}
\wedge\overline{\xi}(s'',\op'(\varphi')\wedge\overline{\reverse{\op'}})$,
$\mu''(t''_{\mathsf{first}})=\mu'(t')$ and $\mu''(t''_{\mathsf{last}})=\varepsilon$:
\begin{align}
  t''_{\mathsf{first}} &= q'' \xra{a,\xi,\nop} (q'',-1)
  \tag{\textsf{lift-a}}\label{eq:general-lift-first}
  \\
  t''_{\mathsf{last}} &= (q'',0) \xra{a,\xi',\nop} s'' 
  \tag{\textsf{lift-b}}\label{eq:general-lift-last}
\end{align}

\begin{figure}[tbh]
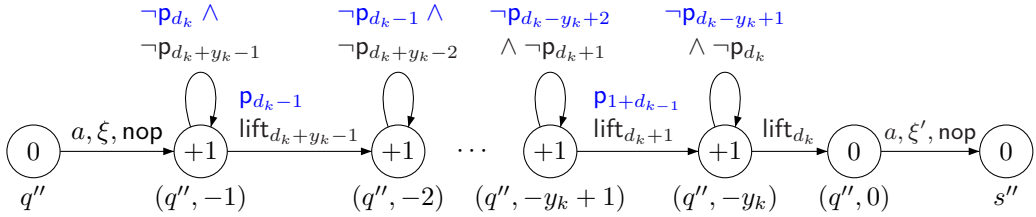

  \centering
  \gusepicture{composition-lift}
  \caption{Simulation of $\lift{k}$. 
  The first state is
  $q''=(q,q',(x_{1},\ldots,x_{k}),(y_{1},\ldots,y_{k}))\in Q''_{0}$
  and $d_{k}=y_{1}+\cdots+y_{k}$.
  The internal states are of the from $(q'',-\ell)\in Q''_{+1}$ for $0\leq\ell\leq y_{k}$.
  The last state is $s''=(q,s',(x_{1},\ldots,x_{k-1}),(y_{1},\ldots,y_{k-1}))\in Q''_{0}$.
  The test on the first transition is $\xi=(\xi_{0}\wedge\varphi\wedge\overline{\op})^{+d_{k}}
  \wedge\overline{\xi}(q'',\varphi'\wedge\overline{\op'})$.
  The test on the last transition is 
  $\xi'=(\xi_{0}\wedge\varphi\wedge\overline{\op})^{+(d_{k}-y_{k})}
  \wedge\overline{\xi}(s'',\op'(\varphi')\wedge\overline{\reverse{\op'}})$.
  }
  \label{fig:composition-lift}
\end{figure}

The blue tests are added to ensure reverse determinism at the internal states.
Also, when the before last transition is taken we see $\pebble_{d_{k}}$.  Hence, the head
is on the same position at the beginning and at the end of the gadget.

\noindent\textbf{Case $\drop{}$:} The operation of transition $t'$ is 
$\op'=\drop{k'}$ ($1\leq k'\leq m$).
We still check that the pair $(t,t')$ is enabled with 
$\xi=(\xi_{0}\wedge\varphi\wedge\overline{\op})^{+d_{k}}
\wedge\overline{\xi}(q'',\varphi'\wedge\overline{\drop{k'}})$.
Thanks to $\overline{\xi}(q'',\overline{\drop{k'}})$, we get $k'=k+1$.
We use the drop-gadget given in \cref{fig:composition-drop} to push
$\head$ and then $\pebbles$ (in this order) on the pebble stack. 
We enter and leave the drop-gadget with the following transitions where
$1\leq z\leq n+1$,
$s''=(q,s',(x_{1},\ldots,x_{k},q),(y_{1},\ldots,y_{k},z))\in Q''_{0}$,
$\xi'=(\xi_{0}\wedge\varphi\wedge\overline{\op})^{+(d_{k}+z)}
\wedge\overline{\xi}(s'',\op'(\varphi')\wedge\overline{\reverse{\op'}})$,
$\mu''(t''_{\mathsf{first}})=\mu'(t')$ and $\mu''(t''_{\mathsf{last}})=\varepsilon$:
\begin{align}
  t''_{\mathsf{first}} &= q'' \xra{a,\xi,\drop{d_{k}+z}} (q'',z,1)
  \tag{\textsf{drop-a}}\label{eq:general-drop-first}
  \\
  t''_{\mathsf{last}} &= (q'',z,z) \xra{a,\xi',\nop} s'' 
  \tag{\textsf{drop-b}}\label{eq:general-drop-last}
\end{align}

\begin{figure}[tbh]
  \centering
  \gusepicture{composition-drop}
  \caption{Simulation of $\drop{k+1}$ with $z$ any number such that $1\leq z\leq n+1$. \\
  The first state is $q''=(q,q',(x_{1},\ldots,x_{k}),(y_{1},\ldots,y_{k}))\in Q''_{0}$ 
  and $d_{k}=y_{1}+\cdots+y_{k}$. 
  The internal states are of the from $(q'',z,\ell)\in Q''_{+1}$ for $1\leq\ell\leq z$.
  The last state is $s''=(q,s',(x_{1},\ldots,x_{k},q),(y_{1},\ldots,y_{k},z))\in Q''_{0}$.
  The test on the first transition is $\xi=(\xi_{0}\wedge\varphi\wedge\overline{\op})^{+d_{k}}
  \wedge\overline{\xi}(q'',\varphi'\wedge\overline{\op'})$.
  The test on the last transition is 
  $\xi'=(\xi_{0}\wedge\varphi\wedge\overline{\op})^{+(d_{k}+z)}
  \wedge\overline{\xi}(s'',\op'(\varphi')\wedge\overline{\reverse{\op'}})$.
  }
  \label{fig:composition-drop}
\end{figure}

On the first transition of the gadget, the test $\xi$ makes sure that the pebble stack is 
of the form $\pebbles''=\pebbles^{1}\head^{1} \cdots \pebbles^{k}\head^{k} \pebbles$ with 
$|\pebbles^{\ell}\head^{\ell}|=y_{\ell}$ for $1\leq\ell\leq k$ and $0\leq|\pebbles|\leq n$.
The operation $\drop{d_{k}+z}$ on the first transition is only possible if 
$z=|\pebbles|+1$ and it allows to determine the size of $\pebbles$.
Since $\reverse{\op'}=\lift{k+1}$, the test $\xi'$ on the last transition contains 
$\overline{\xi}(s'',\overline{\lift{k+1}})$ which contains the test $\pebble_{d_{k}+z}$.
This test makes sure that at the end of the gadget, the head of $\trans''$ is on the same
position of the input word $\#u$ as it was at the beginning.
The drop gadget is deterministic at the internal states. The blue tests are added
to make this gadget reverse deterministic at the internal states.

We conclude by proving that the construction is preserves determinism and reverse-determinism.

\begin{claim}\label{claim:general-det}
	The transducer $\trans''$ from \cref{thm:composition-general-case} is deterministic.
\end{claim}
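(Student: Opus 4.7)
The plan is to do a case analysis on the type of the source state $q''$ of the two transitions $t''_1,t''_2$ that are assumed to be simultaneously enabled at some configuration $C''=(q'',\pebbles'',\head)$ reading a letter $a$. There are three kinds of sources: simulation states in $\widehat{Q}\times(Q'_{-1}\cup Q'_{+1})\times Q^{k}\times[n+1]^{k}$, synchronization states in $Q\times Q'\times Q^{k}\times[n+1]^{k}\subseteq Q''_{0}$, and internal states of the lift/drop gadgets of \cref{fig:composition-lift,fig:composition-drop}.

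For simulation states $q''=(\hat{q},q',\overline{x},\overline{y})$, the outgoing transitions are of type \eqref{eq:general-mv+}--\eqref{eq:general-sw-}, constructed from transitions of $\trans$ using the shifted test $\varphi^{+d_{k}},\op^{+d_{k}}$ (when $q'\in Q'_{+1}$) or the reversed test $(\reverse{(\varphi,\op)})^{+d_{k}}$ (when $q'\in Q'_{-1}$). By \cref{claim:xi}(\ref{item:1}), such a test is satisfied at $C''$ iff the underlying transition of $\trans$ is enabled (resp.\ reverse-enabled) at the top configuration $(q,\pebbles,\head)$. Determinism of $\trans$ (resp.\ reverse-determinism) then yields $t_{1}=t_{2}$, so $t''_{1}=t''_{2}$ — essentially the same reasoning as the corresponding part of \cref{claim:simple-det}.

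For synchronization states $q''=(q,q',\overline{x},\overline{y})$, the outgoing transitions are of type \eqref{eq:general-tr+}--\eqref{eq:general-tr0}, \eqref{eq:general-lift-first} or \eqref{eq:general-drop-first}, each arising from a pair $(t_{i},t'_{i})$ with $\mu(t_{i})=a'_{i}$. Each test has the form $\xi_{i}=(\xi_{0}\wedge\varphi_{i}\wedge\overline{\op_{i}})^{+d_{k}}\wedge\overline{\xi}(q'',\varphi'_{i}\wedge\overline{\op'_{i}})$. The $\trans$-part, by \cref{claim:xi}(\ref{item:0}--\ref{item:1}), forces both $t_{1}$ and $t_{2}$ to be enabled at $(q,\pebbles,\head)$, so $t_{1}=t_{2}$ by determinism of $\trans$; in particular $a'_{1}=a'_{2}$. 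For the $\trans'$-part, using the equivalence classes encoded by $\overline{x}$ and $\overline{y}$ together with the structure of $\pebbles''$, one may exhibit a configuration $(\pebbles',\head')$ of $\trans'$ satisfying conditions (a) and (b) of \cref{claim:xi}(\ref{item:2}); then by that claim, both $t'_{1}$ and $t'_{2}$ are enabled at $(q',\pebbles',\head')$, so $t'_{1}=t'_{2}$ by determinism of $\trans'$. Consequently $\op'_{1}=\op'_{2}$, ruling out clashes between the $\nop$, $\lift{}$ and $\drop{}$ cases; and within the $\drop{}$ case the parameter $z$ is pinned down by the size constraint $|\pebbles''|=d_{k}+z-1$ imposed by the operation $\drop{d_{k}+z}$ on \eqref{eq:general-drop-first}. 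Hence $t''_{1}=t''_{2}$.

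For internal states of the gadgets, determinism follows by direct inspection: at every internal state the self-loop carries an atomic guard $\neg\pebble_{d_{k}+\ell}$ whereas the exit transition carries the complementary guard $\pebble_{d_{k}+\ell}$ (for the drop gadget of \cref{fig:composition-drop}), and symmetrically for \cref{fig:composition-lift}; similarly the exit from the last internal state of the drop (resp.\ lift) gadget includes the test $\pebble_{d_{k}+z}$ (resp.\ $\pebble_{d_{k}}$) coming from $\overline{\reverse{\op'}}$ inside $\xi'$, which is incompatible with the loop guard. The main obstacle in the above plan is the careful use of \cref{claim:xi}(\ref{item:2}) in Case~2, where one must argue that for any $(\pebbles'',\head)$ for which the encoded test fires, there really exists a witnessing pair $(\pebbles',\head')$ respecting the equality classes prescribed by $q''$; this is however unproblematic since those classes are realizable on a word of sufficient length.
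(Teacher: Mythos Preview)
Your case split and the treatment of simulation and synchronization states match the paper's proof essentially verbatim, including the use of \cref{claim:xi} to reduce to determinism of $\trans$ and then of $\trans'$.

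There is, however, a genuine gap in your handling of the gadget internal states. You only argue that at each such state the self-loop and ``the exit transition'' are mutually exclusive. But at the terminal internal states --- $(q'',0)$ for the lift-gadget and $(q'',z,z)$ for the drop-gadget --- there is not a single exit transition: there is one transition of type \eqref{eq:general-lift-last} (resp.\ \eqref{eq:general-drop-last}) for \emph{each} pair $(t,t')$ with $\mu(t)=a'$ and $\op'=\lift{k}$ (resp.\ $\op'=\drop{k+1}$). You must still show that two such exit transitions $t''_{1},t''_{2}$ cannot be simultaneously enabled. The paper does this in its Cases~3 and~4 by essentially re-running the synchronization-state argument with the shifted test $\xi'$: from $\pebbles'',\head\models\xi'_{i}$ one extracts $\pebbles,\head\models\varphi_{i}\wedge\overline{\op_{i}}$ to get $t_{1}=t_{2}$ by determinism of $\trans$, and then $\pebbles',\head'\models\op'_{i}(\varphi'_{i})\wedge\overline{\reverse{\op'_{i}}}$, whence $t'_{i}$ is enabled at $(q',\reverse{\op'_{i}}(\pebbles',\head'),\head')$ and $t'_{1}=t'_{2}$ by determinism of $\trans'$. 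Without this step your argument does not rule out, e.g., two distinct $\drop{k+1}$-transitions of $\trans'$ (reading the same $a'$, from the same $q'$) giving rise to two distinct enabled exits from $(q'',z,z)$.

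A minor related point: for the lift-gadget, state $(q'',0)$ carries no self-loop at all, so your sentence about $\xi'$ being ``incompatible with the loop guard'' there is misplaced; the only thing to prove at $(q'',0)$ is precisely the multiple-exit case you omitted.
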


\begin{claimproof}
	Consider two transitions $t''_{1}$ and $t''_{2}$ of $\trans''$ reading some letter
	$a\in\Sigma$ and which are both enabled at some configuration $C''$ of $\trans''$.
	
\noindent\textbf{Case 1:} $C''=(q'',\pebbles'',\head)$ with
	$q''=(q,q',\overline{x},\overline{y})\in Q\times Q'\times
	Q^{k}\times[n+1]^{k}\subseteq Q''_{0}$ and $0\leq k\leq m$.
	For $i\in\{1,2\}$, consider the transitions $t_{i}=(q,a,\varphi_{i},\op_{i},s_{i})$ and 
	$t'_{i}=(q',a'_{i},\varphi'_{i},\op'_{i},s'_{i})$ used to construct $t''_{i}$
	using one of (\cref{eq:general-tr+,eq:general-tr-,eq:general-tr0,eq:general-lift-first,eq:general-drop-first}).
	
	Let $\xi_{i}=(\xi_{0}\wedge\varphi_{i}\wedge\overline{\op_{i}})^{+d_{k}}
	\wedge\overline{\xi}(q'',\varphi'_{i}\wedge\overline{\op'_{i}})$ be the test of 
	$t''_{i}$. Since $\pebbles'',\head\models\xi_{0}^{+d_{k}}$ we may write
	$\pebbles''=\pebbles^{1}\head^{1} \cdots \pebbles^{k}\head^{k} \pebbles$ with 
	$|\pebbles^{\ell}\head^{\ell}|=y_{\ell}$ for $1\leq\ell\leq k$ and $0\leq|\pebbles|\leq n$.
	By \cref{claim:xi}, we get first 
	$\pebbles,\head\models\varphi_{i}\wedge\overline{\op_{i}}$ and we deduce that $t_{i}$ 
	is enabled at $C=(q,\pebbles,\head)$. It follows that $t_{1}=t_{2}$ by determinism of 
	$\trans$. Hence, $t'_{1}$ and $t'_{2}$ both read the same letter $a'=\mu(t_{1})$ and 
	start from the same state $q'$. Let $\pebbles',\head'$ be constructed as in 
	\cref{item:2} of \cref{claim:xi}. We get 
	$\pebbles',\head'\models\varphi'_{i}\wedge\overline{\op'_{i}}$ and we deduce that 
	$t'_{i}$ is enabled at $C'=(q',\pebbles',\head')$.  It follows that $t'_{1}=t'_{2}$ by
	determinism of $\trans'$. We conclude that $t''_{1}=t''_{2}$.
	
\noindent\textbf{Case 2:} $C''=(q'',\pebbles'',\head)$ with
	$q''\in \widehat{Q}\times Q'\times Q^{k}\times[n+1]^{k}$ and $0\leq k\leq m$.
	
	The proof is similar to the proof for the corresponding case in \cref{claim:simple-det}.
	
\noindent\textbf{Case 3:} Internal state of the lift-gadget:
	$C''=((q'',-\ell),\pebbles'',\head)$ with
	$q''=(q,q',\overline{x},\overline{y})\in Q\times Q'\times 
	Q^{k}\times[n+1]^{k}$, $1\leq k\leq m$ and $0\leq\ell\leq y_{k}$.
	
	If $\ell\neq 0$ there are only two transitions starting from state $(q'',-\ell)$, 
	the self-loop with test $\neg\pebble_{d_{k}+y_{k}-\ell}$ and operation $\nop$ and the
	transition with operation $\lift{d_{k}+y_{k}-\ell}$. These two cannot be simultaneously 
	enabled.
	
	Assume that $\ell=0$. 
	For $i\in\{1,2\}$, consider the transitions $t_{i}=(q,a,\varphi_{i},\op_{i},s_{i})$ and 
	$t'_{i}=(q',a'_{i},\varphi'_{i},\op'_{i},s'_{i})$ used to construct $t''_{i}$
	using \eqref{eq:general-lift-last}.
	Let $\xi'_{i}=(\xi_{0}\wedge\varphi_{i}\wedge\overline{\op_{i}})^{+(d_{k}-y_{k})}
	\wedge\overline{\xi}(s'',\op'_{i}(\varphi'_{i})\wedge\overline{\reverse{\op_{i}'}})$ be
	the test of $t''_{i}$.  
	Since $\pebbles'',\head\models\xi_{0}^{+(d_{k}-y_{k})}$ we may write
	$\pebbles''=\pebbles^{1}\head^{1} \cdots \pebbles^{k-1}\head^{k-1} \pebbles$ with 
	$|\pebbles^{\ell}\head^{\ell}|=y_{\ell}$ for $1\leq\ell<k$ and $0\leq|\pebbles|\leq n$.
	By \cref{claim:xi}, we get first 
	$\pebbles,\head\models\varphi_{i}\wedge\overline{\op_{i}}$ and we deduce that $t_{i}$ 
	is enabled at $C=(q,\pebbles,\head)$. It follows that $t_{1}=t_{2}$ by determinism of 
	$\trans$. Hence, $t'_{1}$ and $t'_{2}$ both read the same letter $a'=\mu(t_{1})$ and 
	start from the same state $q'$. Let $\pebbles',\head'$ be constructed from 
	$\pebbles'',\head$ as in \cref{item:2} of \cref{claim:xi}. 
	We get 
	$\pebbles',\head'\models\op'_{i}(\varphi'_{i})\wedge\overline{\reverse{\op_{i}'}}$.
	We deduce that $t'_{i}$ is enabled at $C'=(q',\reverse{\op_{i}'}(\pebbles',\head'),\head')$.  
	It follows that $t'_{1}=t'_{2}$ by determinism of $\trans'$. 
	We conclude that $t''_{1}=t''_{2}$.
	
\noindent\textbf{Case 4:} Internal state of the drop-gadget:\\
	$C''=((q'',z,\ell),\pebbles'',\head)$ with
	$q''=(q,q',\overline{x},\overline{y})\in Q\times Q'\times 
	Q^{k}\times[n+1]^{k}$, $0\leq k<m$, $1\leq z\leq n+1$ and $1\leq\ell\leq z$.
	
	If $\ell\neq z$ there are only two transitions starting from state $(q'',z,\ell)$, 
	the self-loop with test $\neg\pebble_{d_{k}+\ell}$ and the transition with test
	$\pebble_{d_{k}+\ell}$.  These two cannot be simultaneously enabled.
	
	Assume that $\ell=z$. 
	As before, the self-loop with test $\neg\pebble_{d_{k}+z}$ cannot be simultaneously 
	enabled with a transition $((q'',z,z),a,\xi',\nop,s'')$ since the test $\xi'$ 
	checks $\pebble_{d_{k}+z}$ inside $\overline{\xi}(s'',\overline{\lift{k+1}})$.
	
	Assume now that $t''_{i}=((q'',z,z),a,\xi'_{i},\nop,s''_{i})$ for $i\in\{1,2\}$.
	Let $t_{i}=(q,a,\varphi_{i},\op_{i},s_{i})$ and
	$t'_{i}=(q',a'_{i},\varphi'_{i},\drop{k+1},s'_{i})$ 
	be the transitions used to construct $t''_{i}$ using \eqref{eq:general-drop-last}.
	
	The test on $t''_{i}$ is
	$\xi'_{i}=(\xi_{0}\wedge\varphi_{i}\wedge\overline{\op_{i}})^{+d_{k}+z}
	\wedge\overline{\xi}(s''_{i},\drop{k+1}(\varphi'_{i})\wedge\overline{\lift{k+1}})$.
	Since $\pebbles'',\head\models\xi_{0}^{+d_{k}+z}$ we may write
	$\pebbles''=\pebbles^{1}\head^{1} \cdots \pebbles^{k+1}\head^{k+1} \pebbles$ with 
	$|\pebbles^{\ell}\head^{\ell}|=y_{\ell}$ for $1\leq\ell\leq k$,
	$|\pebbles^{k+1}\head^{k+1}|=z$ and $0\leq|\pebbles|\leq n$.
	Since $\pebbles'',\head\models(\varphi_{i}\wedge\overline{\op_{i}})^{+d_{k}+z}$,
	we get $\pebbles,\head\models\varphi_{i}\wedge\overline{\op_{i}}$ by \cref{claim:xi}. 
	We deduce that $t_{1}$ and $t_{2}$ are both enabled at $C=(q,\pebbles,\head)$. 
	It follows that $t_{1}=t_{2}$ by determinism of $\trans$. 
	Hence, $t'_{1}$ and $t'_{2}$ both read the same letter $a'=\mu(t_{1})$ and 
	start from the same state $q'$.
	Let $\pebbles',\head'$ be constructed from $\pebbles'',\head$ as in 
	\cref{item:2} of \cref{claim:xi}.  Using $\pebbles'',\head\models
	\overline{\xi}(s''_{i},\drop{k+1}(\varphi'_{i})\wedge\overline{\lift{k+1}})$, we get
	$\pebbles',\head'\models\drop{k+1}(\varphi'_{i})\wedge\overline{\lift{k+1}}$.
	We deduce that $t'_{i}$ is enabled at $C'=(q',\lift{k+1}(\pebbles',\head'),\head')$.
	It follows that $t'_{1}=t'_{2}$ by determinism of $\trans'$.  We conclude that
	$t''_{1}=t''_{2}$.
\end{claimproof}

\begin{claim}\label{claim:general-rev-det}
	If $\trans'$ from \cref{thm:composition-general-case} is reverse-deterministic then $\trans''$ is reverse-deterministic.
\end{claim}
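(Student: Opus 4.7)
My plan follows the structure of Claim~\ref{claim:general-det} by a case analysis on the type of the destination state of two transitions $t''_1, t''_2$ of $\trans''$ reading some letter $a$ and simultaneously reverse-enabled at a configuration $C'' = (q'', \pebbles'', \head)$. The destination $q''$ is either a synchronization state, a simulation state, or an internal state of the lift- or drop-gadget. In each case I would identify the rules that could produce an incoming transition, extract the underlying $\trans$-transition and (when applicable) $\trans'$-transition, show that these are simultaneously enabled or reverse-enabled at a common configuration obtained via \cref{claim:xi} and Equation~\eqref{eq:reverse}, and conclude using the (reverse-)determinism of $\trans$ and $\trans'$ (recall that $\trans$ is reversible and, by hypothesis of the claim, $\trans'$ is both deterministic and reverse-deterministic).

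For a target simulation state $(\hat{q}, q', \overline{x}, \overline{y})$, the incoming rules are \eqref{eq:general-tr+} or \eqref{eq:general-mv+} when $q' \in Q'_{+1}$, and \eqref{eq:general-tr-} or \eqref{eq:general-mv-} when $q' \in Q'_{-1}$. The guard of $t''_i$ is built from an underlying $\trans$-transition $t_i$ whose destination is the $\trans$-component of $\hat{q}$; reverse-enabledness of $t''_i$ at $C''$ forces $t_i$ to be reverse-enabled at a common $\trans$-configuration obtained by the $d_k$-shift of \cref{claim:xi} (using Equation~\eqref{eq:reverse} to transport the $\reverse{(\varphi,\op)}$-tagged guards of \eqref{eq:general-mv-}/\eqref{eq:general-tr-} back to enabledness of $t_i$ in $\trans$). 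Reverse-determinism of $\trans$ then gives $t_1=t_2$. Since rules \eqref{eq:general-tr+}/\eqref{eq:general-tr-} additionally require $\mu(t_i)\in\Gamma$ while \eqref{eq:general-mv+}/\eqref{eq:general-mv-} require $\mu(t_i)=\varepsilon$, only one rule type is possible, and when it is of the synchronizing kind an accompanying $\trans'$-transition ending at $q'$ and reading $\mu(t_1)$ is pinned down by reverse-determinism of $\trans'$.

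For internal gadget states, only two transitions arrive at each: a self-loop and a non-loop predecessor. The blue pebble guards in \cref{fig:composition-drop,fig:composition-lift} are added precisely to make these two mutually exclusive for reverse-enabledness, which closes both gadget cases.

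The main obstacle is a synchronization-state target $(\tilde{q}, \tilde{q}', \overline{x}, \overline{y})$ with $\tilde{q}' \in Q'_0$, where three rule types---\eqref{eq:general-tr0}, \eqref{eq:general-lift-last}, and \eqref{eq:general-drop-last}---can produce incoming transitions (when $\tilde{q}' \in Q'_{+1}$, resp.\ $Q'_{-1}$, only \eqref{eq:general-sw+}, resp.\ \eqref{eq:general-sw-}, applies and the argument reduces to reverse-determinism of $\trans$ exactly as in the simulation case). The resolution is that each of the three rules uses $\nop$ in $\trans''$ and ends in a $0$-state, so the source pebble stack and head of each $t''_i$ coincide with $\pebbles''$ and $\head$; combined with $q''$, the encoding recipe of \cref{claim:xi} then uniquely determines a $\trans'$-configuration $D'$ with $|\overline{x}|$ pebbles at which each underlying $\trans'$-transition $t'_i$ is reverse-enabled. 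Reverse-determinism of $\trans'$ forces $t'_1=t'_2$, and since the three rule types correspond to the three distinct $\trans'$-operations $\nop$, $\lift{}$, $\drop{}$, the rule type coincides. Once the rule type is fixed, the underlying $\trans$-transitions $t_1,t_2$ start from $\tilde{q}$, read $a$, and are both enabled at the $\trans$-configuration $(\tilde{q}, \pebbles, \head)$ extracted from $\pebbles''$ via the $d_k$-shift; determinism of $\trans$ yields $t_1=t_2$, and therefore $t''_1=t''_2$.
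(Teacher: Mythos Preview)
Your overall case split and the treatment of simulation states match the paper's proof. There are, however, two genuine gaps.

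\textbf{The order in the synchronization case is backwards.} When the target is $(\tilde q,\tilde q',\overline{x},\overline{y})$ with $\tilde q'\in Q'_0$, you invoke reverse-determinism of $\trans'$ to conclude $t'_1=t'_2$ \emph{before} you know that $t'_1$ and $t'_2$ read the same letter. But the input letter of $t'_i$ is $a'_i=\mu(t_i)$, and you only establish $t_1=t_2$ afterwards. Reverse-determinism of $\trans'$ applies only to pairs of transitions reading the same letter, so the step ``reverse-determinism of $\trans'$ forces $t'_1=t'_2$'' is unjustified as written. The fix is to swap the order: since all three rule types carry the conjunct $(\xi_0\wedge\varphi_i\wedge\overline{\op_i})^{+d_k}$ in their test (with the same offset $d_k$ once expressed relative to the target $s''$), you can extract that $t_1,t_2$ are both enabled at $(\tilde q,\pebbles,\head)$ \emph{independently of the rule type}; determinism of $\trans$ then gives $t_1=t_2$ and hence $a'_1=a'_2$. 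Only now does reverse-determinism of $\trans'$ yield $t'_1=t'_2$, which in turn pins down the rule type. This is exactly the order the paper uses.

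\textbf{The gadget entry states admit more than one ``non-loop predecessor''.} At $(q'',-1)$ in the lift gadget and $(q'',z,1)$ in the drop gadget, the incoming non-loop transitions are the $t''_{\mathsf{first}}$ transitions of \eqref{eq:general-lift-first} and \eqref{eq:general-drop-first}, and there is one such transition for \emph{each} pair $(t,t')$ with $t$ starting at $q$ and $t'$ starting at $q'$ of the appropriate form. The blue guards separate the self-loop from all of these, but they do not separate the $t''_{\mathsf{first}}$ transitions from one another. The paper handles this by observing that if two such transitions are reverse-enabled at $C''$ then, since their operation is $\nop$ (lift case) or $\drop{d_k+z}$ (drop case) and the target is a $+1$-state, they are both \emph{enabled} at a common source configuration, and Claim~\ref{claim:general-det} (forward determinism of $\trans''$) gives $t''_1=t''_2$.
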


\begin{claimproof}
	Consider two transitions $t''_{1}$ and $t''_{2}$ of $\trans''$ reading some letter
	$a\in\Sigma$ and which are both reverse-enabled at some configuration $C''$ of $\trans''$.
	
\noindent\textbf{Case 1:} $C''=(s'',\pebbles'',\head)$ with
	$s''=(q,s',\overline{x},\overline{y})\in Q\times Q'\times
	Q^{k}\times[n+1]^{k}\subseteq Q''_{0}$ and $0\leq k\leq m$.
	If $s'\in Q'_{+1}\cup Q'_{-1}$ then both transitions $t''_{1}$ and $t''_{2}$ are 
	constructed from the same case \eqref{eq:general-sw+} or \eqref{eq:general-sw-} 
	respectively. We prove that $t''_{1}=t''_{2}$ as in the corresponding case of 
	\cref{claim:simple-rev-det}. We assume bolow that $s'\in Q'_{0}$.
	
	For $i\in\{1,2\}$, consider the transitions $t_{i}=(q,a,\varphi_{i},\op_{i},s_{i})$ and 
	$t'_{i}=(q'_{i},a'_{i},\varphi'_{i},\op'_{i},s')$ used to construct $t''_{i}$ using one 
	of (\cref{eq:general-tr0,eq:general-lift-last,eq:general-drop-last}). 
	Let $\xi'_{i}=(\xi_{0}\wedge\varphi_{i}\wedge\overline{\op_{i}})^{+d_{k}}
	\wedge\overline{\xi}(s'',\op'_{i}(\varphi'_{i})\wedge\overline{\reverse{\op_{i}'}})$ be
	the test of $t''_{i}$.  Notice that, if $t''_{i}$ is constructed from
	\eqref{eq:general-tr0} then $\op'_{i}=\nop$ and
	$\xi'_{i}=(\xi_{0}\wedge\varphi_{i}\wedge\overline{\op_{i}})^{+d_{k}}
	\wedge\overline{\xi}(s'',\varphi'_{i})=\xi_{i}$ since $\overline{\xi}(s'',\cdot)$ does
	not depend on the $Q'$ component of $s''$.
	Transition $t''_{i}$ is reverse-enabled at $C''$, its operation is $\nop$ and it does 
	not move the head, hence $\pebbles'',\head\models\xi'_{i}$. Therefore, 
	we may write
	$\pebbles''=\pebbles^{1}\head^{1} \cdots \pebbles^{k}\head^{k} \pebbles$ with 
	$|\pebbles^{\ell}\head^{\ell}|=y_{\ell}$ for $1\leq\ell\leq k$ and $0\leq|\pebbles|\leq n$.
	By \cref{claim:xi}, we get first 
	$\pebbles,\head\models\varphi_{i}\wedge\overline{\op_{i}}$ and we deduce that $t_{i}$ 
	is enabled at $C=(q,\pebbles,\head)$. It follows that $t_{1}=t_{2}$ by determinism of 
	$\trans$. Hence, $t'_{1}$ and $t'_{2}$ both read the same letter $a'=\mu(t_{1})$ and 
	end at the same state $s'$.  Let $\pebbles',\head'$ be constructed from
	$\pebbles'',\head$ as in \cref{item:2} of \cref{claim:xi}.  We get
	$\pebbles',\head'\models\op'_{i}(\varphi'_{i})\wedge\overline{\reverse{\op_{i}'}}$ and
	we deduce that $t'_{i}$ is reverse-enabled at $C'=(s',\pebbles',\head')$.  It follows that
	$t'_{1}=t'_{2}$ by reverse-determinism of $\trans'$ and then $t''_{1}=t''_{2}$.
	
\noindent\textbf{Case 2:} $C''=(q'',\pebbles'',\head)$ with
	$q''\in \widehat{Q}\times Q'\times Q^{k}\times[n+1]^{k}$ and $0\leq k\leq m$.
	
	The proof is similar to the proof for the corresponding case in \cref{claim:simple-rev-det}.
	
\noindent\textbf{Case 3:} Internal state of the lift-gadget:
	$C''=((q'',-\ell),\pebbles'',\head)$ with
	$q''=(q,q',\overline{x},\overline{y})\in Q\times Q'\times 
	Q^{k}\times[n+1]^{k}$, $1\leq k\leq m$ and $0\leq\ell\leq y_{k}$.
	
	If $1<\ell<y_{k}$ there are only two transitions ending at state $(q'',-\ell)$, 
	the self-loop with test $\neg\pebble_{d_{k}-\ell+1}$ and the transition with test
	$\pebble_{d_{k}-\ell+1}$.  These two cannot be simultaneously reverse-enabled.
	
	Assume that $\ell=1$. 
	As before, the self-loop cannot be simultaneously 
	reverse-enabled with a transition $(q'',a,\xi,\nop,(q'',-1))$ since the test $\xi$ 
	checks $\pebble_{d_{k}}$ inside $\overline{\xi}(q'',\overline{\lift{k}})$, and this is 
	incompatible with the test $\neg\pebble_{d_{k}}$ of the self-loop.
	
	Assume now that $t''_{i}=(q'',a,\xi_{i},\nop,(q'',-1))$ for $i\in\{1,2\}$.
	Since $t''_{i}$ is reverse-enabled at $C''$, we see that $t''_{i}$ is enabled at
	$(q'',\pebbles'',\head-1)$. By \cref{claim:general-det} we deduce that $t''_{1}=t''_{2}$.
	
\noindent\textbf{Case 4:} Internal state of the drop-gadget:\\
	$C''=((q'',z,\ell),\pebbles'',\head)$ with
	$q''=(q,q',\overline{x},\overline{y})\in Q\times Q'\times 
	Q^{k}\times[n+1]^{k}$, $0\leq k<m$, $1\leq z\leq n+1$ and $1\leq\ell\leq z$.
	
	If $\ell\neq 1$ there are only two transitions ending at state $(q'',z,\ell)$, 
	the self-loop with test $\neg\pebble_{d_{k}+z+\ell-1}$ and operation $\nop$, and the
	transition with operation $\drop{d_{k}+z+\ell-1}$.  These two cannot be simultaneously
	enabled.
	
	Assume that $\ell=1$. As before, the self-loop with test 
	$\neg\pebble_{d_{k}+z}$ and operation $\nop$ cannot be simultaneously 
	reverse-enabled with a transition having operation $\drop{d_{k}+z}$.
	
	Assume now that $t''_{i}=(q'',a,\xi_{i},\drop{d_{k}+z},(q'',z,1))$ for $i\in\{1,2\}$.
	Since $t''_{i}$ is reverse-enabled at $C''$, we see that $t''_{i}$ is enabled at
	$(q'',\lift{d_{k}+z}(\pebbles'',\head-1),\head-1)$. 
	By \cref{claim:general-det} we deduce that $t''_{1}=t''_{2}$.
\end{claimproof}

\section{Generators for  Pebble Transducers}
\subparagraph{Set of generators.}
In~\cite{bojanpebble}, the class of polyregular functions is defined as the smallest class of functions closed under composition that contains the sequential functions, the squaring function and the iterated reverse function. The iterated reverse function acts on an alphabet $\Sigma$ enriched with a special symbol
$!$ and maps a word $u_0!  u_1!  \ldots !  u_n$ to $u_0^r !  u_1^r !  \ldots !
u_n^r$.
By proving that each of these generators can easily be realized by a reversible pebble transducer and using \cref{thm:composition-general-case}, 
we prove that reversible pebble transducers realize all polyregular functions. This also gives a way to generate them using these basic blocks and composition.

Using~\cite[Theorem~2]{DFJL17} as well as \cref{rem:0peb2w}, we get that any sequential function, realized by a transducer $T$ with $n$ states, can be realized by a reversible $0$-pebble transducer with $\mathcal{O}(n^2)$ states.
A reversible $1$-pebble transducer for the squaring function was given in \cref{ex:squaring}.
Finally, we give a reversible $0$-pebble transducer for the iterated reverse in \cref{fig:iterRev}. 

\begin{figure}[t]
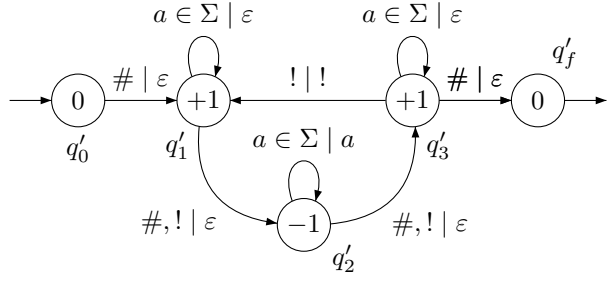

  \centering
  \gusepicture{iterRev}
  \caption{A $0$-pebble transducer realizing the iterated reverse function.  As there is
  no pebble, we omitted the tests and the pebble actions.}
  \label{fig:iterRev}
\end{figure}

\subparagraph{Uniformizing Pebble transducers.}

Another way to generate reversible pebble transducers is to start from a possibly non
deterministic pebble transducer and uniformize it by a reversible pebble transducer.  This
section provides a procedure to do this, while preserving the number of pebbles used by
the given transducer.
By uniformizing a relation $R$,
we mean extract a function $f$ such that $\dom{f}=\dom{R}$ and $f\subseteq R$.

\begin{theorem}\label{thm-uniform}
  Given a $k$-pebble transducer $\trans$ with $n$ states (possibly with equality tests),
  one can construct a $k$-pebble reversible transducer $R$ with $2^{\mathcal{O}((kn)^2)}$
  states such that $\sem{R}$ is a uniformization of $\sem{\trans}$.
\end{theorem}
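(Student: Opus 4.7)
The plan is to adapt the look-around determinisation technique of~\cite{DFJL17} to pebble transducers. The reversible transducer $R$ will simulate $\trans$ step-by-step, but at every configuration $C=(q,\pebbles,\head)$ it will carry, inside its finite state, a \emph{look-around type} $\theta(C)$ that records enough information about the possible continuations to resolve non-determinism deterministically and symmetrically. Once $\theta(C)$ is available in the state, $R$ fixes an arbitrary total order on $\delta$ and always fires the smallest enabled transition whose continuation is guaranteed, by $\theta(C)$, to extend into an accepting run. Reverse-determinism is obtained by symmetrically pairing $\theta$ with a backward type $\theta^{-}$ summarising which predecessors of $C$ can be reached from the initial configuration, and again using a canonical-smallest choice.

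Concretely, I would take $\theta(C)$ to be a binary relation on $Q\times\{0,1,\ldots,k\}$ where $((q_{1},i_{1}),(q_{2},i_{2}))\in\theta(C)$ iff there is a sub-run of $\trans$ on $\#u$ starting in state $q_{1}$ at position $\head$ with pebble stack of height $i_{1}$ (obtained as a prefix of $\pebbles$) and ending in state $q_{2}$ at position $\head$ with pebble stack of height $i_{2}$. Because of the stack discipline on pebbles, the relevant parts of the pebble stack during such a sub-run are well-determined by $\pebbles$ and by $(i_{1},i_{2})$, so $\theta(C)$ is well-defined. Together with the backward type $\theta^{-}(C)$, this gives at most $2^{\mathcal{O}((kn)^{2})}$ possible look-around types, yielding the advertised state count when combined with $q$.

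The crucial property that makes $R$ realisable as a $k$-pebble machine is a \emph{local update lemma}: whenever $\trans$ performs a step $C\xra{t}C'$, the new type $\theta(C')$ is a function only of $\theta(C)$, of $t$, and of the outcomes of the atomic tests of $\trans$ at the new head position. This hinges on the stack discipline of pebbles, which lets one factor $\theta$ across drop/lift operations into block products of sub-relations, exactly as crossing sequences of two-way transducers compose along head moves. Together with the standard argument for moves that neither drop nor lift, this closure of the type under updates is what allows $R$ to carry $\theta$ in its state without ever needing an extra pebble to explore sub-runs.

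The main obstacle is establishing this local update lemma, and in particular the compositionality of $\theta$ through pebble drops and lifts: one has to show that the type after (say) a $\drop{i}$ step can be reconstructed by gluing the incoming type with the finitely many possible sub-run behaviours at the extended pebble level, using only information already encoded in the current type and in the atomic tests at the head. Once this is done, determinism, reverse-determinism, domain preservation (i.e.\ $\sem{R}\subseteq\sem{\trans}$ with $\dom{R}=\dom{\trans}$), and the state bound $2^{\mathcal{O}((kn)^{2})}$ all follow by a routine induction on the run length, and the number of pebbles stays at $k$ by construction.
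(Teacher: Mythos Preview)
Your approach differs substantially from the paper's, which does not try to carry a look-around type for the pebble machine directly.  Instead, the paper factors $\sem{\trans}$ as $\sem{\trans_0}\circ\sem{\C_{k}^{=}}\circ\sem{\C_k}$, where $\C_k$ is a small reversible $k$-pebble transducer that outputs the sequence of all possible $k$-pebble markings of the input, $\C_{k}^{=}$ is a reversible $0$-pebble transducer enriching each marking with equality-test bits, and $\trans_0$ is a pebble-free transducer with $\mathcal{O}(kn)$ states simulating $\trans$ over this enriched alphabet.  The uniformization result of~\cite{DFJL17} is applied only to the pebble-free $\trans_0$, and the resulting reversible $0$-pebble machine is re-composed with $\C_k$ and $\C_{k}^{=}$ via \cref{thm:composition-simple-case}.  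All crossing-sequence reasoning happens where it is unproblematic --- on a two-way machine --- and the $k$ pebbles come back only through the reversible composition.

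Your direct lifting of the crossing-sequence argument has a genuine gap at the local update lemma across a $\drop{i}$ step.  Before the drop, your type $\theta(C)$ records only sub-run behaviour with pebble stacks that are prefixes of the current $\pebbles$, hence of height at most $i-1$.  After the drop you need the height-$i$ entries of $\theta(C')$, describing sub-runs with pebble $i$ pinned at the current head position $\head$.  Whether such a sub-run exists is a \emph{global} property of $u$ relative to the marked position $\head$: with pebble $i$ fixed at $\head$, the transducer may walk to one end of the word, inspect an arbitrary regular property, and return.  None of this is encoded in the height-$({\leq}\,i{-}1)$ type you had before the drop, nor is it recoverable from the atomic tests at $\head$.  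The stack discipline does give a hierarchical decomposition of runs into levels, but it does not give the locality you claim: the level-$i$ summary is not a function of the level-$(i{-}1)$ summary plus local data, because it depends on where pebble $i$ sits in $u$.  This is exactly the obstacle the paper sidesteps by having $\C_k$ materialise every pebble placement first, so that the machine on which the look-around type is maintained is pebble-free and the classical two-way argument applies verbatim.
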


\begin{proof}
  The proof relies on the composition of reversible pebble transducers, the uniformization
  result from~\cite{DFJL17} and \cref{lem-configsKpeb,lem-enrichmentEquality,lem-decompose} proved below.
  $\trans$ can be decomposed into a basic reversible $k$-pebble transducer $\C_k$, a 
  reversible $0$-pebble transducer $\C_{k}^{=}$ enabling equality tests, and a $0$-pebble 
  transducer $\trans_0$ with respectively $\mathcal{O}(k)$, $\mathcal{O}(2^{k^2})$ and $\mathcal{O}(k n)$ states.
  The transducer $\C_k$ associates to a word $u$ the word of length $|\#u|^{|\#u|^k}$ that is
  the sequence of possible configurations of a $k$ pebble transducer over the word $u$.
  The transducer $\C_{k}^{=}$ adds the truth values of equality tests to the configurations.
  The $0$-pebble transducer $\trans_0$ uses this information to simulate $\trans$.
  Then, by \cref{rem:0peb2w}, $\trans_0$ is transformed into a two-way transducer
  $\trans'_0$ with $\mathcal{O}(kn)$ states.
  We can then use~\cite[Theorem~4]{DFJL17} to obtain a reversible two-way transducer $RT$
  with $2^{\mathcal{O}((kn)^2)}$ states.
  Using \cref{rem:0peb2w} back, $RT$ is transformed into a $0$-pebble transducer $RP$ with $2^{\mathcal{O}((kn)^2)}$ states.
  We can conclude by composing $\C_k$ with $\C_{k}^{=}$ and composing the result with $RP$ using \cref{thm:composition-simple-case} twice.  
  We obtain a reversible $k$-pebble transducer $R$ with $2^{\mathcal{O}((kn)^2)}$ states.
\end{proof}

Given an integer $k\geq 1$ and a word $u$, we define the marking of $u$ for
$k$-configurations as the word $\C_{k}(u)$ on $(\Sigma\cup\{\#\})\times\{0,1\}^k$ which is
the lexicographically ordered sequence of every possible marking of $k$ positions
in $\#u$. For instance, we have $\C_{1}(ab)=
\setlength{\arraycolsep}{1pt}\begin{array}{ccccccccc} 
  \# & a & b & \# & a & b & \# & a & b \\ 1 & 0 & 0 & 0 & 1 & 0 & 0 & 0 & 1 
\end{array}$ and
\[
\C_{2}(ab)= \setlength{\arraycolsep}{1pt}\begin{array}{ccccccccccccccccccccccccccc} 
  \# & a & b & \# & a & b & \# & a & b &
  \# & a & b & \# & a & b & \# & a & b &
  \# & a & b & \# & a & b & \# & a & b 
  \\ 
  1 & 0 & 0 & 1 & 0 & 0 & 1 & 0 & 0 &
  0 & 1 & 0 & 0 & 1 & 0 & 0 & 1 & 0 &
  0 & 0 & 1 & 0 & 0 & 1 & 0 & 0 & 1 
  \\ 
  1 & 0 & 0 & 0 & 1 & 0 & 0 & 0 & 1 &
  1 & 0 & 0 & 0 & 1 & 0 & 0 & 0 & 1 &
  1 & 0 & 0 & 0 & 1 & 0 & 0 & 0 & 1 
\end{array} \,.
\]

\begin{lemma}\label{lem-configsKpeb}
  Given an integer $k\geq 1$ and an alphabet $\Sigma$, one can construct a reversible
  $k$-pebble transducer $\C_k$ with $\mathcal{O}(k)$ states such that for all input word
  $u\in\Sigma^*$, $\C_k(u)$ is the marking of $u$ for $k$-configurations.
\end{lemma}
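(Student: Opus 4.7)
The plan is to construct $\C_k$ inductively on $k$, following the shape already drawn for the cases $k=1$ and general $k$ in the companion pictures associated to \cref{lem-configsKpeb}.

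For the base case $k=1$, I would use a $6$-state reversible $1$-pebble transducer with states $q_0,\ldots,q_5$ of polarities $0,+1,0,+1,+1,+1$, organised as an outer loop that iterates pebble $1$ over every position of $\#u$. Starting at $q_0$, the transducer drops pebble $1$ at position $0$ and jumps to $q_3$; in $q_3$ the head sweeps right under $\neg\pebble_1$ until it reads $\#$, at which point it moves to $q_4$ emitting $(\#,b)$ with $b\in\{0,1\}$ the value of $\pebble_1$ there; state $q_4$ copies one full round of $\#u$, emitting $(a,b)$ on each letter via two transitions split by $\pebble_1$; on $\#$ it hands off to $q_5$, which sweeps right under $\neg\pebble_1$ to the position of pebble $1$, lifts it, and returns to $q_1$; from $q_1$ the head either drops pebble $1$ at the next $\Sigma$-position and re-enters $q_3$, or, after the full cycle, reads $\#$ and moves to the accepting state $q_2$. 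One full iteration of the loop emits one decorated copy of $\#u$, and the $|u|+1$ iterations together produce exactly $\C_1(u)$.

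For the inductive step $k\geq 2$, I keep the outer skeleton ($q_0,q_1,q_2,q_3,q_5$) but replace the central emission state $q_4$ by a shifted copy of $\C_{k-1}$, whose pebble indices are uniformly incremented by one so that its internal pebbles $1,\ldots,k-1$ become pebbles $2,\ldots,k$ of $\C_k$. After pebble $1$ is dropped and $q_3$ has swept to the next $\#$, control is handed to the inner block, which enumerates every placement of pebbles $2,\ldots,k$ and, for each such placement, emits a copy of $\#u$ whose decoration carries the full bit-vector of all $k$ pebbles (pebble $1$'s bit is produced by splitting each inner emission transition on $\pebble_1$ vs $\neg\pebble_1$). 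When the inner block terminates at its final $0$-state, the outer skeleton lifts pebble $1$ via $q_5$ and iterates. A straightforward induction then shows that $\C_k(u)$ is emitted in lexicographic order, and since the outer skeleton contributes only a constant number of states per level, we obtain $|\C_k|=\mathcal{O}(k)$.

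The main obstacle is verifying reversibility. Forward-determinism is routine: the out-going transitions of every state are pairwise disjoint once the operation tests $\overline{\drop{1}}$ and $\overline{\lift{1}}$ are written out, exactly as in \cref{ex:squaring}. Reverse-determinism requires care at the merge points. At state $q_3$, the self-loop (operation $\nop$, guard $\neg\pebble_1$) is reverse-enabled only on configurations where pebble $1$ is absent at the predecessor head position, whereas the two incoming edges from $q_0$ and $q_1$ (operation $\drop{1}$) are reverse-enabled only on configurations with exactly one pebble sitting at the predecessor head position; so self-loop and incoming edges are mutually exclusive, and the two incoming edges themselves are separated by whether the predecessor letter is $\#$ or in $\Sigma$. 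At state $q_1$, the only incoming edge uses $\lift{1}$, whose reverse-enabling pins down both the predecessor pebble stack and the predecessor head position uniquely. At the recursive interface, the inner block is entered and left only at $0$-states shared with the outer skeleton; since the outer operations act solely on pebble $1$ while the inner moves act solely on pebbles $2,\ldots,k$, no new reverse-conflicts are introduced across the interface, and reverse-determinism of the inner block itself is supplied by the induction hypothesis.
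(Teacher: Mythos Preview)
Your proposal is correct and follows essentially the same inductive construction as the paper: the base case $\C_1$ is the six-state variant of the squaring transducer from \cref{fig:allconfigs} (left), and the step replaces the emitting state $q_4$ by the shifted copy $\C^{+1}_{k-1}$ exactly as in \cref{fig:allconfigs} (right). Your reversibility analysis is more detailed than what the paper spells out (the paper simply asserts reversibility via the figure), and your argument for reverse-determinism at $q_3$ and across the recursive interface is the right one.
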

\begin{proof}
  The machine for $\C_1$ is given in \cref{fig:allconfigs} (left).  It is a slightly
  modified version of the squaring function of \cref{fig:rev-squaring-v4}, mainly to
  accomodate the fact that $\C_1$ also produces the endmarker $\#$ at each iteration.
  
  For $k>1$, the machine $\C_k$ is defined using an enhanced $\C_{k-1}$ as a blackbox.
  Indeed, the marking for $k$-configurations over a word $u$ can be defined as the $k-1$
  marking iterated $|\#u|$ times, where each iteration is marked by the new pebble on a
  different position of $\#u$, in order.
  Then $\C_k$ uses a modified $\C^{+1}_{k-1}$ that outputs its marking plus the one of the
  new pebble.  It is described in \cref{fig:allconfigs} (right).
\end{proof}

\begin{figure}[t]
  \centering
  \gusepicture[scale=0.88]{allconfigs-1}
  \hfill
  \gusepicture[scale=0.88]{allconfigs-k}
  
  \caption{Reversible transducers for the function $\C_1$ (left) and $\C_k$ (right).  The
  production $b$ of $\C_1$ is the bit corresponding to $\head=\pebble_1$.  The transitions
  are actually duplicated.  The self-loop on $q_{4}$ reading $a\in\Sigma$ stands for two
  transitions $(q_4,a,\pebble_1,q_4)$ producing $(a,1)$ and $(q_4,a,\lnot \pebble_1,q_4)$
  producing $(a,0)$.  Similarly, the transition from $q_{3}$ to $q_{4}$ reading $\#$ is 
  duplicated. \\
  In $\C_{k}$, we use a copy $\C^{+1}_{k-1}$ of $\C_{k-1}$ where all the pebble indices
  are incremented by $1$.  The transition in $\C_{k}$ from $q_{3}$ labelled $\#$ goes to
  the initial state of $\C^{+1}_{k-1}$ which is a $0$-state, hence the head does not move.
  Similarly, the transition in $\C_{k}$ labelled $\#$ going to $q_{5}$ starts from the
  final state of $\C^{+1}_{k-1}$.  When it appears in $\C_k$, $b$ stands for the bit
  vector giving the truth value of each $\head=\pebble_{i}$ for $1\leq i\leq k$.  It is
  actually a macro for $2^k$ disjoint transitions.}
  \label{fig:allconfigs}
\end{figure}

\begin{lemma}\label{lem-enrichmentEquality}
  Given an integer $k\geq 1$ and an alphabet $\Sigma$, 
  one can construct a reversible $0$-pebble transducer $\C_{k}^{=}$ with 
  $\mathcal{O}(2^{k^{2}})$
  states that reads $\sem{\C_k}(u)$ and adds to each copy of $\#u$ in $\sem{\C_k}(u)$ a
  $k\times k$ boolean matrix $M$ such that $M_{i,j}=1$ if and only if $i$ and $j$ mark the
  same position in this copy of $\#u$.
\end{lemma}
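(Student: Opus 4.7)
The core idea is a structural observation about $\C_k(u)$: in each copy of $\#u$ produced by $\C_k$, each pebble $i\in\{1,\dots,k\}$ is dropped at exactly one position. Consequently $M_{i,j}=1$ if and only if there is some position in the copy whose bit vector has both $b_i=1$ and $b_j=1$. Hence $M$ is computable in a single left-to-right scan of the copy, starting from the zero matrix and applying at each position the monotone update $M \mapsto M \vee (b\otimes b^{T})$, where $b$ is the bit vector read.

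I will build $\C_{k}^{=}$ with three phases per copy: (1)~a left-to-right \emph{read} phase accumulating $M$ in the state and emitting nothing; (2)~a right-to-left \emph{rewind} phase returning to the beginning of the copy while preserving $M$; and (3)~a left-to-right \emph{write} phase outputting each input letter enriched with the computed $M$. The state is indexed by the phase and the current $k\times k$ matrix, giving $O(2^{k^{2}})$ states in total. The rewind and write phases do not change $M$, so their determinism and reverse-determinism are immediate; the marker $\#$ (which only appears at the start of each copy in $\sem{\C_{k}}(u)$) unambiguously signals the boundaries between phases.

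The delicate point is reverse-determinism of the read phase: given a target state $(1,M')$ and a letter with bit vector $b$, the source $M$ must be uniquely recoverable from $(M',b)$. Without guards this fails, because $M\vee(b\otimes b^{T})=M'$ does not determine $M$. I enforce uniqueness by guarding each update: the transition from $(1,M)$ on bit vector $b$ is enabled only if every entry $M_{i,j}$ with $i\in N$ or $j\in N$ is $0$, where $N=\{i\mid b_i=1\}$. The one-pebble-per-position invariant guarantees this guard is satisfied along the unique valid run, and at the level of the transition function it yields $M=M'$ with all entries involving $N$ set to $0$, recovering $M$ uniquely.

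The main obstacle is reversibly crossing a copy boundary: naively sending $(3,M)\to(1,\mathbf{0})$ when reading the next copy's $\#$ would collapse every $M$ into the single target $(1,\mathbf{0})$, breaking reverse-determinism. My plan is to insert, between the write phase of copy $c$ and the read phase of copy $c+1$, a fourth \emph{unwind} phase that scans copy $c$ right-to-left, inverting the read-phase updates (with symmetric guards) so that $M$ is reduced back to $\mathbf{0}$ before the next copy begins. A short bookkeeping subphase then moves the head forward to the start of copy $c+1$. Together with a small number of entry/exit states that distinguish ``first time seeing $\#$ of this copy'' from ``leaving through $\#$ of the next copy'', this yields a reversible $0$-pebble transducer with $O(2^{k^{2}})$ states, as required.
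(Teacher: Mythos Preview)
Your proposal is correct and matches the paper's construction almost phase for phase. The paper uses five modes it calls \emph{compute}, \emph{left}, \emph{write}, \emph{undo}, and \emph{reset}, which correspond exactly to your read, rewind, write, unwind, and bookkeeping phases; in particular, the paper identifies the same obstacle you do (collapsing all $M$ into $\mathbf{0}$ at a copy boundary would destroy reverse-determinism) and resolves it with the same undo-then-reset trick. The only cosmetic difference is that where you impose an explicit guard ``$M_{i,j}=0$ whenever $i\in N$ or $j\in N$'' to make the update $M\mapsto M\vee(b\otimes b^{T})$ invertible, the paper writes the update as integer addition $M'=M+M_b$ (and $M'=M-M_b$ in the undo phase), which amounts to the same partial transition once you restrict the state space to $\{0,1\}^{k^{2}}$.
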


\begin{proof}
  Informally, the transducer $\C_{k}^{=}$ reads a marking of $\#u$ and computes the matrix
  $M$ along this copy.  Then, when reaching a new pair $(\#,b)$, it goes back to output
  the copy enriched with the matrix $M$.  Special care has to be taken to ensure
  reverse-determinism.  In particular, the transducer $\C_{k}^{=}$ needs to undo the
  computation done before moving again to the next marked copy of $\#u$.

Let $C_k$ be defined as in \cref{lem-configsKpeb}.

We set $\C_{k}^{=} = (P,\Sigma\times\{0,1\}^k,\alpha,0,p_i,p_f,\Sigma\times\{0,1\}^k\times \{0,1\}^{k^2},\beta)$ 
where $P=\{r,p_i,p_f\}\cup(\{0,1\}^{k^2}\times\{c,\ell,u,w\})$ where $c$, $\ell$, $w$, $u$ and
$r$ are modes standing for compute, left, write, undo and reset respectively.
We have $P_{+1}=\{r\}\cup(\{0,1\}^{k^2}\times\{c,w\})$, 
$P_{-1}=\{0,1\}^{k^2}\times\{\ell,u\}$ and $P_0=\{p_i,p_f\}$.
The initial state is  $p_i$ and the final state is $p_f$. 
To avoid confusion, we use $\splend$ for the endmarker of $\C_{k}^{=}$. Hence, 
this transducer works on the circular word $\splend\sem{\C_k}(u)$.

Given a $k$ bit vector $b$, we define the boolean matrix $M_b$ where the $(i,j)$
coefficient of $M_b$ equals $1$ if, and only if, $b_i=b_j=1$.  We define below transitions
in $\alpha$, where $a$ denotes a letter of $\Sigma$ different from $\#$, $b$ is a $k$-bit
vector and $M$ is a $k\times k$ boolean matrix.
All transitions output $\varepsilon$ except those reaching a state in write mode 
(\cref{lem:enrichmentEqualityEnum4,lem:enrichmentEqualityEnum5}) where the output is the
input letter enriched by the matrix of the state.
\begin{enumerate}
  \item\label{lem:enrichmentEqualityEnumInit} $(p_i,\splend,(M_0,c))$ where $M_{0}$ is the zero 
  matrix (all coefficients are $0$),
  \item\label{lem:enrichmentEqualityEnum1} $((M,c),(a,b),(M',c))$ where $M'=M+M_b$,
  \item\label{lem:enrichmentEqualityEnum2} $((M,c),(\#,b),(M,\ell))$,
  \item\label{lem:enrichmentEqualityEnumLastIt} $((M_1,c),\splend,(M_1,\ell))$ as the last
  copy of $\#u$ cycles back to the endmarker $\splend$, $M_{1}$ is the matrix where all
  coefficients are $1$,
  \item\label{lem:enrichmentEqualityEnum3} $((M,\ell),(a,b),(M,\ell))$,
  \item\label{lem:enrichmentEqualityEnum4} $((M,\ell),(\#,b),(M,w))$ and produces $(\#,b,M)$,
  \item\label{lem:enrichmentEqualityEnum5} $((M,w),(a,b),(M,w))$ and produces $(a,b,M)$,
  \item\label{lem:enrichmentEqualityEnum6} $((M,w),(\#,b),(M,u))$,
  \item\label{lem:enrichmentEqualityEnumFinal} $((M_1,w),\splend,p_f)$,
  \item\label{lem:enrichmentEqualityEnum7} $((M,u),(a,b),(M',u))$ where $M'=M-M_b$,
  \item\label{lem:enrichmentEqualityEnum8} $((M_b,u),(\#,b),r)$,
  \item\label{lem:enrichmentEqualityEnum9} $(r,(a,b),r)$,
  \item\label{lem:enrichmentEqualityEnum10} $(r,(\#,b),(M_b,c))$.
\end{enumerate}
The correctness of the construction comes from the fact that on one given marked copy of $\#u$, every pebble mark exactly one position.
Hence on correct inputs, the matrices $M$ store coherent information related to the given copy.

By construction it should be clear that $\C_{k}^{=}$ is deterministic.  Notice that, if we
replace \cref{lem:enrichmentEqualityEnum6} with $((M,w),(\#,b),(M_{b},c))$ and remove
\cref{lem:enrichmentEqualityEnum7,lem:enrichmentEqualityEnum8,lem:enrichmentEqualityEnum9,lem:enrichmentEqualityEnum10},
we would get a deterministic transducer computing the same function, but this transducer
would not be reverse-deterministic. This is why we need the undo mode followed by the 
reset mode.

Reverse-determinism of
\cref{lem:enrichmentEqualityEnum2,lem:enrichmentEqualityEnum2,lem:enrichmentEqualityEnum3,lem:enrichmentEqualityEnum4,lem:enrichmentEqualityEnum5,lem:enrichmentEqualityEnum6,lem:enrichmentEqualityEnum9,lem:enrichmentEqualityEnum10} 
is direct. Reverse-determinism of
\cref{lem:enrichmentEqualityEnum1,lem:enrichmentEqualityEnum7} comes from the determinism
of matrix subtraction and addition respectively.
Note that \cref{lem:enrichmentEqualityEnum8} is limited to $M_b$ to ensure reverse-determinism. This is due to the fact that since we are undoing the computation, what should remain in the matrix stored at the end of the undoing is exactly the information about the pebbles on $\#$, which completes the undoing of the computation.
\cref{lem:enrichmentEqualityEnumInit,lem:enrichmentEqualityEnumLastIt,lem:enrichmentEqualityEnumFinal}
deal with the acutal endmarker $\splend$, either by starting the compute mode, going back
in left mode to write the last configuration or reaching the final state.
Note that only the complete matrix $M_1$ can reach the final state, as in the last
configuration all pebbles are on the last position of $\#u$.
\end{proof}

\begin{lemma}\label{lem-decompose}
  Given a $k$-pebble ($k\geq1$) transducer $\trans$ with $n$ states, one can construct a
  $0$-pebble transducer $\trans_0$ with $\mathcal{O}(k n)$ states such that
  $\sem{\trans}=\sem{\trans_0}\circ\sem{\C_{k}^{=}}\circ\sem{\C_k}$.
\end{lemma}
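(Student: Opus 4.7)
The plan is to build a nondeterministic $0$-pebble transducer $\trans_{0}$ that simulates $\trans$ step by step by navigating the enriched input $w=\sem{\C_{k}^{=}}\circ\sem{\C_{k}}(u)$. Recall that $\sem{\C_{k}}(u)$ lists all $|\#u|^{k}$ markings of $\#u$ by $k$ pebbles in lexicographic order, and that $\sem{\C_{k}^{=}}$ decorates each block with the boolean equality matrix $M$. I maintain the invariant that whenever $\trans_{0}$ is in state $(q,j)$ reading column $c$ of the block whose pebble positions are $(p_{1},\ldots,p_{k})$, this represents the configuration $(q,(p_{1},\ldots,p_{j}),c)$ of $\trans$; by convention, for $j<i\leq k$ the slot $p_{i}$ is fixed to the default position $0$, so every $\trans$-configuration has a unique block representative in $w$.

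The state set of $\trans_{0}$ is $Q\times\{0,\ldots,k\}$ augmented with a constant-size pool of auxiliary navigation modes per base state, for a total of $\mathcal{O}(kn)$ states. Simulating a transition $t=(q,a,\varphi,\nop,q')$ is direct: the current letter of $w$ is a triple $(a',b,M)$, so $\trans_{0}$ checks $a=a'$, evaluates each $(\head=\pebble_{i})$-atom of $\varphi$ against $b_{i}$, and evaluates each $(\pebble_{i}=\pebble_{\ell})$-atom against $M_{i,\ell}$ (atoms whose indices exceed $j$ are immediately resolved using the stack-size component of the state); it then moves within the current block according to the polarity of $q'$, outputs $\mu(t)$, and transitions to $(q',j)$.

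The main technical step is to simulate $\drop{j+1}$ and $\lift{j}$, which require $\trans_{0}$ to move between blocks. For $\drop{j+1}$ at column $c$, the target is the unique block whose pebble positions are $(p_{1},\ldots,p_{j},c,0,\ldots,0)$. I exploit the recursive structure of $\sem{\C_{k}}$: within the super-block in which the first $j$ pebble positions are fixed, the inner blocks are enumerated by the position of pebble $j+1$ and, recursively, by the positions of pebbles $j+2,\ldots,k$. Using the constant pool of auxiliary navigation modes, $\trans_{0}$ walks forward through the current super-block by column-preserving hops (driven by the $\#$-separators) interleaved with in-block inspections, and stops at a block whose $(j+1)$-th marker bit is set at the current column and whose column-$0$ bit vector certifies that pebbles $j+2,\ldots,k$ are still defaulted to $0$. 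The case of $\lift{j}$ is symmetric, and in fact simpler, since the target block has the marker bit for pebble $j$ set at column $0$.

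The hard part is implementing this inter-block navigation within the $\mathcal{O}(kn)$ state bound: column indices and block counts can be arbitrarily large and so cannot be stored in the state. I address this by relying exclusively on structural landmarks (the $\#$-separators delimiting blocks, the marker bits signalling super-block boundaries, and the matrix $M$ for equality information), with the $(q,j)$-component acting as a program counter that tells $\trans_{0}$ which pebble level is currently being manipulated. Once the navigation gadget is in place, correctness follows by induction on the length of the run of $\trans$: every accepting run of $\trans$ on $u$ lifts to an accepting run of $\trans_{0}$ on $w$ producing the same output, and conversely, projecting an accepting run of $\trans_{0}$ onto its $Q\times\{0,\ldots,k\}$-component yields an accepting run of $\trans$, establishing $\sem{\trans}=\sem{\trans_{0}}\circ\sem{\C_{k}^{=}}\circ\sem{\C_{k}}$.
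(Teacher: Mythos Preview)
Your invariant places the undropped pebbles at the default position $0$, which makes head movement trivial but pushes all the difficulty into simulating $\drop{}$ and $\lift{}$. The gap is in the ``column-preserving hops'': you are at column $c$ of the block $(p_{1},\ldots,p_{j},0,\ldots,0)$ and must reach column $c$ of the block $(p_{1},\ldots,p_{j},c,0,\ldots,0)$, but $c$ is an arbitrary position of $\#u$ and is not recorded by any structural landmark of the blocks you traverse. The $\#$-separators, the bit vectors $b$, and the matrix $M$ all describe the pebble positions of a block, not the column you started from; in the default block no pebble mark sits at column $c$ (in general), so once you step away from that column you have irretrievably lost it. Nondeterminism does not rescue this: a wrong guess of the landing column yields a simulation of a \emph{different} configuration of $\trans$ and hence spurious input/output pairs, so $\sem{\trans_{0}}\circ\sem{\C_{k}^{=}}\circ\sem{\C_{k}}$ would strictly contain $\sem{\trans}$.

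The paper resolves this by choosing the opposite invariant: the undropped pebbles are always kept at the \emph{current head position}. Under that convention the representative block for a configuration with $j$ pebbles and head at $c$ is $(p_{1},\ldots,p_{j},c,\ldots,c)$, so $\drop{j+1}$ and $\lift{j}$ leave the block unchanged and reduce to incrementing or decrementing the counter $j$. All navigation is concentrated in the head moves, where the stopping criterion is purely local: move left (resp.\ right) until the first position whose bit vector has $b_{j+1}=\cdots=b_{k}=1$. Between $(p_{1},\ldots,p_{j},c,\ldots,c)$ and $(p_{1},\ldots,p_{j},c\mp1,\ldots,c\mp1)$ every intermediate block has pebbles $j{+}1,\ldots,k$ not all equal, so no column in those blocks satisfies the criterion, and the first match is exactly the target. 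This is why the paper's construction fits in $\mathcal{O}(kn)$ states with a constant number of modes per $(q,j)$; your invariant would need an analogous local landmark for $c$, and there is none.
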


\begin{proof}
  The idea is to simulate the moves of the pebbles of $\trans$ with moves of the reading
  head of $\trans_0$ along the sequence of $k$-configurations produced by $\C_k$.  The
  tests $\varphi$ are validated using the information added by $\C_{k}^{=}$.

  Informally, the transducer $\trans_0$ remembers in its state the number of pebbles
  dropped by $\trans$.  It uses the configuration where the unused pebbles are on the same
  position as the reading head.  Then, when the simulation of $\trans$ drops a pebble $i$,
  $\trans_0$ is already reading a configuration where the pebble $i$ is at the current
  position.  The transducer $\trans_0$ simply needs to increment the number of pebbles
  dropped.  Conversely, if a pebble $i$ is lifted, it means the reading head is on a
  position where the pebble $i$ is present.  Hence it only needs to decrement its number
  of pebbles dropped.

  The move of the reading head uses the fact that the configurations are output by $\C_k$
  in lexicographic order.  If the head moves to the left while having $i$ pebbles
  dropped, $\trans_0$ moves left until it sees a position where pebbles $i+1$ to $k$ are
  all present.  The first such position is the previous one from where we started due to
  the lexicographic order.  It is also the correct position to maintain the invariant
  that all undropped pebbles are placed at the same position as the head.  A move right is
  treated symmetrically.

  Because the input word of $u$ is read in a cyclic fashion, this might create issue for
  $\trans_0$ as consecutive endmarkers $\#$ do not belong to the same marking.  More
  precisely, given any position in $\C_k(u)$, its corresponding endmarker is the closest
  (possibly itself) $\#$-labelled position on its left.  So if during the computation
  $\trans_0$ reaches the endmarker $\#$ moving left and need to keep moving left,
  $\trans_0$ needs to move to a position where all pebbles $1$ to $i$ are unchanged, but
  pebbles $i+1$ to $k$ are on the last letter of the word.  This is done by reaching the
  closest position on the right where pebbles $i+1$ to $k$ are on an endmarker $\#$ (or $\splend$ in the case where we reach the end of the word)
then moving to the position on the left.

  Symmetrically, if the computation of $\trans_0$ reaches an endmarker $\#$ with a single 
  right move, it means that pebbles $i+1$ to $k$ were on the last position of $u$, then 
  $\trans_{0}$ moves left to the closest position where pebbles $i+1$ to $k$ are all on the
  endmarker $\#$.

  \newcommand{\ml}{m_{\ell}}
  \newcommand{\mr}{m_{r}}
  
  Formally, let $\trans=(Q,\Sigma,\delta,k,q_i,q_f,\Gamma,\mu)$.  We define
  $\trans_0=(P,\Sigma',\gamma,0,p_i,p_f,\Gamma,\nu)$ where
  $\Sigma'=\Sigma\times\{0,1\}^{k}\times\{0,1\}^{k^{2}}$,
  $P\subseteq(Q\times\{0,\ldots,k\}\times\{s,\mr,\ml\})\cup\{p_{i},p_{f}\}$
  where $s$ stands for simulation, and $\ml$, $\mr$ stand for move left and right
  respectively.  We divide $P$ into 
  \begin{align*}
    P_{0} &= (Q\times\{0,\ldots,k\}\times\{s\}) \cup \{p_{i},p_{f}\}, \\
    P_{+1} &= Q\times\{0,\ldots,k\}\times\{\mr\}, \\
    P_{-1} &= Q\times\{0,\ldots,k\}\times\{\ml\} \,.
  \end{align*}

  To avoid the more convoluted cases, we assume that in $\trans$ transitions with action
  drop or lift do not move the reading head (this is especially needed for $\#$).  Note
  that this can be enforced by decomposing a drop and move transition into two transitions 
  (see \cref{lem:drop-lift-move}).

  For $0\leq i\leq k$, we denote by $b^{+i}$ any vector where $b^{+i}_j=1$ for all
  $i<j\leq k$ and we denote by $b^{-i}$ any vector where $b^{-i}_{j}\neq1$ for some $j>i$.
  
  Given an integer $0\leq i\leq k$, a bit vector $b$ and a matrix $M$, we say that
  $b,M,i\models (\head=\pebble_{\ell})$ (resp.\ $\pebble_{\ell}=\pebble_{\ell'}$) if
  $\ell\leq i$ and $b_{\ell}=1$ (resp.\ $\ell,\ell'\leq i$ and $M_{\ell,\ell'}=1$).

  The transitions in $\gamma$ and the output function $\nu$ are defined below, where $a$
  denotes a letter different from $\#$, $\sigma$ is any letter (possibly $\#$), $b$ is any
  $k$-bit vector and $M$ is a $k\times k$ boolean matrix.  We omit the tests and 
  operations in transitions as $\trans_0$ is pebbleless. All transitions output 
  $\varepsilon$ except those of \cref{item:T0-simulate} below.
  \begin{enumerate}
    \item  $(p_{i},\splend,(q_{i},0,\mr))$ and 
    $((q_{i},0,\mr),(\#,b^{+0},M_{1}),(q_{i},0,s))$.
  
    \item  $((q_{f},0,s),(\#,b^{+0},M_{1}),(q_{f},0,\ml))$ and 
    $((q_{i},0,\mr),\splend,p_{f})$.
  
    \item\label{item:T0-simulate}  
    $((q,i,s),(\sigma,b^{+i},M),(q',i',m))$ if 
    there exists a transition $t=(q,\sigma,\varphi,\op,q')$ in $\trans$ such that
    $b^{+i},M,i\models\varphi$ and $\op\in\{\nop,\drop{i+1},\lift{i}\}$ and
    $b_{i}=1$ if $\op=\lift{i}$. In this case,
    \[
    i'=
    \begin{cases}
      i & \text{if } \op=\nop \\
      i+1 & \text{if } \op=\drop{i+1} \\
      i-1 & \text{if } \op=\drop{i} 
    \end{cases}
    \qquad\text{and}\qquad
    m=
    \begin{cases}
      s & \text{if } q'\in Q_{0} \\
      \ml & \text{if } q'\in Q_{-1} \wedge \sigma\neq\# \\
      \mr & \text{if } q'\in Q_{+1} \text{ or } q'\in Q_{-1} \wedge \sigma=\# 
    \end{cases}
    \]
    The production of this transition is the same as $t$.

    Note that, the head of $\trans_{0}$ moves left or right when $q'\in Q_{-1}$ or $q'\in Q_{+1}$.
    But in general, the head of $\trans_{0}$ does not reach immediately the position 
    where the simulation of $\trans$ continues. We have to skip positions with a bit 
    vector of the form $b'^{-i'}$ and handle carefully the endmarker $\#$. This is the 
    purpose of the following transitions.
    
    \item If $q\in Q_{-1}$ then $((q,i,\ml),(\sigma,b^{-i},M),(q,i,\ml))$ and
    $((q,i,\ml),(\sigma,b^{+i},M),(q,i,s))$.
    
    \item If $q\in Q_{-1}$ then $((q,i,\mr),(a,b,M),(q,i,\mr))$ and
    $((q,i,\mr),(\#,b^{-i},M),(q,i,\mr))$ and
    $((q,i,\mr),(\#,b^{+i},M),(q,i,\ml))$ and
    $((q,i,\mr),\splend,(q,i,\ml))$.
    
    Note that, when moving right we reach $\#$ with a bit vector of the form $b^{+i}$ then
    the previous letter also has a bit vector of the form $b^{+i}$.  Hence, the 
    transition taken from $(q,i,\ml)$ will go directly to the simulation mode $(q,i,s)$. 
    If moving right we reach the endmarker $\splend$ then the previous letter has a bit
    vector of the form $b^{+0}$.
    
    \item If $q\in Q_{+1}$ then $((q,i,\mr),(\sigma,b^{-i},M),(q,i,\mr))$ and
    $((q,i,\mr),(a,b^{+i},M),(q,i,s))$.
    
    \item If $q\in Q_{+1}$ then
    $((q,i,\mr),(\#,b^{+i},M),(q,i,\ml))$ and
    $((q,i,\mr),\splend,(q,i,\ml))$ and \\
    $((q,i,\ml),(a,b,M),(q,i,\ml))$ and
    $((q,i,\ml),(\#,b^{-i},M),(q,i,\ml))$ and \\
    $((q,i,\ml),(\#,b^{+i},M),(q,i,s))$.
  \end{enumerate}
  
  We remark that $\trans_0$ is obtained from $\trans$ by decomposing its transitions into
  separate sequences of transitions.  The bit vectors and the matrices $M$ allow us to
  check the tests of transitions of $\trans$.  The producing transitions of $\trans_0$ 
  correspond to the ones of $\trans$ and the accepting runs of $\trans_{0}$ correspond to 
  the accepting runs of $\trans$.
  Hence for any input word $u$, $u\in\dom{\trans}$ if, and only if
  $\sem{\C_{k}^{=}}(\sem{\C_k}(u))\in\dom{\trans_0}$ and
  $\sem{\trans}=\sem{\trans_0}\circ\sem{\C_{k}^{=}}\circ\sem{\C_k}$.
\end{proof}

\bibliography{Arxiv-rev-peb}

\appendix
\crefalias{section}{appendix}
\crefalias{subsection}{appendix}

\clearpage

\section{Pebbleless transducers}
\label{sec:app:0peb2w}

A $0$-pebble transducer can equivalently be seen as a two-way transducer.
In~\cite{DFJL17}, the authors define reversible two-way transducers, using a different 
semantics than the one used here for $0$-pebble transducers.
In the following, we call a $0$-pebble transducer one using the semantics defined in our paper.
On the other hand, a two-way transducer refers to a machine using the semantics defined in~\cite{DFJL17}.

We now define the semantics of two-way transducers.
Here, the input word is not considered as circular, and instead of the \( \# \) symbol,
two different endmarkers are added, \( \leftend \) at the begining of the word and \( \rightend \) at the end.
We denote with \( \Sigma_\leftend \) (resp.\ \( \Sigma_\rightend \), \( \Sigma_{\leftend\rightend} \))
the alphabet \( \Sigma \cup \left\{ \leftend \right\} \)
(resp.\ \( \Sigma \cup \left\{ \rightend \right\} \), \( \Sigma \cup \left\{ \leftend, \rightend \right\} \)).
The positions for the head of the transducer are in between letters, instead of being on the letters.
The word
$u=\setlength{\arraycolsep}{1pt}\begin{array}{cccccc} \# & a & b & c & d & \\ 0 & 1 & 2 & 3 & 4 & \end{array}$
from our semantics becomes
$\setlength{\arraycolsep}{0pt}\begin{array}{ccccccccccccc} & \leftend && a && b && c && d && \rightend & \\ 0 && 1 && 2 && 3 && 4 && 5 && 6 \end{array}$.
For an in between position \( \head \in \left\{ 0, \ldots, |u| + 1 \right\} \), we denote with \( \head + 1/2 \) the letter
on the right of the reading head.
For example, \( u(2+1/2) = b \).
Similarly, for a position \( \head \in \left\{ 1, \ldots, |u| + 2 \right\} \),
\( \head - 1/2 \) is the letter on the left of the reading head.
A two-way transducer is a tuple \( \twtrans = \left( Q, \Sigma, \delta, q_i, q_f, \Gamma, \mu \right) \) where
\( Q = Q^+ \uplus Q^- \) is a finite set of states,
with \( Q^+ \) the set of \emph{forward} states and \( Q^- \) the set of \emph{backward} states;
\( \Sigma \) is the alphabet; \( q_i, q_f \in Q^+ \) are respectively the (only) initial and accepting states;
\( \delta \subseteq Q \times \Sigma_{\leftend\rightend} \times Q \) is the set of transitions;
\( \Gamma \) is the output alphabet;
\( \mu \colon \delta \to \Gamma^{*} \) is the output function.
By convention, \( \left( q_i, \leftend, p \right) \) and \( \left( q, \rightend, q_f \right) \) are the
only types of transitions where \( q_i \) and \( q_f \) can appear, respectively.
Moreover, \( \left( q, \leftend, p \right) \in \delta \) implies that \( p \in Q^+ \),
and \( \left( q, \rightend, p \right) \in \delta \) implies that either \( p = q_f \) or \( p \in Q^- \).
Because the word is not circular, the head cannot move past an end-marker,
except in a transition $(q_i,\leftend,p)$ leaving the initial state, 
or in a transition $(q,\rightend,q_f)$ reaching the accepting state.

A configuration of a two-way transducer on an input word \( u \) is the pair \( \left( q, \head \right) \),
composed by the current state of the transducer and
the position of the reading head, with \( \head \in \left\{ 0, \ldots, |u|+2 \right\} \).
A transition \( t = \left( q, a, p \right) \) is enabled between configurations
\( C = \left( q, \head \right) \) and \( C' = \left( p, \head' \right) \),
denoted \( C \xrightarrow{t} C' \), if either:
\begin{itemize}
	\item \( q \in Q^+, p \in Q^+, u(\head + 1/2) = a,  \head' = \head + 1 \),
	\item \( q \in Q^+, p \in Q^-, u(\head + 1/2) = a,  \head' = \head \),
	\item \( q \in Q^-, p \in Q^-, u(\head - 1/2) = a,  \head' = \head - 1 \), or
	\item \( q \in Q^-, p \in Q^+, u(\head - 1/2) = a,  \head' = \head \).
\end{itemize}

The two following lemmas prove that the two semantics are effectively equivalent, and that reversibility is preserved between them.

\begin{lemma}\label{lem-0pebTo2way}
	Given a $0$-pebble transducer $P$ with $n$ states, 
	one can construct an equivalent two-way transducer $T$ with $\mathcal{O}(n)$ states.
	Moreover, if $P$ is reversible then so is $T$.
\end{lemma}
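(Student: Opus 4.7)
The plan is to simulate $P$ directly in $T$ with only a constant-factor state blowup. The central correspondence matches a $P$-configuration at position $h$ to a $T$-configuration at position $h$ when the encoded state ``faces right'' and at position $h+1$ when it ``faces left''. Concretely, I would put each $q \in Q_{+1}$ into $T$'s $Q^+$, each $q \in Q_{-1}$ into $T$'s $Q^-$, and duplicate every $q \in Q_0$ as a pair $q^+ \in Q^+, q^- \in Q^-$. Under this encoding, a $P$-transition whose source and destination polarities are both in $Q_{+1} \cup Q_{-1}$ translates verbatim to a single $T$-transition: $Q^+\!\to\!Q^+$ and $Q^-\!\to\!Q^-$ implement the unit move, while $Q^+\!\to\!Q^-$ and $Q^-\!\to\!Q^+$ implement the ``stay'' behavior that corresponds to $P$ entering a $Q_{-1}$ or $Q_{+1}$ state from the opposite side.

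The first hurdle is that a $P$-transition whose destination lies in $Q_0$ cannot always be realised in a single $T$-step, because neither $q'^+$ at $T$-position $h$ nor $q'^-$ at $T$-position $h+1$ sits at the position reached by $T$'s one-step read/move rules from the source. I would decompose such a transition into a two-step ``flip'' through a single auxiliary state (of the form $Q^+\!\to\!Q^-\!\to\!Q^+$ or the mirror, depending on the source polarity), emitting the original output on the first step and $\varepsilon$ on the second. Introducing at most two such auxiliary states per $q' \in Q_0$ keeps the state count in $\mathcal{O}(n)$.

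The main obstacle will be the mismatch between $P$'s circular input, where the single $\#$ wraps position $0$ to position $|u|$, and $T$'s linear input with two distinct endmarkers $\leftend$ and $\rightend$. I would interpret $\leftend$ as the ``entry copy'' of $P$'s $\#$ and $\rightend$ as its ``exit copy''. A $P$-transition that wraps around cannot reach its destination in boundedly many $T$-moves, so I would attach, per destination state $q'$ that can be reached by such a wrap, a dedicated traversal state that sweeps $T$'s input from one end to the other producing $\varepsilon$ at every step, and then hands control back to the simulation mode at the opposite endmarker. This adds only $\mathcal{O}(n)$ further states. The initial transition of $T$ out of $q_i$ reading $\leftend$ and the terminal transition into $q_f$ reading $\rightend$ are wired so that $P$'s configurations $(q_i, 0)$ and $(q_f, 0)$ correspond to $T$'s start at position $0$ and acceptance at position $|u|+2$.

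Correctness will follow from a routine induction establishing a bijection between accepting runs of $P$ and accepting runs of $T$ that preserves the produced output. For reversibility preservation, I will argue that every $T$-transition is uniquely associated to a $P$-transition (or to a fixed step inside an unambiguous auxiliary chain), and that at each $T$-state the enabled and reverse-enabled transitions are distinguished by their letters and source/destination polarities; any conflict in $T$ would then project to a conflict in $P$, contradicting reversibility of $P$.
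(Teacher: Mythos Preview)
Your simulation idea and the handling of the circular-versus-linear input are close in spirit to the paper's. The genuine gap is in the treatment of $Q_0$ states and its effect on reverse-determinism.

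When you duplicate $q\in Q_0$ into $q^+$ and $q^-$, both copies must carry the same outgoing $P$-transition $(q,c,s)$, so you get both $(q^+,c,s)$ and $(q^-,c,s)$ in $T$. Suppose $s\in Q_{+1}$, hence $s\in Q^+$ in $T$. At any configuration $(s,h'')$, the letter that an incoming transition must have read is $u(h''-1/2)$, \emph{independently of the source polarity} (a $Q^+\!\to\!Q^+$ predecessor sits at $h''-1$ and reads to its right; a $Q^-\!\to\!Q^+$ predecessor sits at $h''$ and reads to its left; both see the same cell). Thus $(q^+,c,s)$ and $(q^-,c,s)$ are simultaneously reverse-enabled whenever that letter is $c$, violating reverse-determinism of $T$. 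These two $T$-transitions project to the \emph{same} $P$-transition $(q,c,s)$, so no conflict appears in $P$: your closing claim that ``any conflict in $T$ would project to a conflict in $P$'' fails precisely here. The phrase ``distinguished by \ldots\ source/destination polarities'' is the misstep --- in the two-way semantics, reverse-determinism at a target state does not see the source's polarity.

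The paper avoids this by an asymmetric encoding: \emph{every} state of $P$ gets a single primary copy $q_r\in Q^+$, and fixed gadgets (one extra $Q^-$ state for each $q\in Q_0$, two extra $Q^-$ states for each $q\in Q_{-1}$) realise the required head displacement before landing back on $q'_r$. Because each $q'$ has a unique entry point, incoming transitions never collide merely on source polarity, and reverse-determinism of $T$ reduces cleanly to that of $P$. If you want to keep your symmetric $Q_{\pm1}\mapsto Q^\pm$ mapping, you must ensure that only one of $q^+,q^-$ is ever the ``live'' copy and funnel the other through it; doing so without introducing new collisions essentially reproduces the paper's single-entry-point design.
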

\begin{proof}
	The differences between the semantics defined in this document and the one in~\cite{DFJL17} are:
	\begin{enumerate}
		\item The $0$-pebble places the head on a position, and reads it to determine the
		transition, while the two-way transducers place the reading head between position, and
		read either the letter on the left or on the right depending on the polarity of the
		state.
		
		\item The $0$-pebble transducers have only one endmarker $\#$ and cycle on their
		input, while the two-way transducers have $\leftend$ and $\rightend$ marking the left and
		right limits of the input.
		
		\item The $0$-pebble transducers is able to not move its reading head via the set $Q_0$.
		
		\item The move of the reading head of a transition in the $0$-pebble transducer is
		determined fully by the state reached, while two-way transducers give a half move to
		each of the starting and reached states.
	\end{enumerate}
	
	We now explain how to handle each of these differences.
	Let \( P=(Q,\Sigma, \delta, 0, q_i, q_f, \Gamma, \mu) \) be 0-pebble transducer,
	with \( Q=Q_{+1} \uplus Q_0 \uplus Q_{-1} \).
	\begin{enumerate}
		\item For each state $q$ of $P$, we have a \emph{right} copy $\qright{q}$ in $T$ which
		is a $+$ state.  The copy $\qright{q}\in S^{+}$ works as if it is placed on the letter
		on its right. More precisely, a configuration $(q,\head)$ of $P$ with $\head\neq0$ is encoded 
		by the configuration $(\qright{q},\head)$ of $T$,
		a configuration $(q,0)$ of $T$ with $q\neq q_{f}$ is encoded 
		by the configuration $(\qright{q},|u|+1)$ of $T$,
		and the final configuration $(q_{f},0)$ of $P$ is encoded by the final configuration
		$(s_{f},|u|+2)$ of $T$.

		\item For each state $p\in Q_{0}$ of $P$, we add a \emph{stay} copy $\qstay{p}\in S^-$
		to $T$, and for each transition $(q,a,p)$ of $P$, we add in $T$ transitions
		$(\qright{q},a,\qstay{p})$ and $(\qstay{p},b,\qright{p})$ for all $b\in\Sigma_\leftend$.
		See \cref{fig:0peb2w:stay}.
		
		\item Similarly, for each state $p\in Q_{-1}$ of $P$, we add two left copies
		$\qlefto{p},\qleftt{p}\in S^{-}$ in $T$, and for each transition $(q,a,p)$ of
		$P$, there are in $T$ transitions $(\qright{q},a,\qlefto{p})$,
		$(\qlefto{p},b,\qleftt{p})$ and $(\qleftt{p}, b', \qright{p})$ for
		$b\in\Sigma$ and \( b'\in\Sigma_\leftend \).
		See \cref{fig:0peb2w:left}.
		
		\item The transducer $T$ treats $\rightend$ as $\#$.
		The cycling is handled with copies of states.  If $P$ reading $\#$ moves right to a
		state $p\in Q_{+1}$, the two-way transducer $T$ switches to the copy $\qminus{p}$
		and moves back to $\leftend$ where it switches back to $\qright{p}$.  Moving left from
		the first position of the word is handled symmetrically with a $\qplus{p}$ copy of
		$p\in Q_{-1}$ as shown in \cref{fig:0peb2w:left}.
	\end{enumerate}
	
	\begin{figure}[t]
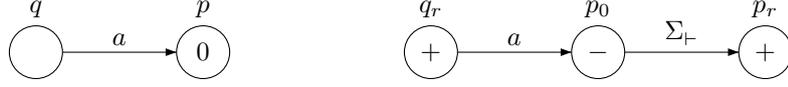

		\centering
		\gusepicture{zeroP}
		\hspace{2cm}
		\gusepicture{zeroT}
		\caption{A transition towards a \( 0 \) state in P. Gadget simulating in \( T \) the transition from the left.}
		\label{fig:0peb2w:stay}
	\end{figure}
	
	\begin{figure}[t]
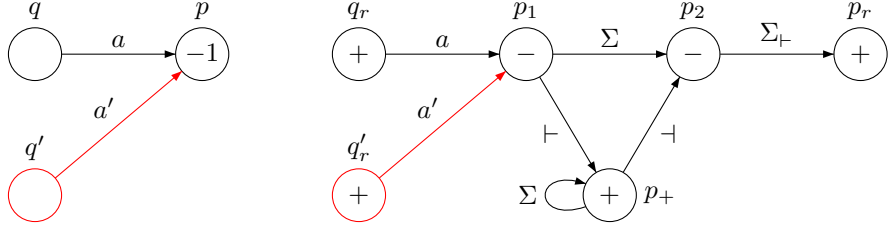

		\centering
		\gusepicture{minusP}
		\hfil
		\gusepicture{minusT}
		\caption{Simulation of a transition going left on the input tape.
			On the left, two transitions towards a \( -1 \) state in transducer \( P \).
			On the right, the gadget in \( T \) simulating these transitions.}
		\label{fig:0peb2w:left}
	\end{figure}
	
	Formally, we define the two-way transducer 
	\( T=(S^+ \uplus S^-, \Sigma, \delta', s_{i},s_{f}, \Gamma, \mu') \) by:
	\begin{itemize}
		\item \( S^+ = S_r \uplus S_+ \uplus \left\{ s_{i},s_{f} \right\} \) where
		\( S_r = \left\{ \qright{q} \mid q \in Q \right\} \), and
		\( S_+ = \left\{ \qplus{q} \mid q \in Q_{-1} \right\} \).
		
		\item \( S^- = S_0 \uplus S_1 \uplus S_2 \uplus S_- \) where
		\( S_0 = \left\{ \qstay{q} \mid q \in Q_{0} \setminus \left\{ q_{i}, q_{f} \right\} \right\} \),
		\( S_1 = \left\{ \qlefto{q} \mid q \in Q_{-1} \right\} \),
		\( S_2 = \left\{ \qleftt{q} \mid q \in Q_{-1} \right\} \), and
		\( S_- = \left\{ \qminus{q} \mid q \in Q_{+1} \right\} \).
		
		\item \( \{ \left( s_{i}, \leftend, q_{ir} \right) \} \cup \{ \left( q_{ir}, b, q_{ir}
		\right) \mid b \in \Sigma \} \subseteq \delta' \): initially, $T$ reads \( \leftend \)
		and moves to the left of \( \rightend \) to reach the encoding $(q_{ir},|u|+1)$ of the
		initial configuration $(q_{i},0)$ of $P$.
		
		\item If \( \left( q, a, p \right) \in \delta \) with $a\neq\#$ then
		$(\qright{q},a,s')\in\delta'$ with
		$s'=
		\begin{cases}
			\qright{p} & \text{if } p \in Q_{+1} \\
			\qstay{p} & \text{if } p \in Q_{0}\setminus\{q_{i},q_{f}\} \\
			\qlefto{p} & \text{if } p \in Q_{-1} 
		\end{cases}$
		
		\item If \( \left( q, \#, p \right) \in \delta \) then
		$(\qright{q},\rightend,s')\in\delta'$ with
		$s'=
		\begin{cases}
			\qminus{p} & \text{if } p \in Q_{+1} \\
			\qstay{p} & \text{if } p \in Q_{0}\setminus\{q_{i},q_{f}\} \\
			s_{f} & \text{if } p=q_{f} \\
			\qlefto{p} & \text{if } p \in Q_{-1} 
		\end{cases}$
		
		\item For all \( q \in Q_{+1} \), we have
		\( \left\{ \left( \qminus{q}, b, \qminus{q} \right) \mid b \in \Sigma \right\}
		\cup\{ \left( \qminus{q}, \leftend, \qright{q} \right) \}	\subseteq \delta' \).
		
		\item For all \( q \in Q_{0} \), we have
		\( \left\{ \left(  \qstay{q}, b, \qright{q}  \right) \mid b \in \Sigma_{\leftend} \right\} \subseteq \delta' \).
		
		\item For all \( q \in Q_{-1} \), we have
		$\{(\qlefto{q},b,\qleftt{q})\mid b\in\Sigma\}\cup\{(\qleftt{q},b',\qright{q})\mid 
		b'\in\Sigma_\leftend\}\subseteq\delta'$ and \\
		$\{(\qlefto{q},\leftend,\qplus{q})\}\cup
		\{(\qplus{q},b,\qplus{q})\mid b\in\Sigma\}\cup
		\{(\qplus{q},\rightend,\qleftt{q}\}\subseteq\delta'$.
		
		\item \( \mu'(s, a, s') = \varepsilon \) if \( s \notin S_{r} \) or $s=q_{ir}$ and
		$a\in\Sigma$, and \( \mu'(\qright{q}, a, s') = \mu(q, a, p) \) with \( s' \in S \)
		obtained from \( p \in Q \) as described above.
	\end{itemize}

	We now show that if \( P \) is reversible, then so is \( T \).
	
	First, notice that whithin the gadgets for each state, $T$ is deterministic:
	if \( s \in S_0 \cup S_1 \cup S_2 \cup S_+ \cup S_- \cup \left\{ s_{i},s_{f} \right\} \),
	then by definition of \( \delta' \) determinism is ensured.
	For instance, if \( s=\qlefto{q} \in S_1 \), the only transitions starting from $s$ are
	\( \left( s, b, \qleftt{q} \right) \) for \( b \in \Sigma \) and
	\( \left( s, \leftend, \qplus{q} \right) \).
	All these transitions are labelled with different letters.
	
	Next, we need to deal with transitions starting from some $s=\qright{q}\in S_{r}$.  Then,
	for all $a\in\Sigma_{\rightend}$, we have \( \left( \qright{q}, a, s' \right) \in
	\delta' \) iff \( \left( q, a, p \right) \in \delta \) where $s'$ is
	\emph{deterministically} defined from $p$ as explained in the definition of $\delta'$.
	Hence, if $P$ is deterministic then so is $T$.
	
	We now show that $T$ is reverse-deterministic.
	We proceed by case analysis on the target state \( s' \) of a transition.
	If \( s'=\qleftt{q} \in S_2 \), the incoming transitions are
	\( \left( \qlefto{q}, b, \qleftt{q} \right) \) for all \( b \in \Sigma \),
	and \( \left( q_+, \rightend, \qleftt{q} \right) \).
	All of them are labelled with different letters, ensuring reverse-determinism for 
	states in $S_{2}$.
	The same holds for states \( s'=\qplus{q} \in S_+ \).
	If \( s'=\qright{q} \in S_r \), and \( q \notin Q_{+1} \), then the same holds again:
	the only way to reach \( \qright{q} \) is whithin its gadget
	(either from $s_{i}$ reading $\leftend$ or from  $q_{ir}$ reading $a\in\Sigma$ if $q=q_{i}$, 
	or from $q_{0}$ if $q\in Q_{0}\setminus\{q_{i}\}$, or from $\qleftt{q}$ if $q\in Q_{-1}$).
	The labels of all these transitions are different, ensuring reverse-determinism.
	The states we have to be careful about are the entry points of each gadget.
	
	If \( s'=\qlefto{p} \in S_1 \), then \( \left( s, a, s' \right) \in \delta' \) 
	iff $s=\qright{q}$ and \( \left( q, a, p \right) \in \delta \) with $a\in\Sigma_{\leftend}$.
	It follows that if \( P \) is reverse-deterministic, then so is $T$ at states in $S_{1}$.
	We proceed similarly when $s'=p_{0}\in S_{0}$.
	
	If \( s'=\qminus{p} \in S_- \), there is a transition $(\qminus{p},b,\qminus{p})$ for each \( b \in \Sigma \)
	(that is part of the gadget allowing to simulate going past the right end marker),
	and a transition \( \left( \qright{q}, \rightend, \qminus{p} \right) \) for each
	\( \left( q, \#, p \right) \in \delta \).
	Here again, if \( P \) is reverse-deterministic, then so is $T$ at states in $S_{-}$.
	
	Finally, assume that \( s'=\qright{p} \in S_r \) with \( p \in Q_{+1} \).
	Then, we have a transition $(s,a,\qright{p})\in\delta'$ iff either $a=\leftend$ and $s=\qminus{p}$
	(part of the gadget allowing to go past the right end marker), or $a\in\Sigma$ and 
	$s=\qright{q}$ with $(q,a,p)\in\delta$.  
	As before, if \( P \) is reverse-deterministic then at most one transition
	reaches \( s'=\qright{p} \) labelled by any given letter.
	
	With this we conclude that if \( P \) is reverse-deterministic,
	then so is \( T \).
\end{proof}

\begin{lemma}\label{lem-2wayTo0peb}
	Given a (reversible) two-way transducer $T$ with $n$ states, 
	one can construct an equivalent $0$-pebble (reversible) transducer $P$ with $n$ states.
\end{lemma}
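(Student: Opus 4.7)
The plan is to keep the same state set and transitions as $T$, merely re-partitioning the states according to the 0-pebble convention and unifying the two endmarkers into $\#$. Concretely, I define $P=(Q,\Sigma,\delta',0,q_i,q_f,\Gamma,\mu')$ with $Q_{0}=\{q_i,q_f\}$, $Q_{+1}=Q^+\setminus\{q_i,q_f\}$, and $Q_{-1}=Q^-$; for each transition $(q,a,p)\in\delta$ of $T$, I add $(q,a',\true,\nop,p)\in\delta'$ with the same output, where $a'=a$ if $a\in\Sigma$ and $a'=\#$ if $a\in\{\leftend,\rightend\}$. The state count is preserved exactly.

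I will prove equivalence by a bijection between the runs of $T$ on $\leftend u\rightend$ and those of $P$ on $\#u$, encoding a $T$-configuration $(q,h)$ as the $P$-configuration $(q,h')$ with $h'=h\bmod(|u|+1)$ if $q\in Q^+\setminus\{q_f\}$, $h'=(h-1)\bmod(|u|+1)$ if $q\in Q^-$, and with the special case $(q_f,|u|+2)\mapsto(q_f,0)$. Using $u(h+1/2)=u[h]$ for $1\leq h\leq|u|$ and $u(h-1/2)=u[h-1]$ for $2\leq h\leq|u|+1$, and identifying both endmarkers with $\#$ at $P$-position~$0$, the letter read is the same in $T$ and $P$. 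A case split on the polarities of $q$ and $p$ then shows that the $T$-move matches the $P$-move, because in the 0-pebble semantics the head shift is determined solely by the polarity of the destination state, which coincides with the $T$-polarity via our encoding. The initial transition $(q_i,\leftend,p)$ (with $p\in Q^+$) correctly moves $P$ from $(q_i,0)$ to $(p,1)$ since $q_i\in Q_{0}$ and $p\in Q_{+1}$, and the accepting transition $(q,\rightend,q_f)$ reaches the final $P$-configuration $(q_f,0)$ from $(q,0)$ since $q_f\in Q_{0}$.

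The main obstacle is that $P$ conflates $\leftend$ and $\rightend$ into the single symbol $\#$, so preservation of reversibility requires care. For determinism, any state $q$ of $T$ can read $\leftend$ only as $q=q_i$ (at $T$-position~$0$) or as some $q\in Q^-$ (at $T$-position~$1$), and can read $\rightend$ only as some $q\in Q^+\setminus\{q_i\}$ (at $T$-position~$|u|+1$); these cases are mutually exclusive on $q$, so unifying the endmarkers does not introduce any new $\#$-transitions from a given state, and determinism of $T$ transfers verbatim. For reverse-determinism, two $P$-transitions $(q_1,\#,\nop,p)$ and $(q_2,\#,\nop,p)$ simultaneously reverse-enabled at $(p,h)$ force the predecessor position $h-\mathrm{polarity}(p)$ to be the unique $P$-position of $\#$, namely $0$; undoing the encoding, both corresponding $T$-transitions reverse-enable at the same $T$-configuration, which is $(p,1)$ reading $\leftend$ if $p\in Q^+$, $(p,|u|+1)$ reading $\rightend$ if $p\in Q^-$, or $(q_f,|u|+2)$ if $p=q_f$. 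Reverse-determinism of $T$ then forces $q_1=q_2$. The polarity conventions on endmarker transitions of $T$ (that $\leftend$ goes to $Q^+$ and $\rightend$ goes to $Q^-\cup\{q_f\}$) are precisely what makes this case analysis go through without any state duplication.
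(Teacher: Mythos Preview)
Your proof is correct and follows essentially the same approach as the paper: keep the state set, put $q_i,q_f$ into $Q_0$ and otherwise inherit the polarities, collapse $\leftend,\rightend$ into $\#$, and argue that reversibility survives because the endmarker conventions of two-way transducers force $\leftend$-transitions and $\rightend$-transitions to have disjoint source polarities and disjoint target polarities. The paper's proof is terser about the run correspondence (it simply says ``we obtain the same runs''), whereas you spell out an explicit configuration encoding and verify the move cases; both arrive at the same reversibility argument.
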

\begin{proof}
	Let $T=(Q^{+}\cup Q^{-},\Sigma,\delta,q_{i},q_{f},\Gamma,\mu)$.
	The $0$-pebble transducer $P$ is defined with the same set of states and transitions as 
	$T$, with the following minor change.
	
	First, we need to handle the position of the reading head and the changes on how the
	moves are decided.  We do so by maintaining the fact that the reading head in $P$ is
	always on the position to be read.  Given a transition $(p,a,q)$ in the two-way
	semantics, the state $p$ decides which position to read, and the state $q$ decides which
	position will be read next.  It is either the next one if $q\in Q^+$ or the previous one
	if $q\in Q^-$.  Then,  keeping the same polarity for the states of $T$ and the ones of
	$P$ (except for $q_{i},q_{f}$), we obtain the same runs.  More precisely, the states of
	$P$ are partioned as $S_{+1}=Q^{+}\setminus\{q_{i},q_{f}\}$, $S_{0}=\{q_{i},q_{f}\}$ and
	$S_{-1}=Q^{-}$.

	We also keep essentially the same transitions in $P$ as in $T$, simply replacing
	$\leftend$ and $\rightend$ in transitions of $T$ with $\#$ in transitions of $P$, since
	$0$-pebble transducers have only one endmarker $\#$ standing for both $\leftend$ and
	$\rightend$.  We need to argue that this does not impact reversibility.  In fact, the only
	transitions in $T$ reading $\leftend$ are of the form $(p,\leftend,q)$ where $p\in
	Q^-\cup\{q_{i}\}$ and $q\in Q^+\setminus\{q_{f}\}$.  For $\rightend$, these transitions are
	of the form \( \left( p, \rightend, q \right) \) where \( p \in Q^+\setminus\{q_{i}\} \)
	and \( q \in Q^-\cup\{q_{f}\} \).  So there is no overlap between transitions in $T$
	reading $\leftend$ and those reading $\rightend$.  Hence, $\#$ can acts as both
	$\leftend$ and $\rightend$ while preserving reversibility.
\end{proof}

\section{Simulating equality tests of pebbles}
\label{app:equiv}
In this section, we show that one can simulate a $k$-pebble transducer  $\auto$ with equality tests   using a basic $k$-pebble transducer $\basic$. Since the main difference between the two is in the tests, one where we can easily check whether two pebbles are on the same position or not, we need to simulate this aspect in the basic pebble one. Let $\auto=(Q, \Sigma, \delta, k, q_0, q_f, \Gamma, \mu)$ be a $k$-$\mathsf{PT}_{=}$.

The idea is to come up with finite abstractions of the relative  positions of pebbles.
Recall that basic pebble transducers allow tests that check if a pebble is on the head or not, so we can directly use these while constructing $\basic$. In $\basic$, we work with guards that are  bit vectors $\in \{0,1\}^k$ which represent a pebble configuration at a head position. Thus, the  all 1-bitvector $(1,1,\dots,1)$ denotes 
all pebbles are at the head and represents the guard $\bigwedge_{i=1}^k(\head=\pebble_i)$
while the bitvector $(1,0,\dots, 0)$ where only the first bit is a  1 represents 
$\head=\pebble_1 \wedge \bigwedge_{i=2}^k \neg (\head=\pebble_i)$ and so on. The guards 
used in $\basic$ are therefore ``complete'' in the sense, for each transition 
of $\auto$ testing $\head=\pebble_j$ (or $\neg(\head=\pebble_j)$), 
we have $2^{k-1}$ transitions in $\basic$ guarded by the $2^{k-1}$ distinct bitvectors 
where all of them have the $j$th bit 1 (resp., $j$th bit 0) and all combinations
for others.

Thus, the key elements in the construction of $\basic$ are the following.
\begin{enumerate}
	\item[(a)] Extend the states of $\auto$ with a $k \times k$ matrix $M$ which tracks 
	the positions of all pebbles, in such a way that $M_{i,i}=1$ iff $\pebble_i$ is dropped, 
	and $M_{i,j}=M_{j,i}=1$ iff $\pebble_i, \pebble_j$ are at the same position. Thus, the set of states goes from $Q$ in $\auto$ to $Q \times M$ in $\basic$. In essence, 
	guards in $\auto$ of the form $\pebble_i=\pebble_j$  are only enabled in $\basic$ from states of the form $(q,M)$ where $M_{i,j}=M_{j,i}=1$.

	\item[(b)]  As mentioned above, we have a blow up in the transitions where 
	we have one transition for  each possibility of pebble arrangement for any head position.  
\end{enumerate}

We construct the basic pebble transducer $\mathcal{B}=(Q', \Sigma, \delta', k, q'_0, q'_f, \Gamma, \mu')$ as follows.

\begin{enumerate}
	\item $Q'=Q \times \{0,1\}^{k\times k}$. 
	We extend the set of states 
	$Q$ with information regarding the relative positions of the $k$ pebbles.
	A state in $Q'$ thus has the form $(q, \alpha)$ where  $\alpha=(\alpha_{i,j})_{1 \leq i, j \leq k}$ is a 	$k \times k$ Boolean matrix, with the encoding that $\alpha_{i,j}=1$ iff $\pebbles_i=\pebbles_j$, that is, both pebbles $\pebble_i, \pebble_j$ are at the same position. $\alpha_{i,i}=1$ is an encoding of the fact that pebble $\pebble_i$ has been dropped somewhere on the word. Finally, $(q, \alpha) \in Q'_i$, $i \in \{+1,-1,0\}$ 
	iff $q \in Q_i$.  
	
	\item The initial state $q'_i=(q_i, \overline{0})$ where $\overline{0}$ denotes the 
	$k \times k$ zero matrix. This signifies that no pebbles are dropped at the beginning.
	\item The final state $q'_f=(q_f, \overline{0})$. Recall that, for acceptance, all pebbles are lifted. 
	\item The transitions $\delta'$ are defined in two stages.  
	For each transition 
	$t=(q,a, \varphi, \op, q')$ in $\auto$, we define transitions $t'=((q, \alpha), a,	b, \op, (q', \alpha'))$ where $a \in \Sigma$, $b \in \{0,1\}^k$
	such that $(\alpha,b)$ constitute a consistent abstraction pair : that is, $\alpha, b$ are consistent with each other,  the guard $\varphi$ as well as the operation $\op$.

	\subsection*{Basic Consistency of the pair ($\alpha, b$)}		
	This is a  basic consistency check done using  invariants I1-I5. The component $\alpha$ in source state and the bitvector guarding a transition from a source state $(q, \alpha))$ must be consistent.

	For each $\alpha$ which is part of the source state $(q, \alpha)$ in $t'$, 
	an outgoing transition can be decorated with the bitvector $b$ only if the following invariants (I1)-(I5) are  true. 
	
	\begin{itemize}
		\item [I1] The entries of $\alpha$ are symmetric and transitive. It is easy to see that for $1 \leq i, j \leq k$, if pebble  $\pebble_i$ and pebble $\pebble_j$ are at the same position, then $\alpha_{i,j}=1=\alpha_{j,i}$. Likewise, if $\pebble_i, \pebble_j$ are not at the same position, then $\alpha_{i,j}=0=\alpha_{j,i}$. Likewise, if $\pebble_i, \pebble_j$ are at the same position, and    $\pebble_j, \pebble_n$ are at the same position, then $\pebble_i, \pebble_n$ are at the same position. That is, 
		$\alpha_{i,j}=1 \wedge \alpha_{j,n}=1 \Rightarrow \alpha_{i,n}=1$.
		\item[I2] For $1 \leq i \leq k$, $(b_i=1) \Rightarrow (\alpha_{i,i}=1)$. Recall that $b_i=1$ represents that pebble $\pebble_i$ is on the head position, hence indeed it has been dropped implying $\alpha_{i,i}=1$. 
		\item[I3] For $1 \leq  i, j \leq k$, $(b_i=1=b_j) \Rightarrow \alpha_{i,j}=1$. 
		Indeed, if pebbles $\pebble_i, \pebble_j$ are both at the head position, they are at the same position. Note that by I1, we also get $\alpha_{j,i}=1$.
		\item[I4] For $1 \leq  i, j \leq k$, $\alpha_{i,j}=1 \Rightarrow (\alpha_{i,i}=1 \wedge \alpha_{j,j}=1 \wedge b_i=b_j)$. If pebbles $\pebble_i, \pebble_j$ are at the same position, then indeed they have both been dropped, and they are both either at the head position, or both are not on the head. 
		\item[I5] $(1 \leq i < j \leq k \wedge \alpha_{j,j}=1) \Rightarrow \alpha_{i,i}=1$. 
		By the pebble stack policy, if a higher indexed pebble $\pebble_j$ has been dropped, then indeed, all lower indexed pebbles $\pebble_i$ have been dropped. 
	\end{itemize}
	
	\subsection*{Transition Consistency : $\alpha, b \models \varphi, \op$}		
	Next, we present the consistency conditions 
	between $\alpha, b$ and $\varphi$ as well as $\op$. 
	Formally, 	$\alpha, b \models \varphi$ is defined as follows. 
	\begin{itemize}
		\item 	If $\varphi$ is an atom of the form $\head=\pebble_i$, then $\alpha, b \models (\head=\pebble_i)$ if $b_i=1$. 
		\item If $\varphi$ is an atom of the form $\pebble_i=\pebble_j$, then $\alpha, b \models (\pebble_i=\pebble_j)$ if $\alpha_{i,j}=1$. 
	\end{itemize}
	We now proceed to define $\alpha, b \models \op$. This depends on the operation $\op$ and $b$.  Here, we also specify the target $\alpha'=\op(\alpha,b)$ obtained.  
	\begin{itemize}
		\item If $\op=\nop$. $\alpha, b \models \nop$ for all $\alpha, b$. In this case, $\alpha'=\alpha$ since no pebble has been touched.
		\item $\op=\push~\pebble_n$. $\alpha, b \models \push~\pebble_n$ if $\alpha_{n,n}=0$ and $\alpha_{n-1,n-1}=1$ if $n >1$. $\alpha'$ is defined as follows. 
		First of all, $\alpha'_{i,j}=1=\alpha_{i,j}$ for all 	$i, j < n$. Likewise, 
		$\alpha'_{i,j}=0$ if $i > n$ or $j > n$. Last, $\alpha'_{n,n}=1$, and 
		$\alpha'_{n,j}=1$ if $j<n$ and $b_j=1$.  $\alpha'_{j,n}=0$ for $j>n$ or $b_j=0$.
		\item $\op=\pop~\pebble_n$. $\alpha, b \models \pop~\pebble_n$ if $b_n=1$ and $\alpha_{n+1,n+1}=0$ if $n<k$. $\alpha'$ is defined as follows.
		$\alpha'_{i,j}=\alpha_{i,j}$ if $i, j < n$, and $\alpha'_{i,j}=0$ if $i \geq n$ or $j \geq n$.
	\end{itemize}
	
	\item Finally, the output $\mu$ associated to a transition $t$ in $\auto$ is the same as that associated with $t'$ in $\basic$.

\end{enumerate}

This concludes the construction of $\basic$ from $\auto$. For a $\alpha, b$ pair, checking if they are consistent with each other takes time linear in their sizes. Likewise, given a $\varphi$ and $\op$, checking whether $\alpha, b \models \varphi, \op$ takes time linear 
in their sizes.

\subsection{Configuration based consistency}
\cref{sec:constr} presented a construction of the transducer $\basic$
without equality checks, where the new transitions were obtained by adding on finite abstractions  of the pebble space to the state, and expanding the  transitions 
to include an abstraction of the pebbles on the head.

To prove the correctness of our construction, we have to reason that from any configuration $C=(q, \head, \pebbles)$ which enables a transition $t=(a, \varphi, \op)$ in $\auto$,  there is a configuration $C'=((q, \alpha), \head, \pebbles)$ in $\basic$ which enables the corresponding transition $t'=(a, b, \op)$ and conversely. Towards this, we define 
a map $\Phi=(\Phi_1, \Phi_2)$ between configurations of $\auto$ and $\basic$ as follows. 

Given a configuration $C=(q, \head, \pebbles)$  enabling  $t=(a, \varphi, \op)$ in  $\auto$, $\Phi(C)$  is the unique configuration $C'=((q, \Phi_1(\pebbles)), \head, \pebbles)$ of $\basic$ enabling  $t'=((q, \Phi_1(\pebbles)),a, \Phi_2(\head, \pebbles), \op)$ where $\Phi_1, \Phi_2$ respectively compute 
$\alpha \in \{0,1\}^{k \times k}$ and the bitvector guard $b \in \{0,1\}^k$ as follows.

\begin{align*}
	\Phi_1(\pebbles) = \quad \left\{
	\begin{array}{r l}
		\alpha_{i,j}=1 & \text{if}~i, j \leq |\pebbles| \wedge \pebbles_i=\pebbles_j \\
		\alpha_{i,j}=0 & \text{otherwise}	
	\end{array}
	\right. \\
	\Phi_2(\head, \pebbles)= \quad \left\{
	\begin{array}{r l}
		b_i=1 & \text{if}~i \leq |\pebbles| \wedge \head=\pebbles_i ~~~~~~~~~~\\
		b_i=0 & \text{otherwise}	
	\end{array}
	\right. \\
\end{align*}

\begin{lemma}\label{lem:basic}
	Let $C=(q, \head, \pebbles)$ be a configuration in $\auto$.
	The pair $(\Phi_1(\pebbles), \Phi_2(\head, \pebbles))$ satisfies basic 
	consistency. 
	
\end{lemma}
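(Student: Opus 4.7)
\medskip

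The plan is to verify the five invariants (I1)--(I5) directly from the definitions of $\Phi_1$ and $\Phi_2$, since basic consistency is exactly the conjunction of these invariants. Let $\alpha = \Phi_1(\pebbles)$ and $b = \Phi_2(\head, \pebbles)$. All five checks reduce to the observation that $\alpha_{i,j} = 1$ iff $i,j \leq |\pebbles|$ and $\pebbles_i = \pebbles_j$, and $b_i = 1$ iff $i \leq |\pebbles|$ and $\head = \pebbles_i$.

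First I would check (I1): symmetry of $\alpha$ is immediate since $\pebbles_i = \pebbles_j$ is a symmetric relation; transitivity is similarly immediate since equality between positions is transitive, and the side condition $i,j \leq |\pebbles|$ is preserved across the chain. Next, for (I2), if $b_i = 1$ then in particular $i \leq |\pebbles|$ and $\pebbles_i = \pebbles_i$ trivially, giving $\alpha_{i,i} = 1$. For (I3), if $b_i = b_j = 1$ then $\head = \pebbles_i$ and $\head = \pebbles_j$, so $\pebbles_i = \pebbles_j$ and thus $\alpha_{i,j} = 1$.

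For (I4), from $\alpha_{i,j} = 1$ we get $i,j \leq |\pebbles|$ and $\pebbles_i = \pebbles_j$, which immediately yields $\alpha_{i,i} = \alpha_{j,j} = 1$; moreover $\head = \pebbles_i$ iff $\head = \pebbles_j$, hence $b_i = b_j$. Finally, (I5) follows from the stack discipline on pebbles: $\alpha_{j,j} = 1$ forces $j \leq |\pebbles|$, and for any $i < j$ we then have $i \leq |\pebbles|$ as well, so $\alpha_{i,i} = 1$.

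No step should present any real obstacle; the lemma is a direct unfolding of definitions, and its main purpose is to make the invariant I1--I5 automatically available when later proving that $\Phi$ is indeed a correct simulation map between configurations of $\auto$ and $\basic$. The only mildly subtle point is (I5), which implicitly uses the stack discipline of pebbles encoded in the fact that $\pebbles$ is a word of length $|\pebbles|$, so indices are downward-closed.
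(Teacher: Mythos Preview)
Your proposal is correct and follows essentially the same approach as the paper: a direct case-by-case verification of invariants (I1)--(I5) from the definitions of $\Phi_1$ and $\Phi_2$. Your write-up is in fact slightly cleaner than the paper's, which argues some of the cases (notably (I2) and (I3)) by enumerating subcases rather than reasoning directly from the implications; and your closing remark that (I5) is really about downward-closure of indices in a length-$|\pebbles|$ word rather than any genuine stack-discipline property is an apt clarification of what both proofs actually use.
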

\begin{proof}
	\begin{enumerate}
		\item First of all, the $\alpha$ computed by $\Phi_1$ is symmetric and transitive. Hence, (I1) is preserved. 
		\item 
		Next, consider $1 \leq i \leq k$ such that $i \leq |\pebbles|$ and 
		$h=\pebbles_i$. Then $\Phi_2$ sets $b_i=1$. We also have $\pebbles_i=\pebbles_i$, and $\Phi_1$ assigns $\alpha_{i,i}=1$.  Now,  for $i \leq |\pebbles|$ if we do not have $h=\pebbles_i$, then 
		$\Phi_2$ sets $b_i=0$. 
		In both cases,  (I2) is preserved by $\Phi_1, \Phi_2$. 
		\item Next, consider 
		$1 \leq i, j \leq |\pebbles|$ and  $\head=\pebbles_i=\pebbles_j$.
		Then $\Phi_2$ sets $b_i=1=b_j=1$. However, 
		$\head=\pebbles_i=\pebbles_j \Rightarrow \pebbles_i=\pebbles_j$
		and $\Phi_1$ sets $\alpha_{i,j}=1$. Noe suppose 
		$\pebbles_i=\pebbles_j$, but $\head \neq \pebbles_i$ (and 
		$\head \neq \pebbles_j$). Then $b_i=0=b_j$. However,  
		$\Phi_1$ sets $\alpha_{i,j}=1$. In both cases,  (I3)  holds. 
		\item Likewise, for $1 \leq i, j \leq k$ and $\pebbles_i=\pebbles_j$, $\Phi_1$ assigns $\alpha_{i,j}=1$. Then clearly, either $\head=\pebbles_i=\pebbles_j$ or 
		$\head \neq (\pebbles_i=\pebbles_j)$, and in both cases, $b_i=b_j$ 
		under $\Phi_2$. This shows that (I4) holds. 
		\item Finally, 
		for all $1 \leq j \leq |\pebbles|$,  $\pebbles_j=\pebbles_j$. Thus, 
		for $i < j$, $\pebbles_i=\pebbles_i$ by the stack discipline. 
		$\Phi_1$ assigns 
		$\alpha_{j,j}=1$ and  $\alpha_{i,i}=1$, preserving (I5).
	\end{enumerate}
	
\end{proof}

\begin{lemma}\label{lem:trans}
	Let $C=(q, \head, \pebbles)$ be a configuration in $\auto$. For 
	any guard $\varphi$ and operation $\op$ in $\auto$, 
	$(\head, \pebbles) \models \varphi$ (resp., $(\head, \pebbles) \models \op$) $\Leftrightarrow (\Phi_1(\pebbles), \Phi_2(\head, \pebbles)) \models \varphi$ (resp., $(\Phi_1(\pebbles), \Phi_2(\head, \pebbles)) \models\op$).	Finally, 
	$\op(\Phi_1(\pebbles), \Phi_2(\pebbles, \head))=\Phi_1(\op(\pebbles, \head))$.
	
\end{lemma}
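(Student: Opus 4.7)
The plan is to prove the three statements separately, each by a case analysis on the syntactic form of $\varphi$ (resp.\ on the operation $\op$), using the explicit definitions of $\Phi_1$ and $\Phi_2$ together with the basic consistency relations already verified in \cref{lem:basic}. Throughout, I abbreviate $\alpha=\Phi_1(\pebbles)$ and $b=\Phi_2(\head,\pebbles)$.

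\medskip

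\noindent\emph{Step 1: test equivalence.} Since guards $\varphi$ are conjunctions (and negations) of atoms, a straightforward induction on the Boolean structure reduces the claim to atoms. For an atom $(\head=\pebble_i)$, $\pebbles,\head\models(\head=\pebble_i)$ holds iff $i\leq|\pebbles|$ and $\pebbles_i=\head$, which by definition of $\Phi_2$ is exactly the condition $b_i=1$, i.e.\ $\alpha,b\models(\head=\pebble_i)$. For an atom $(\pebble_i=\pebble_j)$, $\pebbles,\head\models(\pebble_i=\pebble_j)$ holds iff $i,j\leq|\pebbles|$ and $\pebbles_i=\pebbles_j$, which by definition of $\Phi_1$ is exactly $\alpha_{i,j}=1$.

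\medskip

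\noindent\emph{Step 2: operation-enabledness equivalence.} I proceed by the three cases of $\op$. The case $\op=\nop$ is trivial since both sides are tautologies. For $\op=\drop{n}$: the $\auto$-side requires $|\pebbles|=n-1$, while the $\basic$-side requires $\alpha_{n,n}=0$ and (if $n>1$) $\alpha_{n-1,n-1}=1$. Using $\Phi_1$ together with the stack discipline (invariant (I5)), $\alpha_{j,j}=1$ iff $j\leq|\pebbles|$; hence $\alpha_{n,n}=0\wedge\alpha_{n-1,n-1}=1$ iff $|\pebbles|=n-1$. For $\op=\lift{n}$: the $\auto$-side requires $|\pebbles|=n$ and $\head=\pebbles_n$, while the $\basic$-side requires $b_n=1$ and (if $n<k$) $\alpha_{n+1,n+1}=0$. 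Again using $\Phi_1$ and $\Phi_2$, $b_n=1$ iff $n\leq|\pebbles|$ and $\pebbles_n=\head$, and $\alpha_{n+1,n+1}=0$ iff $n+1>|\pebbles|$; conjoining the two yields the equivalence.

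\medskip

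\noindent\emph{Step 3: commutation of the update with $\Phi_1$.} I verify $\op(\alpha,b)=\Phi_1(\op(\pebbles,\head))$ by a direct index-by-index comparison for each $\op$. For $\nop$ both sides equal $\alpha$. For $\drop{n}$ with $\pebbles'=\pebbles\cdot\head$, I check the three index regions: entries $(i,j)$ with $i,j<n$ are unchanged by both $\drop{n}$-on-$\alpha$ and by $\Phi_1$ (the $n$-th pebble does not affect older ones); entries with $\max(i,j)>n$ are $0$ on both sides because pebble $n$ is the new top of the stack; and the new row/column at index $n$ satisfies $\alpha'_{n,j}=1$ iff $b_j=1$, i.e.\ iff $\pebbles_j=\head=\pebbles'_n$, which is exactly $\Phi_1(\pebbles')_{n,j}$. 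For $\lift{n}$ with $\pebbles'=\pebbles_1\cdots\pebbles_{n-1}$, the new matrix is obtained by zeroing the $n$-th row and column (and everything further), which matches $\Phi_1(\pebbles')$ since entries with $\max(i,j)\geq n$ correspond to pebbles no longer in the stack.

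\medskip

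\noindent\emph{Expected obstacle.} None of the three steps is deep; the only subtlety is in Step~3, where one must be careful that the matrix-level definition of $\op(\alpha,b)$ from \cref{sec:constr} uses precisely the information in $b$ to recover whether two pebbles collide, and that this information is indeed what $\Phi_1(\op(\pebbles,\head))$ computes. This is where invariants (I2)--(I4) from \cref{lem:basic} are needed: they guarantee that $b$ is a faithful snapshot of the collisions of the new pebble with the previously dropped ones, which is the only piece of data used by the update rule for $\drop{n}$.
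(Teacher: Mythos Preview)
Your proposal is correct and follows essentially the same approach as the paper: both proceed by case analysis on the atomic tests and on the three operations, unfolding the definitions of $\Phi_1$ and $\Phi_2$ directly. The paper treats the forward and backward implications separately while you argue both directions at once via iff-chains, and you make the Boolean-structure induction explicit in Step~1, but these are presentational differences only. One small remark: in Step~2 you cite invariant~(I5) to get $\alpha_{j,j}=1\Leftrightarrow j\leq|\pebbles|$, but this already follows immediately from the definition of $\Phi_1$; the basic-consistency invariants are not actually needed in this lemma since $\alpha$ and $b$ are computed from a concrete $(\pebbles,\head)$.
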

\begin{proof}
	
	$(\head, \pebbles) \models \varphi, \op$ has been defined earlier. We perform a case analysis on $\varphi$. 
	\begin{enumerate}
		\item $\varphi=(\head=\pebbles_i$). $(\head, \pebbles) \models \varphi$ iff $1 \leq i \leq |\pebbles|$ and $\pebbles_i=\head$. By construction of $\Phi_1, \Phi_2$, 
		$b_i=1$. By definition of transition consistency,  we obtain $(\Phi_1(\pebbles), \Phi_2(\head, \pebbles)) \models \varphi$.	
		\item $\varphi=(\pebbles_i=\pebbles_j$). $(\head, \pebbles) \models \varphi$ iff $1 \leq i \leq |\pebbles|$ and $\pebbles_i=\pebbles_j$. By construction of $\Phi_1$, 
		$\alpha_{i,j}=1$. By definition of transition consistency,  we obtain $(\Phi_1(\pebbles), \Phi_2(\head, \pebbles)) \models \varphi$.	
	\end{enumerate}
	We perform a case analysis on $\op$. 
	\begin{enumerate}
		\item $\op=\nop$. This is trivial. In this case, 
		$\op(\pebbles, \head)=\pebbles$. So, $\Phi_1(\op(\pebbles, \head))=\Phi_1(\pebbles)$.
		This coincides with the definition of $\alpha'=\op(\Phi_1(\pebbles), \Phi_2(\pebbles, \head))$ in $\basic$ since $\nop$ does not change anything. 
		\item $\op=\push~\pebbles_n$. $(\head, \pebbles) \models \push~\pebbles_n$ iff $|\pebbles|=n-1$. By construction of $\Phi_1$, $\alpha_{n,n}=0$ as $n > |\pebbles|$ and 
		$\alpha_{n-1,n-1}=1$ as $n-1 \leq |\pebbles|$. By definition of transition consistency,  we obtain $(\Phi_1(\pebbles), \Phi_2(\head, \pebbles)) \models \op$. 
		
		Now, $\op(\pebbles,\head)=\pebbles'=\pebbles.\head$, and 
		$|\pebbles'|=n$. Then $\Phi_1(\pebbles')$ assigns $\alpha_{n,n}=1$, 
		$\alpha_{i,n}=1$ for all $\pebbles_i=\pebbles_n$ and 
		$\alpha_{n+1, n+1}=0$. By construction of $\basic$, it can be seen that  
		$\alpha'=\op(\Phi_1(\pebbles), \Phi_2(\head, \pebbles))$ is precisely what we get above.

		\item $\op=\pop~p_n$. $(\head, \pebbles) \models \pop~p_n$ iff $|\pebbles|=n \geq 1$ and $\head=\pebbles_n$. By construction of $\Phi_2$, $b_n=1$, and by $\Phi_1$, 
		$\alpha_{n+1,n+1}=0$ as $n+1 >  |\pebbles|$ and $\alpha_{n,n}=1$ as $n \leq |\pebbles|$. 
		By definition of transition consistency,  we obtain $(\Phi_1(\pebbles), \Phi_2(\head, \pebbles)) \models \op$.
		
		Now, $\op(\pebbles,\head)=\pebbles'=\pebbles_1\dots \pebbles_{n-1}$. 
		$\Phi_1(\pebbles')$ assigns $\alpha_{i,j}=0$ for $i, j \geq n$ as  $n > |\pebbles'|$, 
		$\alpha_{i,j}=1$ for all $i, j \leq |\pebbles'|=n-1$ and $\pebbles_i=\pebbles_j$. 
		The definition of $\basic$ precisely assigns the same to $\alpha'=\op(\Phi_1(\pebbles), \Phi_2(\head, \pebbles))$. 
	\end{enumerate}
	
	Thus, $(\head, \pebbles) \models \varphi, \op$ $\Rightarrow$ $(\Phi_1(\pebbles), \Phi_2(\head, \pebbles)) \models \varphi, \op$. \cref{figcommu} shows the commutation 
	of applying $\op$ and $\Phi$.

	Conversely, assume $(\Phi_1(\pebbles), \Phi_2(\head, \pebbles)) \models \varphi, \op$. 
	Let $\alpha=\Phi_1(\pebbles), b=\Phi_2(\head, \pebbles)$. 
	We have to show that $\head, \pebbles \models \varphi, \op$. 
	First we consider $\varphi$. 
	\begin{enumerate}
		\item $\varphi=(\head=\pebbles_i)$. By definition, $\alpha, b \models \varphi$ if $b_i=1$. 
		By the definition of $\Phi_2(\head, \pebbles)$, $b_i$ is set to 1 when $\head=\pebbles_i$
		and $i \leq |\pebbles|$. These are precisely the conditions needed for 	$\head, \pebbles \models \varphi$.
		\item $\varphi=(\pebbles_i=\pebbles_j)$. 	By definition, $\alpha, b \models \varphi$ if $\alpha_{i,j}=1$. By construction of $\Phi_1(\pebbles)$, $\alpha_{i,j}$ is assigned 1 
		when $1 \leq i, j \leq |\pebbles|$ and $\pebbles_i=\pebbles_j$. These are precisely the conditions needed for $\head, \pebbles \models \varphi$.  
	\end{enumerate}
	
	Now, we consider $\op$. 
	\begin{enumerate}
		\item  $\op=\nop$. In this case, neither the pebble stack nor the head is affected. Hence, 
		if $\alpha, b \models \op$, we also have $\head, \pebbles  \models \op$.
		\item $\op=\push~\pebbles_n$. By definition, $\alpha,b \models \op$ if $\alpha_{n,n}=0, 
		\alpha_{n-1,n-1}=1$. By definition of $\Phi_1$, $\alpha_{n-1,n-1}$ is assigned 1 when $\pebbles_{n-1}=\pebbles_{n-1}$ and $n-1 \leq |\pebbles|$; also, $\alpha_{n,n}$ is assigned 0 when $n > |\pebbles|$ and $\neg(\pebbles_n=\pebbles_n)$. Together, 
		$n-1 \leq |\pebbles|$ and $n > |\pebbles|$ gives us $|\pebbles|=n-1$. 
		The definition 	 of $\Phi_1$ also ensures that $\pebbles_i=\pebbles_i$ for all $i \leq n-1$. 	 These are precisely the conditions needed for 
		$\head, \pebbles \models \push~\pebbles_n$.
		\item $\op=\pop~\pebbles_n$. By definition, $\alpha,b \models \op$ if $b_n=1$ and $\alpha_{n+1,n+1}=0$, $n < |\pebbles|$. By the definition of $\Phi_1(\pebbles)$ and  $\Phi_2(\head, \pebbles)$, $b_n$ is assigned 1 when 
		$n \leq |\pebbles|$ and $\head=\pebbles_n$. Likewise, $\alpha_{n+1,n+1}$ is assigned 0 
		when $n+1 > |\pebbles|$ or $\neg(\pebbles_{n+1}=\pebbles_{n+1})$. $n \leq |\pebbles|$ and 
		$n+1 > |\pebbles|$ together say that $|\pebbles|=n$. 
		By definition of $\head,\pebbles \models \op$, we see that 
		$\head,\pebbles \models \pop~\pebbles_n$ since we have $|\pebbles|=n$ and $\head=\pebbles_n$. 
	\end{enumerate}
	
	Thus, $(\Phi_1(\pebbles), \Phi_2(\head, \pebbles)) \models \varphi, \op$
	$\Rightarrow$ 
	$(\head, \pebbles) \models \varphi, \op$.  
\end{proof}

\begin{lemma}\label{lem:conf}
	Let $(\alpha, b)$ be a pair satisfying basic consistency.	Then 
	there exists $\head, \pebbles$  such that $(\alpha,b)=(\Phi_1(\pebbles), \Phi_2(\head, \pebbles))$. 
	
\end{lemma}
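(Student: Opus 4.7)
The plan is to build a concrete configuration $(\pebbles,\head)$ directly out of the abstract data $(\alpha,b)$, using $\alpha$ as (the graph of) an equivalence relation on the set of dropped pebbles and $b$ as the indicator of which equivalence class sits under the head.

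First I will extract the number of dropped pebbles. By invariant (I5), the set $\{i : \alpha_{i,i}=1\}$ is downward closed in $\{1,\ldots,k\}$, so it equals $\{1,\ldots,n\}$ for a unique $n\in\{0,\ldots,k\}$. I then view $\alpha$ restricted to $\{1,\ldots,n\}$ as a binary relation; by (I1) together with the reflexivity coming from $\alpha_{i,i}=1$ on this range, this relation is an equivalence. Let $E_{1},\ldots,E_{r}$ be its classes. Define $H=\{i : b_{i}=1\}$; by (I2) we have $H\subseteq\{1,\ldots,n\}$, by (I3) any two elements of $H$ lie in the same $\alpha$-class, and by (I4) whenever one element of an $\alpha$-class belongs to $H$ the whole class does. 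Hence $H$ is either empty or exactly one of the classes, say $E_{1}$.

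Next I realize these classes geometrically. Pick any word $u\in\Sigma^{*}$ of length at least $r$ (so that $\epos{u}$ contains at least $r+1$ positions), choose pairwise distinct positions $p_{1},\ldots,p_{r}\in\epos{u}$, one per class, and set $\head=p_{1}$ if $H\neq\emptyset$ or $\head$ any position distinct from $p_{1},\ldots,p_{r}$ otherwise (padding $u$ by one more letter if needed). For each $i\in\{1,\ldots,n\}$ put $\pebbles_{i}=p_{s}$ where $s$ is the index of the class containing $i$; this yields $\pebbles\in\epos{u}^{\leq k}$ with $|\pebbles|=n$.

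Finally I check $(\Phi_{1}(\pebbles),\Phi_{2}(\head,\pebbles))=(\alpha,b)$ by unwinding definitions: $\Phi_{1}(\pebbles)_{i,j}=1$ iff $i,j\leq n$ and $\pebbles_{i}=\pebbles_{j}$, which by construction holds iff $i,j$ lie in the same class, i.e.\ iff $\alpha_{i,j}=1$ (using $\alpha_{i,j}=0$ whenever $i>n$ or $j>n$, which follows from (I4) and the definition of $n$); and $\Phi_{2}(\head,\pebbles)_{i}=1$ iff $i\leq n$ and $\pebbles_{i}=\head=p_{1}$, which holds iff $i\in E_{1}=H$, i.e.\ iff $b_{i}=1$. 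There is no real obstacle here beyond bookkeeping: the only mildly delicate point is the case $H=\emptyset$, where one must make sure $\head$ can be chosen off the pebble positions (easily arranged by enlarging $u$) so that $\Phi_{2}$ returns the zero vector as required.
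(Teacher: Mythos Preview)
Your proof is correct and follows essentially the same approach as the paper: both extract the number of dropped pebbles from the diagonal of $\alpha$, use the equivalence-class structure induced by $\alpha$ to place pebbles at distinct positions (one per class), and set the head either on the class singled out by $b$ or off all pebble positions when $b=\overline{0}$. Your version is, if anything, more explicit about which invariants (I1)--(I5) justify each step.
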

\begin{proof}
	Given $(\alpha, b)$, first we construct $|\pebbles|=\max\{i \mid \alpha_{i,i}=1\}$.  
	Recall that $\alpha$ is an equivalence relation on $\{1,2, \dots, |\pebbles|\}$. Let $[i]$, $1 \leq i \leq |\pebbles|$ denote the equivalence class of $i$ under $\alpha$. $[i]=\{j \mid \alpha_{i,j}=1\}$.

	For any word $\#u$ such that $|u|>|\pebbles|$, 
	define $f:\{1,2,\dots, |\pebbles|\} \rightarrow \{0,1,\dots, |u|\}$. $f$ is a function 
	which assigns members of $[i]$  to a unique position on $\#u$. For $i, j \in  \{1,2, \dots, |\pebbles|\}$, $i \sim j$ iff $f(i)=f(j)$. Define 
	$\pebbles_j=f(j)$, for $1 \leq j \leq |\pebbles|$. For all $j \in [i]$, this gives 
	$\pebbles_j=\pebbles_i$ since $f(j)=f(i)$.  With this definition 
	of $\pebbles$,  we show that $\Phi_1(\pebbles)=\alpha$. Let the $k \times k$ Boolean matrix defined by $\Phi_1$ be called $\beta$. We will show that $\alpha=\beta$.

	By definition,  
	$\Phi_1$ assigns $\beta_{i,j}=1$  iff $\pebbles_i=\pebbles_j$ for $1 \leq i, j \leq |\pebbles|$.  By the construction of $\pebbles$, we have $\pebbles_i=\pebbles_j$ iff $f(i)=f(j)$ iff $i \sim j$. Since $\alpha_{i,j}=1$ iff $i \sim j$, we obtain $\alpha=\beta$. Hence, $\alpha=\Phi_1(\pebbles)$. 
	
	Next, we define $\head$ using $\alpha,b$, and construct $b'=\Phi_2(\head, \pebbles)$, and 
	show that $b'=b$.

	First, let us consider the case when $b=\overline{0}$. In this case, 
	define $\head=v$ where $v \in \{0,1,\dots,|u|\}$ is such that $v \neq f(i)$ for any $i$.
	Otherwise, define $\head=f(i)$ if $b_i=1$. Notice that, by definition 
	of $f$, $f(i)=f(j)$ for all $j \in [i]$; this gives $b_j=b_i=1$ for all $j \in [i]$.  
	Finally, notice that for all $j \in [i]$,  $b_j=b_i$ since $(\alpha,b)$ satisfies basic consistency.

	Now by definition of $\Phi_2$, $b'_i=1$ if $i \leq |\pebbles|$ and $\head=\pebbles_i=f(i)$, and  $b'_i=0$ when $\head=v \notin f(\{1,2,\dots,|\pebbles|\}$ ($\head \neq \pebbles_i$ for all $i$).   We now show that $b'=b$. 
	By construction, $b'_i=1$ when $\head=f(i)$, and this is true when $b_i=1$.
	In this case we also have $b'_j=1$ for all $j \in [i]$ : then we know also that 
	$b_j=1$, since $b_i=b_j$ for all $j \in [i]$.  Finally, $b'_i=0$ when $\head \neq f(i)$ for any $i$. That is, $\head \neq \pebbles_i$ for any $1 \leq i \leq |\pebbles|$; in this case, 
	$b_i=0$ as well. Thus, $b'=b=\Phi_2(\head, \pebbles)$.  
\end{proof}

\begin{figure}
	\begin{center}
		\includegraphics[width=5cm]{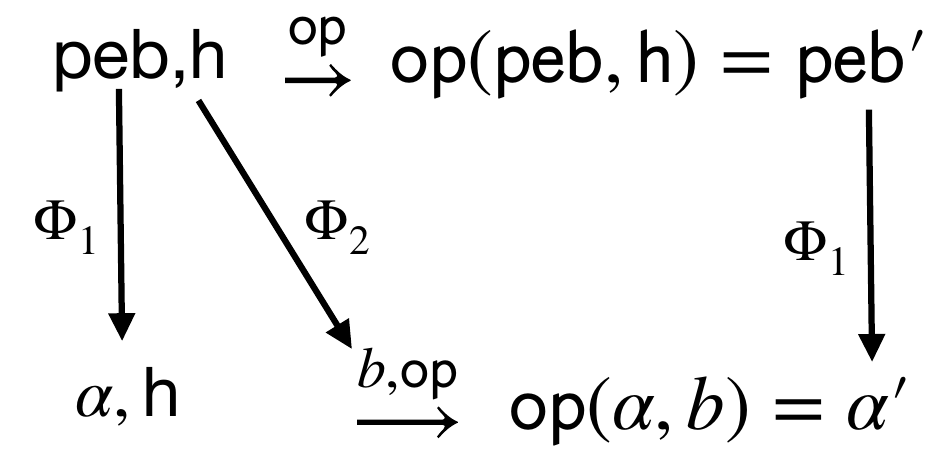}
	\end{center}
	\caption{Figure showing commutation of $\Phi$ and $\op$}
	\label{figcommu}
\end{figure}

\begin{lemma}\label{lem:run}
	Let $(q, \head, \pebbles) \xrightarrow{t=(q, a, \varphi, \op,q')} (q', \head', \pebbles')$ in  $\auto$ be a transition. Then  we have $((q,\Phi_1(\pebbles)), \head, \pebbles) \xrightarrow{t'=((q,\Phi_1(\pebbles)), a,\Phi_2(\head,\pebbles), \op, (q', \Phi_1(\pebbles')))} ((q',\Phi_1(\pebbles')), \head', \pebbles')$ in  $\basic$. 
	Conversely, for any transition $((q,\Phi_1(\pebbles)), \head, \pebbles) \xrightarrow{t'=((q,\Phi_1(\pebbles)), a,\Phi_2(\head,\pebbles), \op, (q', \Phi_1(\pebbles')))} ((q',\Phi_1(\pebbles')), \head', \pebbles')$ in  $\basic$, we can find a transition 
	$(q, \head, \pebbles) \xrightarrow{t=(q, a, \varphi, \op,q')} (q', \head', \pebbles')$ in  $\auto$. 
	
\end{lemma}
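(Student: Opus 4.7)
The plan is to reduce \cref{lem:run} almost entirely to the technical lemmas already proved: \cref{lem:basic} (which says $(\Phi_1(\pebbles),\Phi_2(\head,\pebbles))$ satisfies basic consistency) and \cref{lem:trans} (which says that $\varphi$ and $\op$ are satisfied on both sides simultaneously, and that $\Phi_1$ commutes with $\op$). The statement is essentially a bookkeeping step: it asserts that the transition defined in $\basic$ from a state of the form $(q,\Phi_1(\pebbles))$ faithfully mirrors the transition of $\auto$ from $(q,\head,\pebbles)$.

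For the forward direction, I would start from the hypothesis that $(q,\head,\pebbles)\xrightarrow{t}(q',\head',\pebbles')$ is a transition in $\auto$ with $t=(q,a,\varphi,\op,q')$. By definition of being enabled, we have $a=(\#u)_{\head}$, $(\pebbles,\head)\models\varphi$, and $(\pebbles,\head)\models\op$. Setting $\alpha=\Phi_1(\pebbles)$ and $b=\Phi_2(\head,\pebbles)$, \cref{lem:basic} gives that $(\alpha,b)$ is basically consistent, and \cref{lem:trans} gives $(\alpha,b)\models\varphi$ and $(\alpha,b)\models\op$. By the construction of $\delta'$ in \cref{sec:constr}, the tuple $t'=((q,\alpha),a,b,\op,(q',\alpha'))$ with $\alpha'=\op(\alpha,b)$ is a transition of $\basic$. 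The head movement is governed by $q'$ and is identical in both models (and preserved by our polarity assignment on $Q'$). Finally, the target pebble stack $\pebbles'=\op(\pebbles,\head)$ satisfies $\Phi_1(\pebbles')=\Phi_1(\op(\pebbles,\head))=\op(\Phi_1(\pebbles),\Phi_2(\head,\pebbles))=\alpha'$ by the commutation part of \cref{lem:trans}. Thus the target configuration of $\basic$ is exactly $((q',\Phi_1(\pebbles')),\head',\pebbles')$, as required.

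For the converse direction, assume $((q,\Phi_1(\pebbles)),\head,\pebbles)\xrightarrow{t'}((q',\Phi_1(\pebbles')),\head',\pebbles')$ in $\basic$, with $t'=((q,\alpha),a,b,\op,(q',\alpha'))$ where $\alpha=\Phi_1(\pebbles)$ and $b=\Phi_2(\head,\pebbles)$. By construction of $\delta'$, there is a transition $t=(q,a,\varphi,\op,q')$ in $\auto$ such that $(\alpha,b)\models\varphi$ and $(\alpha,b)\models\op$, and such that $\alpha'=\op(\alpha,b)$. Applying \cref{lem:trans} in the reverse direction yields $(\pebbles,\head)\models\varphi$ and $(\pebbles,\head)\models\op$, so $t$ is enabled at $(q,\head,\pebbles)$ in $\auto$. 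The resulting configuration has pebble stack $\op(\pebbles,\head)$, and applying \cref{lem:trans} once more we obtain $\Phi_1(\op(\pebbles,\head))=\op(\alpha,b)=\alpha'=\Phi_1(\pebbles')$, which combined with the matching head movement forces the target configurations to coincide (up to the choice of witness stack, which is fixed here because we started from the actual $\pebbles$).

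The only subtle point worth stressing is that in the converse direction one is tempted to need \cref{lem:conf} to exhibit a concrete $(\head,\pebbles)$ from $(\alpha,b)$; but because the lemma statement already gives us the source configuration $(\head,\pebbles)$ of $\basic$ and only asserts the existence of the corresponding $\auto$-transition at that very $(\head,\pebbles)$, \cref{lem:conf} is not needed here — \cref{lem:basic,lem:trans} suffice. \cref{lem:conf} would rather be invoked in the global simulation argument (matching accepting runs) to ensure that every reachable $(\alpha,b)$ in $\basic$ actually arises from some real pebble configuration, but for the present local statement the two earlier lemmas already close the argument in both directions.
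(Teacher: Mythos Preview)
Your proposal is correct and follows essentially the same route as the paper: both directions are reduced to \cref{lem:basic} and \cref{lem:trans} together with the definition of $\delta'$, and the target matrix is identified via the commutation $\Phi_1(\op(\pebbles,\head))=\op(\Phi_1(\pebbles),\Phi_2(\head,\pebbles))$. Your closing remark that \cref{lem:conf} is unnecessary here is also accurate and matches the paper, which likewise does not invoke it in this lemma.
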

\begin{proof}
	
	Assume that $C_1=(q, \head, \pebbles) \xrightarrow{t=(q, a, \varphi, \op,q')} C_2=(q', \head', \pebbles')$ in  $\auto$.  To enable a corresponding transition in $\basic$, we go through the following steps.

	\begin{enumerate}
		\item \emph{Choose a $(\alpha,b)$}: First of all, by the construction of $\basic$, we know that there is a transition $t'=((q,\alpha), a,b, \op, (q',\alpha'))$ corresponding to $t$ 
		so that $(\alpha,b)$ respects basic consistency and transition consistency. 
		\cref{lem:basic,lem:trans} tell us that the pair $(\Phi_1(\pebbles), \Phi_2(\head,\pebbles))$ satisfies basic and transition consistency, given  $C_1$ and $t$.  
		Thus, this pair is a good candidate for $(\alpha, b)$ in $t'$.

		\item \emph{Show that the chosen $(\alpha,b)$ enables the transition.} We show that from configuration $C'_1=((q,\Phi_1(\pebbles), \head, \pebbles)$, $t'$ is  
		enabled in $\basic$.  
		\cref{lem:trans} tells us that $(\Phi_1(\pebbles), \Phi_2(\head, \pebbles)) \models \varphi, \op$. The resultant configuration has the new head as $\head'$ and the new pebble stack as $\pebbles'$. It remains to show that the resultant configuration in $\basic$ is $C'_2$; we need to fix $\alpha'$ in $t'$ and $C'_2$.

		\item \emph{Choose the target state.}
		\cref{lem:trans} also tells us the resultant $\alpha'$ obtained after executing $\varphi, \op$ in $t$.  $\alpha'=\op(\alpha,b)=\op((\Phi_1(\pebbles), \Phi_2(\head, \pebbles))=\Phi_1(\op(\pebbles,\head))=\Phi_1(\pebbles')$.

	\end{enumerate}

	Thus, we have obtained $\alpha=\Phi_1(\pebbles)$, $b=\Phi_2(\pebbles, \head), \alpha'=\Phi_1(\pebbles')$ so that $t'=((q,  \Phi_1(\pebbles)), (a, \Phi_2(\pebbles, \head)), \op, (q', \Phi_1(\pebbles'))$ is enabled from $C'_1$. The resultant configuration 
	is indeed $C'_2=((q', \Phi_1(\pebbles')), \pebbles', \head')$ in 
	$\basic$.

	Now for the reverse direction.
	Let 
	\[((q,\Phi_1(\pebbles)), \head, \pebbles) \xrightarrow{t'=((q,\Phi_1(\pebbles)), a,\Phi_2(\head,\pebbles), \op, (q', \Phi_1(\pebbles')))} ((q',\Phi_1(\pebbles')), \head', \pebbles')\]
	in  $\basic$. 
	Let $(\alpha, b)=(\Phi_1(\pebbles), \Phi_2(\pebbles, \head))$. 
	\begin{enumerate}
		\item By construction of $\basic$ and \cref{lem:basic} we know that 
		$(\alpha, b)$ satisfies basic consistency
		\item By construction of $\basic$ and  \cref{lem:trans} we know that $\alpha, b \models \op$
		\item By construction of $\basic$, there is a transition $t=(q,a,\varphi, \op, q')$ in $\auto$ such that $\alpha, b \models \varphi$. By \cref{lem:trans}, we know that 
		this implies $\head, \pebbles \models \varphi$. 
	\end{enumerate}
	Hence, indeed $t$ can be fired from $(q, \pebbles, \head)$ in $\auto$ obtaining 
	$(q', \pebbles', \head')$. 
	This establishes the reverse direction. 
\end{proof}

\begin{lemma}\label{lem:det}
	$\basic$ is deterministic if $\auto$ is deterministic.	

\end{lemma}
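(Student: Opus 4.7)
The plan is to reduce the determinism of $\basic$ to that of $\auto$ by passing through the configuration correspondence given by $\Phi = (\Phi_1,\Phi_2)$, using the consistency lemmas (\cref{lem:basic,lem:trans,lem:conf}) established earlier. Suppose two transitions $t'_1 = ((q,\alpha),a,b_1,\op_1,(r_1,\alpha'_1))$ and $t'_2 = ((q,\alpha),a,b_2,\op_2,(r_2,\alpha'_2))$ of $\basic$ are simultaneously enabled at some configuration $C' = ((q,\alpha),\head,\pebbles)$.

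First, I would observe that the bit vector guard in $\basic$ forces the vector decorating any enabled transition to match the actual pebble configuration at the head. Hence $b_1 = \Phi_2(\head,\pebbles) = b_2$; denote this common value by $b$. By construction of $\basic$, each $t'_i$ originates from a transition $t_i = (q,a,\varphi_i,\op_i,r_i)$ of $\auto$ such that the pair $(\alpha,b)$ satisfies both basic consistency and transition consistency with $(\varphi_i,\op_i)$.

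Next, I would apply \cref{lem:conf} to produce a witness configuration $(\head'',\pebbles'')$ with $\alpha = \Phi_1(\pebbles'')$ and $b = \Phi_2(\head'',\pebbles'')$; this is possible precisely because $(\alpha,b)$ satisfies basic consistency. Invoking \cref{lem:trans} for each $i$, the relation $(\alpha,b) \models \varphi_i,\op_i$ transports to $(\pebbles'',\head'') \models \varphi_i,\op_i$, so both $t_1$ and $t_2$ are enabled at the $\auto$-configuration $(q,\head'',\pebbles'')$. Determinism of $\auto$ then forces $t_1 = t_2$; in particular $\op_1 = \op_2$, which determines the target matrix $\alpha'_i = \op_i(\alpha,b)$ uniquely, yielding $t'_1 = t'_2$.

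The main subtlety to watch out for is that determinism is required at every configuration, not only at reachable ones, so the matrix $\alpha$ in the source state of $C'$ need not coincide with $\Phi_1(\pebbles)$. This is precisely what \cref{lem:conf} circumvents: even when $\alpha \neq \Phi_1(\pebbles)$, the basic consistency of $(\alpha,b)$ guarantees the existence of an auxiliary $\auto$-configuration at which the two underlying $\auto$-transitions are simultaneously enabled, so the determinism of $\auto$ can still be invoked.
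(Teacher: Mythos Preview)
Your proof is correct and follows essentially the same approach as the paper: use \cref{lem:conf} to obtain a witness $(\head'',\pebbles'')$ realising $(\alpha,b)$, transport transition consistency to $\auto$ via \cref{lem:trans}, and invoke the determinism of $\auto$. You are in fact more careful than the paper on two points: you argue explicitly that the full bit-vector guards force $b_1=b_2=\Phi_2(\head,\pebbles)$ (the paper silently writes a single $b$), and you flag the subtlety that $\alpha$ need not equal $\Phi_1(\pebbles)$ at an arbitrary configuration, which is exactly what necessitates the detour through \cref{lem:conf}.
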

\begin{proof}
	Assume $\auto$ is deterministic. 
	
	Consider a transition $t'=((q,\alpha), a,b, \op, (q',\alpha'))$ in $\basic$ enabled from a configuration $C'=((q,\alpha),\pebbles, \head)$.   
	We claim there is a unique transition $t$ in $\auto$ corresponding to $t'$ enabled 
	from a configuration $C=(q,\pebbles, \head)$. 
	Assume on the contrary, that we have 
	$t_1=(q,a,\varphi_1,\op_1, q_1)$ and $t_2=(q,a,\varphi_2, \op_2,q_2)$ enabled from $C$.  By the construction  of $t'$ in $\basic$ from $t_1, t_2$, we know that $\op_1=\op_2=\op$ and 
	$q_1=q_2=q'$. Since $t_1, t_2$ are both enabled from $C$ and $\auto$ is deterministic, we have 
	$\varphi_1=\varphi_2$. Hence, $t_1=t_2$ is a unique transition in $\auto$ corresponding 
	to $t'$.

	Now we show that $\basic$ is deterministic. 	
	Let two transitions 
	$t'_i=((q, \alpha), a,b, \op_i, (q_i',\alpha_i'))$, $i=1,2$ in $\basic$ be enabled from 
	$C'=((q,\alpha),\pebbles, \head)$. 
	From the above, we know that there is a unique $t_i$ in $\auto$ corresponding to each $t'_i$.
	Let $t_i=(q,a,\varphi_i, \op_i, q'_i)$. 
	
	From \cref{lem:conf}, we know that there exists $\pebbles, \head$ such that $\alpha=\Phi_1(\pebbles), b=\Phi_2(\pebbles,\head)$. 
	By \cref{lem:trans} we know that $\alpha,b \models \varphi_i \Rightarrow (\head, \pebbles)\models \varphi_i$, and $\alpha,b \models \op_i \Rightarrow (\head, \pebbles)\models \op_i$. If $t_1,t_2$ are enabled from $C=(q, \pebbles, \head)$, then by determinism of $\auto$, 
	we have $t_1=t_2$. Hence, $\op_1=\op_2=\op$. Also, by \cref{lem:trans}, $\alpha'_i=\op_i(\alpha,b)=\op(\alpha,b)$ for $i=1,2$.  
	This implies $t'_1=t'_2$. Thus, $\basic$ is deterministic. 
\end{proof}

\begin{lemma}\label{lem:rev}
	$\basic$ is reverse deterministic if $\auto$ is reverse deterministic.  	

\end{lemma}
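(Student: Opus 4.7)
The plan is to mirror the proof of \cref{lem:det} (determinism of $\basic$), but run it from the target side of the transitions. I fix two transitions $t'_1, t'_2$ of $\basic$ ending in a common state $(q', \alpha')$, reading the same letter $a$, and both reverse-enabled at some configuration $C'_2 = ((q', \alpha'), \pebbles', \head')$. Writing $t'_i = ((q_i, \alpha_i), a, b_i, \op_i, (q', \alpha'))$ and letting $t_i = (q_i, a, \varphi_i, \op_i, q')$ be the unique transition of $\auto$ from which $t'_i$ was built, reverse-enabledness pins the predecessor of $t'_i$ at $C'_2$ to $((q_i, \alpha_i), \pebbles_{1,i}, \head)$ with $\head = \head' - q'$, $\pebbles_{1,i} = \op_i^{-1}(\pebbles', \head)$, and bitvector forced to $b_i = \Phi_2(\pebbles_{1,i}, \head)$.

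My overall strategy is to exhibit a single configuration $D = (q', \widetilde{\pebbles}, \widetilde{\head})$ of $\auto$ at which both $t_1$ and $t_2$ are simultaneously reverse-enabled; reverse-determinism of $\auto$ then yields $t_1 = t_2$, and hence $t'_1 = t'_2$. In the easy case $\op_1 = \op_2 = \op$, the formulas above immediately give $\pebbles_{1,1} = \pebbles_{1,2}$ and $b_1 = b_2$, and the matrix update $\alpha' = \op(\alpha_i, b_i)$ used in the construction of $\basic$ is invertible from the knowledge of $\op$ and $b_i$ (nop leaves $\alpha$ unchanged; $\drop{k}$ merely adjoins a row/column whose entries are read off from $b$; $\lift{k}$ deletes the top row/column), so $\alpha_1 = \alpha_2$. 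A single application of \cref{lem:conf} to the common pair $(\alpha_1, b_1)$ yields a concrete witness $(\widetilde{\pebbles}_*, \widetilde{\head}_*)$; by \cref{lem:trans}, both $t_i$ are then enabled at $(q_i, \widetilde{\pebbles}_*, \widetilde{\head}_*)$ and fire to the common target $D$.

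The hard part will be the case $\op_1 \neq \op_2$, which I will argue can never arise. A size inspection of $\pebbles_{1,i} = \op_i^{-1}(\pebbles', \head)$ restricts the possibilities to $\nop$ paired with $\drop{k}$ or $\lift{k}$, or the pair $\{\drop{k}, \lift{k+1}\}$. For each sub-case, starting from a witness $(\widetilde{\pebbles}_*, \widetilde{\head}_*)$ realising $(\alpha_1, b_1)$ via \cref{lem:conf}, I will check --- using the basic consistency invariants (I1)--(I5) together with the joint equations $\op_1(\alpha_1, b_1) = \alpha' = \op_2(\alpha_2, b_2)$ and $\op_1(\pebbles_{1,1}, \head) = \pebbles' = \op_2(\pebbles_{1,2}, \head)$ --- that the $\op_2^{-1}$-image of $\op_1(\widetilde{\pebbles}_*, \widetilde{\head}_*)$ automatically realises $(\alpha_2, b_2)$. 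This makes $t_1, t_2$ simultaneously reverse-enabled at a common $D$ in $\auto$, and reverse-determinism of $\auto$ then forces $t_1 = t_2$, in particular $\op_1 = \op_2$, contradicting the case assumption. Hence $t'_1 = t'_2$ in every situation, proving that $\basic$ is reverse-deterministic.
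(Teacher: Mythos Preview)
Your plan is sound and, in a sense, more careful than the paper's own argument. The paper works directly with the concrete predecessor configurations $C'_i=((q_i,\alpha_i),\pebbles^i,\head)$ obtained by rewinding $t'_i$ from the given $C'$, asserts that $\alpha_i=\Phi_1(\pebbles^i)$ (citing \cref{lem:conf}), and then uses \cref{lem:run} to transport both steps back to $\auto$, where a single appeal to reverse-determinism finishes. This is short, but the key equality $\alpha_i=\Phi_1(\pebbles^i)$ is not what \cref{lem:conf} delivers, and it can fail for arbitrary (not necessarily reachable) configurations of~$\basic$: nothing ties the matrix component of the source state to the actual pebble stack $\pebbles^i=\reverse{\op_i}(\pebbles',\head)$. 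Your route avoids this: you use \cref{lem:conf} as stated to manufacture a fresh witness $(\widetilde{\pebbles}_*,\widetilde{\head}_*)$ realising the abstract pair $(\alpha_1,b_1)$, and then argue that after firing $\op_1$ the same target simultaneously reverse-enables $t_2$. The cost is the case split on $\op_1$ versus $\op_2$; each sub-case goes through by the explicit description of $\op(\alpha,b)$ together with invariants (I2)--(I5), exactly as you outline.

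Two small points to make explicit when you write it out. First, for $\lift{n}$ the ``invertibility'' of $\alpha\mapsto\op(\alpha,b)$ is not merely that a row/column is deleted: you recover that row/column from $b$ using $b_n=1$ and (I3)--(I4), which force $\alpha_{n,j}=b_j$ for $j\le n$. Second, \cref{lem:conf} lets you choose the underlying word, so pick it so that position $\widetilde{\head}_*$ carries the letter $a$; otherwise the ``enabled'' predicate fails on the input-letter clause before you ever reach reverse-determinism of $\auto$.
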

\begin{proof}
	Assume $\auto$ is reverse deterministic.  
	Let $t'_i=((q_i, \alpha_i), a,b, \op_i, (q', \alpha'))$ for $i=1,2$ be two transitions 
	in $\basic$. If $t'_1, t'_2$ are both simultaneously reverse enabled, we want to show that 
	$t'_1=t'_2$. 
	Assume that we have configurations $C'_1, C'_2, C'$ such that 
	$C'_1=((q_1, \alpha_1),  \pebbles^1, \head^1) \xrightarrow{t'_1} C'=((q', \alpha'), \pebbles', \head')$, and $C'_2=((q_2, \alpha_2),  \pebbles^2, \head^2) \xrightarrow{t'_2} C'=((q', \alpha'), \pebbles', \head')$. 
	
	\begin{enumerate}
		\item By \cref{lem:conf}, we know that $\alpha_1=\Phi_1(\pebbles^1), \alpha_2=\Phi_1(\pebbles^2)$. 
		Also, $\head^1=\head^2=\head'-q'=\head$
		\item By \cref{lem:run}, corresponding to $t'_1, t'_2$, for $i=1,2$ we 
		have the transitions 
		
		$(q_i, \pebbles^i, \head) \xrightarrow{t_i=(q_i, a, \varphi_i, \op_i, q')} (q', \pebbles', \head')$ in $\auto$. 
		Since $\auto$ is reverse deterministic, we know that 
		$t_1=t_2$. $t_1=t_2$ implies that $\op_1=\op_2$, $\varphi_1=\varphi_2$, $q_1=q_2$. 
		\item From \cref{sec:reversing}, we know that $\pebbles^i=\reverse{\op}(\pebbles', \head)$ for $i=1,2$. That is, $\pebbles^1=\pebbles^2$.
		\item Hence, $\alpha_1=\Phi_1(\pebbles^1)=\Phi_1(\pebbles^2)=\alpha_2$.
		\item Thus we have $t'_1=t'_2$.
	\end{enumerate}
	Hence, $\basic$ is reverse deterministic. 
\end{proof}

\begin{theorem}
	The basic $k$-pebble transducer $\basic$ constructed above is deterministic and reversible if  $\auto$ is. Further, $\sem{\basic}=\sem{\auto}$.
\end{theorem}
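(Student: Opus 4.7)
The plan is to obtain this theorem as a direct assembly of Lemmas \ref{lem:basic}--\ref{lem:rev}, since the heavy technical work has already been done. Determinism and reverse-determinism of $\basic$ (hence reversibility, when both hold) are immediate from Lemmas \ref{lem:det} and \ref{lem:rev} respectively. So the main remaining task is establishing $\sem{\basic} = \sem{\auto}$.

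For the semantic equivalence, I would lift the configuration map $\Phi$ into a map between runs. Observe first that $\Phi$ sends the initial configuration $(q_i,\varepsilon,0)$ of $\auto$ to $((q_i,\overline{0}),\varepsilon,0)$, which is exactly the initial configuration $q'_i$ of $\basic$ since $\Phi_1(\varepsilon)=\overline{0}$; symmetrically for the final configuration. Then by induction on run length, applying Lemma \ref{lem:run} step by step, every accepting run $C_0 \xrightarrow{t_0} C_1 \cdots \xrightarrow{t_N} C_{N+1}$ of $\auto$ on input $u$ is matched by the accepting run $\Phi(C_0) \xrightarrow{t'_0} \Phi(C_1) \cdots \xrightarrow{t'_N} \Phi(C_{N+1})$ of $\basic$ on $u$, where each $t'_\ell$ is the transition of $\basic$ produced from $t_\ell$ via the construction. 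Since $\mu'(t'_\ell) = \mu(t_\ell)$ by definition, the two runs produce the same output, giving $\sem{\auto} \subseteq \sem{\basic}$.

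For the converse inclusion, I would argue that every reachable configuration of $\basic$ from the initial configuration is in the image of $\Phi$. This is a straightforward induction: the initial configuration $((q_i,\overline{0}),\varepsilon,0)$ is $\Phi$ of the initial configuration of $\auto$; and if a reachable configuration $C'$ has the form $\Phi(C)$ for some configuration $C$ of $\auto$, then by the reverse direction of Lemma \ref{lem:run}, any successor $C'_1$ of $C'$ in $\basic$ is of the form $\Phi(C_1)$ where $C \to C_1$ is a corresponding transition in $\auto$. Hence every accepting run of $\basic$ lifts to an accepting run of $\auto$ with the same output, yielding $\sem{\basic} \subseteq \sem{\auto}$.

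The main subtlety I expect to check carefully is this lifting argument for the converse inclusion: I must ensure that no ``spurious'' reachable configurations of $\basic$ arise whose matrix component is inconsistent with any pebble stack. This is exactly why Lemma \ref{lem:conf} is needed — it guarantees that any $(\alpha,b)$ satisfying basic consistency is of the form $(\Phi_1(\pebbles),\Phi_2(\head,\pebbles))$ — together with the fact that $\Phi_1(\pebbles)$ is functionally determined by $\pebbles$, so the matrix component of a reachable configuration is uniquely fixed by its pebble stack. With these ingredients in hand, the theorem follows without further combinatorial work.
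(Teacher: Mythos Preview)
Your proposal is correct and follows essentially the same approach as the paper: the paper's own proof is a terse two-sentence appeal to Lemmas~\ref{lem:run} and~\ref{lem:det} (with \ref{lem:rev} implicit for reversibility), and you have simply spelled out the run-lifting induction in more detail. One minor remark: the inductive invariant that reachable configurations have matrix component $\Phi_1(\pebbles)$ already follows from the commutation $\op(\Phi_1(\pebbles),\Phi_2(\head,\pebbles))=\Phi_1(\op(\pebbles,\head))$ in Lemma~\ref{lem:trans}, so invoking Lemma~\ref{lem:conf} here is not strictly necessary (it is used inside the proofs of Lemmas~\ref{lem:det} and~\ref{lem:rev}, not for the run correspondence).
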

\begin{proof}
	\cref{lem:run,lem:det} show that every run in $\auto$ has a corresponding 
	unique run in $\basic$ and conversely.  By construction of $\basic$, the outputs of corresponding transitions 
	in $\basic$ are the same as those of $\auto$. 
\end{proof}

\end{document}